

\documentclass[preprint, 12pt]{elsarticle}


\usepackage{graphicx}

\usepackage{amsmath}

\newtheorem{theorem}{Theorem}
\newtheorem{corollary}{Corollary}
\newtheorem{proposition}{Proposition}
\newdefinition{definition}{Definition}
\newtheorem{lemma}{Lemma}

\newdefinition{problem}{Problem}

\newproof{proof}{Proof}

\usepackage{thmtools}

\PassOptionsToPackage{hyphens}{url}
\usepackage[hidelinks,bookmarks=false]{hyperref}

\usepackage{amsfonts}
\usepackage{mathtools}
\usepackage{verbatim}
\usepackage{paralist}
\usepackage{xifthen}
\usepackage{mathdots}
\usepackage{nicefrac}

\usepackage[shortlabels]{enumitem}

\usepackage[ruled, vlined, linesnumbered, commentsnumbered]{algorithm2e}
\SetAlgorithmName{Algorithm}{Algorithm}{List of Algorithms}

\usepackage{tikz}
\usepackage{tikz-qtree}
\usetikzlibrary{shapes,shapes.geometric,arrows,fit,matrix,positioning,decorations.pathreplacing,calc,fpu,decorations.pathmorphing,decorations.markings, decorations}

\tikzset
{
    treenode/.style = {circle, draw=black, align=center, minimum size=0.5cm},
    subtree/.style  = {isosceles triangle, draw=black, align=center, minimum height=0.5cm, minimum width=1cm, shape border rotate=90, anchor=north},
    process/.style={rectangle, minimum width=2cm, minimum height=1cm, align=center, text width=2cm, draw},
    connector/.style={circle, minimum size=1cm, align=center, text width=0.5cm, draw},
    arrow/.style={thick, ->, >=stealth}
}
\usepackage{circuitikz}

\usepackage{scalerel,stackengine}
\stackMath
\newcommand\reallywidehat[1]{%
\savestack{\tmpbox}{\stretchto{%
  \scaleto{%
    \scalerel*[\widthof{\ensuremath{#1}}]{\kern-.6pt\bigwedge\kern-.6pt}%
    {\rule[-\textheight/2]{1ex}{\textheight}}
  }{\textheight}%
}{0.5ex}}%
\stackon[1pt]{#1}{\tmpbox}%
}

\newcommand{\tworowcell}[2][c]{%
	\begin{tabular}[#1]{@{}c@{}}#2\end{tabular}}

\usepackage{subcaption}
\captionsetup{compatibility=false}

\usepackage{siunitx}
\sisetup{detect-family, detect-inline-family=math,}

\usepackage{calc}

\allowdisplaybreaks[1]
	


\newcommand{\ints}{\mathbb{Z}}

\newcommand{\repfact}{\rho}
\newcommand{\V}{V}                   
\newcommand{\E}{A}                   
\newcommand{\F}{E}                   
\newcommand{\C}{L}                   
\renewcommand{\P}{P}                 
\newcommand{\Q}{Q}                   
\newcommand{\MP}{\mathcal{P}}        
\newcommand{\G}{G}                   
\newcommand{\nChildren}{t}           
\newcommand{\minRepOn}[1]{r^{min}_{#1}}    
\newcommand{\repOn}[1]{r_{#1}}       
\newcommand{\filled}{F}              
\newcommand{\unfilled}{U}            
\newcommand{\R}{R}                   
\renewcommand{\a}{\vec{a}}           
\renewcommand{\b}{\vec{b}}           
\newcommand{\perm}{\pi}              
\newcommand{\n}{n}                   
\newcommand{\m}{m}                   
\newcommand{\lf}{\ell}               
\newcommand{\mpRepFact}{{\hat{\repfact}}} 
\newcommand{\skewfactor}{\delta}     

\newcommand{\f}{f}                        
\newcommand{\vecf}{\vec{f}}
\newcommand{\fnum}[2]{\f(#1, #2)}         
\newcommand{\ff}[1]{\vecf(#1)}
\newcommand{\g}[1]{\vec{g}(#1)}           
\newcommand{\alp}{\vec{\alpha}}           
\newcommand{\alpf}[1]{\alp({#1})}         
\newcommand{\sig}[1]{\vec{\sigma}(#1)}         

\newcommand{\intvecs}{\ints^{\repfact+1}}                        
\newcommand{\logroup}{\groupdef{\G}{\gop}{\logeq}}               
\newcommand{\groupdef}[3]{\langle#1, #2, #3\rangle}              
\newcommand{\graph}[2]{(#1,#2)}                                  
\newcommand{\Csub}[1]{\C_{#1}}                                   
\newcommand{\smallRep}{\floorfrac{\R}{|\unfilled|}}              
\newcommand{\bigRep}{\ceilfrac{(\R + 1)}{|\unfilled|}}           
\newcommand{\nHvy}{\R \bmod |\unfilled|}                         
\newcommand{\mpRT}{mpRT}                                         

\newcommand{\child}[1]{c_{#1}}                    
\newcommand{\children}[2]{\child{#1},...,\child{#2}} 
\newcommand{\subtree}[2]{T_{#1}^{#2}}   

\newcommand{\lleq}{\leq_L}
\newcommand{\lgeq}{\geq_L}
\newcommand{\llt}{<_L}
\newcommand{\lgt}{>_L}
\newcommand{\defined}{\mathrel{\mathop:}=}
\newcommand{\logeq}{\succeq}        
\newcommand{\loleq}{\preceq}        
\newcommand{\gop}{+}                

\DeclarePairedDelimiter{\ceil}{\lceil}{\rceil}
\DeclarePairedDelimiter{\floor}{\lfloor}{\rfloor}
\newcommand{\ceilfrac}[2]{\ceil{\nicefrac{#1}{#2}}}
\newcommand{\floorfrac}[2]{\floor{\nicefrac{#1}{#2}}}




\newcommand{\signature}{signature\xspace}                              
\newcommand{\totalSize}{size\xspace}                                          

\newcommand{\dfill}{definitely-filled\xspace}
\newcommand{\punfill}{possibly-unfilled\xspace}
\newcommand{\Dfill}{Definitely-filled\xspace}

\newcommand{\subplacement}{sub-multi-placement\xspace}
\newcommand{\out}{out\xspace}
\newcommand{\up}{up\xspace}

\newcommand{\girth}{girth}
\newcommand{\skewterm}{skew}
\newcommand{\block}{block} 

\renewcommand{\vec}[1]{\boldsymbol{#1}}


\newcommand{\A}{A}                   



\DeclareMathOperator{\med}{med}
\newcommand{\ef}{\vec{s}}

\newcommand{\punt}[1]{}

\hyphenation{lexico-minimum}

\journal{Theoretical Computer Science}
\begin{document}

\begin{frontmatter}
	
	
	
	\title{Algorithms for Optimal Replica Placement Under Correlated Failure in Hierarchical Failure Domains\tnoteref{prelim}}
	\tnotetext[prelim]{A preliminary version of this work appeared in the Proceedings of the 9th Annual International Conference on Combinatorial Optimization and Applications (COCOA), 2015~\cite{Mills2015}.}
	
	
	\author{K. Alex Mills}
	\ead{k.alex.mills@utdallas.edu}
	
	\author{R. Chandrasekaran}
	\ead{chandra@utdallas.edu}
	
	\author{Neeraj Mittal\fnref{NSF}}
	\ead{neerajm@udtallas.edu}
	\fntext[NSF]{This work was supported, in part, by the National Science Foundation (NSF) under grants numbered CNS-1115733 and CNS-1619197.}
	
	\address{The University of Texas at Dallas, 800 W. Campbell Rd., Richardson, TX 75080}
	
	\begin{abstract}
		In data centers, data replication is the primary method used to ensure availability of customer data. To avoid correlated failure, cloud storage infrastructure providers model hierarchical failure domains using a tree, and avoid placing a large number of data replicas within the same failure domain (i.e. on the same branch of the tree). Typical best practices ensure that replicas are distributed across failure domains, but relatively little is known concerning optimization algorithms for distributing data replicas. Using a hierarchical model, we answer how to distribute replicas across failure domains optimally. We formulate a novel optimization problem for replica placement in data centers. As part of our problem, we formalize and explain a new criterion for optimizing a replica placement. Our overall goal is to choose placements in which \emph{correlated} failures disable as few replicas as possible. We provide two optimization algorithms for dependency models represented by trees. We first present an $O(\n + \repfact \log \repfact)$ time dynamic programming algorithm for placing $\repfact$ replicas of a single file on the leaves (representing servers) of a tree with $\n$ vertices. We next consider the problem of placing replicas of $m$ blocks of data, where each block may have different replication factors. For this problem, we give an exact algorithm which runs in polynomial time when the \emph{skew}, the difference in the number of replicas between the largest and smallest blocks of data, is constant.
	\end{abstract}
	
	\begin{keyword}
		Replica Placement \sep Correlated Failure \sep Combinatorial Optimization \sep Fault-tolerant Storage \sep Data Center Management
	\end{keyword}

\end{frontmatter}

\section{Introduction}
\label{intro}
With the surge towards the cloud, our websites, services and data are increasingly being hosted by third-party data centers. These data centers are often contractually obligated to ensure that data is rarely, if ever unavailable. One cause of unavailability is co-occurring component failures, which can result in outages that can affect millions of websites \cite{Verge2013}, and cost millions of dollars in profits \cite{Pletz2013}. An extensive one-year study of availability in Google's cloud storage infrastructure showed that such failures are relatively harmful. Their study emphasizes that ``correlation among node failure dwarfs all other contributions to unavailability in our production environment" \cite{Ford2010}.

One of the main reasons for correlation among data center failure events is dependencies among system components. Much effort has been made in the literature to produce quality statistical models of this correlation \cite{Bakkaloglu2002,Ford2010,Nath2006,Weatherspoon2002} But in using such models researchers do not make use of the fact that many of these dependencies can be explicitly modeled, since they are known to the system designers. In contrast, we propose to make use of the failure domain models which are already used in commercially available cloud storage products \cite{Parallels,VMWare} to avoid correlated failure.

To achieve high availability, data centers typically store multiple replicas of data to tolerate the potential failure of system components. This gives rise to the \emph{replica placement problem}, an optimization problem which, broadly speaking, involves determining which servers in the system should store copies of a file so as to maximize a given objective (e.g. reliability, communication cost, response time, or access time). While our focus is on replica placements, our approach could also be used to place replicas of other entities that require high-availability, such as virtual machines or mission-critical tasks.

\begin{figure}[tb]
	\centering 
	\subcaptionbox{Scenario I\label{fig:scenarioI}}[0.48\textwidth]{
		\begin{tikzpicture}[scale=0.5, transform shape]
		\tikzstyle{every node}=[minimum size=0.9cm, circle, align=center, font=\LARGE,  draw=black]
		
		\pgfmathsetmacro{\xoffset}{1.5}
		\pgfmathsetmacro{\yoffset}{1.5}
		
		\node (agg1) at (0, 0 ) {};
		\node [on grid, right = 8 of agg1](agg2) {$v$};
		\node [on grid, on grid, below left  = 2 of agg1] (rack1) {};
		\node [on grid, below right = 2 of agg1] (rack2) {};
		\node [on grid, below left  = 2 of agg2] (rack3) {};
		\node [on grid, below right = 2 of agg2] (rack4) {$u$};
		\node [on grid, rectangle, below = 2.5 of rack1] (srv1) {};
		\node [on grid, rectangle, right = 1.25 of srv1] (srv2) {};
		\node [on grid, rectangle, below = 2.5 of rack2] (srv3) {};
		\node [on grid, rectangle, right = 1.25 of srv3] (srv4) {};
		
		\node [on grid, rectangle, below = 2.5 of rack3] (srv6) {};
		\node [on grid, rectangle, left = 1.25 of srv6] (srv5) {};
		\node [on grid, rectangle, below = 2.5 of rack4,fill=black!20] (srv8) {};
		\node [on grid, rectangle, left = 1.25 of srv8, fill=black!20] (srv7) {};
		\node [on grid, rectangle, right = 1.25 of srv8, fill=black!20] (srv9) {};
		

		\node [on grid, draw=none, right = 6 of agg2]   (cap1) {\huge Rows};
		\node [on grid, draw=none, below = 1.44 of cap1] (cap2) {\huge Racks};   
		\node [on grid, draw=none, below = 2.5 of cap2] (cap3) {\huge Servers};

		\draw[->] (agg1) -- (rack1);
		\draw[->] (agg1) -- (rack2);
		\draw[->] (agg2) -- (rack3);
		\draw[->] (agg2) -- (rack4);
		\draw[->] (rack1) -- (srv1);
		\draw[->] (rack1) -- (srv2);
		\draw[->] (rack2) -- (srv3);
		\draw[->] (rack2) -- (srv4);
		
		\draw[->] (rack3) -- (srv5);
		\draw[->] (rack3) -- (srv6);
		\draw[->] (rack4) -- (srv7);
		\draw[->] (rack4) -- (srv8);
		\draw[->] (rack4) -- (srv9);
		\end{tikzpicture}
		
		 \vspace{-6.5mm}}

	\subcaptionbox{Scenario II\label{fig:scenarioII}}[0.48\textwidth]{
	\begin{tikzpicture}[scale=0.5, transform shape]
	\tikzstyle{every node}=[minimum size=0.9cm, circle, align=center, font=\LARGE,  draw=black]
	
	\pgfmathsetmacro{\xoffset}{1.5}
	\pgfmathsetmacro{\yoffset}{1.5}
	
	\node (agg1) at (0, 0 ) {};
	\node [on grid, right = 8 of agg1](agg2) {$v$};
	\node [on grid, on grid, below left  = 2 of agg1] (rack1) {};
	\node [on grid, below right = 2 of agg1] (rack2) {};
	\node [on grid, below left  = 2 of agg2] (rack3) {};
	\node [on grid, below right = 2 of agg2] (rack4) {$u$};
	\node [on grid, rectangle, below = 2.5 of rack1] (srv1) {};
	\node [on grid, rectangle, right = 1.25 of srv1] (srv2) {};
	\node [on grid, rectangle, below = 2.5 of rack2] (srv3) {};
	\node [on grid, rectangle, right = 1.25 of srv3,fill=black!20] (srv4) {};
	
	\node [on grid, rectangle, below = 2.5 of rack3,fill=black!20] (srv6) {};
	\node [on grid, rectangle, left = 1.25 of srv6] (srv5) {};
	\node [on grid, rectangle, below = 2.5 of rack4] (srv8) {};
	\node [on grid, rectangle, left = 1.25 of srv8,fill=black!20] (srv7) {};
	\node [on grid, rectangle, right = 1.25 of srv8] (srv9) {};
	
	\node [on grid, draw=none, right = 6 of agg2]   (cap1) {\huge Rows};
	\node [on grid, draw=none, below = 1.44 of cap1] (cap2) {\huge Racks};   
	\node [on grid, draw=none, below = 2.5 of cap2] (cap3) {\huge Servers};
	
	\draw[->] (agg1) -- (rack1);
	\draw[->] (agg1) -- (rack2);
	\draw[->] (agg2) -- (rack3);
	\draw[->] (agg2) -- (rack4);
	\draw[->] (rack1) -- (srv1);
	\draw[->] (rack1) -- (srv2);
	\draw[->] (rack2) -- (srv3);
	\draw[->] (rack2) -- (srv4);
	
	\draw[->] (rack3) -- (srv5);
	\draw[->] (rack3) -- (srv6);
	\draw[->] (rack4) -- (srv7);
	\draw[->] (rack4) -- (srv8);
	\draw[->] (rack4) -- (srv9);

	\end{tikzpicture}
}
	\caption{Two scenarios represented by directed trees. Boxes represent placement candidates. Greyed boxes are candidates in the current placement.}
	\label{fig:scenarios}
\end{figure}

In this work, we present a new optimization objective for avoiding correlated failure, and novel algorithms to optimize this objective. See \autoref{fig:scenarios} for an example model, in which three identical replicas of the same block of data are distributed on servers in a data center. As can be seen in \autoref{fig:scenarioI}, a failure in the power supply unit (PSU) on a single rack could result in a situation where every replica of a data block is completely unavailable, whereas in \autoref{fig:scenarioII}, three PSU failures would need to occur in order to achieve the same result. Best practices avoid Scenario I by ensuring that each replica is housed on a separate rack \cite{Weil2006}. However, this simple heuristic can be suboptimal in some cases. For instance, failures that occur higher in the tree can impact the availability of every data replica stored on adjacent racks.

In common technical parlance each of the internal nodes represented in \autoref{fig:scenarios} is termed a \emph{failure domain}. Widely used, modern-day storage area networks such as Parallels' Cloud Storage \cite{Parallels}, and VMWare's Virtual SAN \cite{VMWare} allow the user to specify a hierarchical model of failure domains much like that seen in \autoref{fig:scenarios}. In these models, storage devices which can fail together due to a common hardware dependency are grouped together into a hierarchy. Such models have also appeared in the research literature \cite{RehnSonigo2007,Weil2006}. For instance, the designers of CRUSH proposed a distributed algorithm which pseudo-randomly distributes data across failure domains. In CRUSH, the system administrator is allowed to specify replica placement rules which are used to distribute replicas across multiple failure domains \cite{Weil2006}. While CRUSH allows the user to specify rules which may lead to a undesirable placement such as that seen in \autoref{fig:scenarioI}, our work focuses on the alternative approach of finding an \emph{optimal} replica placement. In the process, we develop a novel objective function which provides insight into what ``optimal" means in terms of replica placement.

Concurrent with our work, Korupolu and Rajaraman considered several important extensions and variants of the reliable replica placement problem which they term ``failure-aware placement" \cite{Korupolu2016}. Their work explores important variants of replica placement which allows for a user-specified reliability factor at each node. They define an adversarial optimization problem which finds a \emph{fractional} number of replicas placed at each server. Once having optimized a fractional placement, they provide a randomized rounding approach which attains the optimum value of the fractional solution in expectation. However, the problem which they formulate relies heavily on the assignment of reliability factors to nodes. In the case where all nodes have the same reliability factor, the algorithm of Korupolu and Rajaraman assigns each leaf node an equal assignment of replicas, regardless of the structure of the heirarchy. When randomized rounding is performed on such a fractional placement, all discrete placements will be equally likely. In contrast, our formulation distinguishes between discrete placements based upon the structure of the hierarchy.

Other work on reliability in storage area networks has focused on objectives such as mean time to data loss \cite{Chen2007,Lian2005}. These exemplify an approach towards correlated failure which we term ``measure-and-conquer". In measure-and-conquer approaches \cite{Bakkaloglu2002,Ford2010,Nath2006,Weatherspoon2002}, a measured degree of correlation is given as a parameter to the model.  In contrast, we model explicit causal relations among failure events which we believe give rise to the correlation seen in practice. More recently, Pezoa and Hayat \cite{Pezoa2014} have presented a model in which spatially correlated failures are explicitly modeled. However, their main goal is the accurate statistical modeling of task redistribution and scheduling in the data center, whereas we are focused on algorithms for replica placement with provable guarantees. In the databases community, work on replica placement primarily focused on finding optimal placements in storage area networks with regard to a particular distributed access model or mutual exclusion protocol \cite{Hu2001,Shekhar2001,Zhang2009}. Similarly, work from the networking community tends to address particular allocation policies or quality of service objectives such as load balancing \cite{RehnSonigo2007,Wu2008}, in contrast to the present work on correlated failure. In general, much of the work from these communities focuses on system models and goals which are substantially different from our own. Recently, there has been a surge of interest in computer science concerning cascading failure in networks. While our model is conceptually related to this work, it does not appear to directly follow from any published model \cite{Blume2011,Kim2010,Nie2014,Zhu2014}. Current work in this area is focused on fault-tolerant network design \cite{Blume2011}, modeling cascading failure \cite{Kim2010,Nie2014}, and developing techniques for adversarial analysis \cite{Zhu2014}. To our knowledge, no one has yet considered the problem of replica placement in such models.

\paragraph{Our Contributions:} In this work, we make the following contributions. We first present a novel optimization goal for avoiding correlated failure and formulate two novel replica placement problems which optimize for this goal.
Intuitively, in our problems, the optimization goal is to choose a placement in which \emph{correlated} failures disable as ``few'' replicas as possible. 
We then present two efficient algorithms for attaining our optimization goal in trees. Both algorithms are based on dynamic programming. 
The first algorithm finds an optimal placement of replicas for a single block of data. It has a running time of $O(\n + \repfact \log \repfact)$, where $\n$ denotes the number of vertices in the tree and $\repfact$ denotes the number of replicas to be placed.
Our second algorithm finds an optimal placement of replicas for multiple data blocks. We define the \emph{skew} of the desired placement to be the difference between the largest and smallest replication factor among all files. When the skew is at most a constant, we present a replica placement algorithm which runs in polynomial time.
Finally, we establish the NP-hardness of reliable replica placement in bipartite graphs, even when placing $\rho$ replicas of only one block.

\paragraph{Roadmap:}
The rest of the text is organized as follows. We describe our system model and formally define single- and multi-block replica placement problems in \autoref{model}. We describe our algorithm for finding an optimal placement for a single file in \autoref{s:single-block}. 
We describe our algorithm for finding an optimal placement for multiple files in \autoref{s:mp-balancing}. Finally, we present an overview of known complexity results in \autoref{s:np-hard} before discussing future work and concluding in \autoref{s:future-work}.

\section{Modeling}
\label{model}
We model dependencies among failure events as a directed, rooted tree in which all edges point away from the root (i.e. an \emph{arborescence}) where nodes represent failure events and a directed edge from node $u$ to node $v$ indicates that the occurrence of failure event $u$ triggers the occurrence of failure event $v$. These nodes correspond either to real-world hardware unsuitable for storage (e.g. a top-of-rack (ToR) switch), or to abstract events which have no associated physical component (e.g. software failure, and maintenance outages). We refer to this tree as the \emph{failure model}.


Given such a tree as input, we consider the problem of selecting nodes on which to store data replicas. Roughly, we define a \emph{placement problem} as the problem of selecting a subset of the leaf nodes, hereafter referred to as a \emph{placement}, from the failure model so as to satisfy some safety criterion. In our application, only leaf nodes, which represent storage servers, are candidates to be part of a placement.

Let $\graph{\V}{\E}$ be an arborescence with vertices in $\V$ and arcs in $\E$. Let $\F \subseteq \V$ denote the set of internal nodes, and let $\C$ denote the set of leaves. We are interested in finding a \emph{placement} of size $\repfact$, which is defined to be a set $\P \subseteq \C$, with $|\P| = \repfact$. 
There are two types of nodes in tree $\graph{\V}{\E}$: internal nodes, which represent failure events, and leaf nodes, which represent storage servers.  A directed edge from internal node $e_1$ to internal node $e_2$ indicates that, in the worst-case, the occurrence of failure event $e_1$ triggers the occurrence of failure event $e_2$. A directed edge from internal node $e$ to leaf node $\ell$ indicates that, in the worst-case, the occurrence of event $e$ compromises storage server $\ell$. We consider failure to act transitively as regards the unavailability of replicas. That is, if a failure event occurs, all failure events reachable from it in $(V,A)$ also occur.

To quantify the impact of the failure of an event, we define the notions of \textit{failure number} and \textit{failure aggregate}.

\begin{definition}[failure number]
	\label{def-failure}
	Given a vertex $u \in \V$ and a placement $\P$, the \textit{failure number} of $u$ with respect to $\P$, denoted $\fnum{u}{\P}$ is defined as
	$$\fnum{u}{\P} \defined | \{ \ell \in P \mid \ell \text{ is reachable from } u \text{ in } (\V,\E)  \}|.$$
	In particular, $\fnum{u}{\P}$ is the number of leaves in $\P$ whose correct operation could be compromised by the occurrence of event $u$.
\end{definition}
As an example, node $u$ in \autoref{fig:scenarios} has failure number $3$ in Scenario I, and failure number $1$ in Scenario II. Note that with this definition, leaf nodes also have a failure number.

The failure number captures a conservative criterion for a safe placement. Our goal is to find a placement which does not induce large failure numbers in any event.
To collect all of the failure numbers into a single metric, we define the \textit{failure aggregate}, a novel measure that accounts for the failure number of every event in the model.

\begin{definition}[failure aggregate]
	The \emph{failure aggregate} of a placement $\P$ is a vector in $\mathbb{N}^{\repfact+1}$, denoted $\ff{\P}$, where \mbox{$\ff{\P} \defined \langle p_0, p_1, \ldots, p_\repfact\rangle$}, and each $p_i$ is defined as $$p_i \defined \left| \big\{ e \in \F \cup \C \mid \fnum{e}{\P} = \repfact - i\big\} \right|.$$
\end{definition}

Intuitively, $p_i$ is the number of nodes whose failure allows $\repfact - i$ replicas to survive. In \autoref{fig:scenarios}, Scenario I has failure aggregate of $\langle 2, 0, 3, 10 \rangle$ and Scenario II has failure aggregate of $\langle 0, 1, 7, 7 \rangle$ in \autoref{fig:scenarioII}.

All of the problems we consider in this work involve optimizing the failure aggregate. When optimizing a vector quantity, we must choose a meaningful way to totally order the vectors. In the context of our problem, we find that ordering the vectors with regard to the \emph{lexicographic order} naturally encodes our intuition behind an ``optimal" placement.

\begin{definition}[lexicographic order]
	The \emph{lexicographic order} $\llt$ between vectors $\vec{x} = \langle x_0, ..., x_d\rangle$ and $\vec{y} = \langle y_0,...,y_d\rangle$ can be defined via the following formula:
	\[\vec{x} \llt \vec{y} \iff \exists~ j \in [0,\repfact] ~:~\big(  x_j < y_j ~\wedge ~\forall ~i < j : [x_i = y_i  ]\big)\]
	The above definition extends to a definition for the symbol $\lleq$ in the usual way. We use terms \emph{lexico-minimum} and \emph{lexico-minmizes} as an efficient short-hand for phrases ``minimum in the lexicographic order" and ``minimizes with respect to the lexicographic order" respectively.
\end{definition}

To see why using the lexicographic ordering is desirable, consider a placement $\P$ which lexico-minimizes $\ff{\P} = \langle p_0, p_1, ..., p_\repfact \rangle$ among all possible placements. Such a placement is guaranteed to minimize $p_0$, i.e. the number of nodes which compromise \emph{all} of the entities in our placement. Further, among all solutions minimizing $p_0$, $\P$ also minimizes $p_{1}$, the number of nodes compromising \emph{all but one} of the entities in $\P$, and so on for $p_{2}, p_{3},\ldots, p_{\repfact}$. Clearly, the lexicographic order nicely prioritizes minimizing the entries of the vector in an appealing manner. 

This gives rise to the following novel optimization problem.

\begin{problem}[Optimal Single-block Placement]
	\label{p:graph-single-placement}
	Given an arborescence $\graph{\V}{\A}$ with leaves in $\C$, and positive integer $\repfact$, with $\repfact < |\C|$, find a placement $\P \subset \C$ with size $\repfact$, such that $\ff{\P}$ is lexico-minimum.
\end{problem}

Essentially, \autoref{p:graph-single-placement} concerns placing $\repfact$ replicas of a single block of data. Notice that in this problem we enforce that no more than one replica may be placed at any given leaf. This is reasonable, as co-locating two replicas on the same server would defeat the purpose of replication. In \autoref{s:single-block}, we present an $O(\n + \repfact \log \repfact)$ algorithm for solving \autoref{p:graph-single-placement}. 

While a worthy goal, a solution to \autoref{p:graph-single-placement} only optimizes the placement of a single set of replicas. In the data center, multiple sets of replicas must co-exist simultaneously. To address this crucial use-case, we also present an algorithm which simultaneously optimizes multiple replica placements at once. To this end, we define a \emph{multi-placement} $\MP$ to be an $\m$-tuple of placements, $\MP \defined (\P_1, ..., \P_m)$. In the multi-placement context, we refer to placements $P_1, ...,P_m$ as \emph{\block{s}}, and we refer to each block by its position in the tuple (e.g. placement $P_1$ is block 1, ... placement $P_i$ is block $i$, etc.) In the single-block case, it made sense to ensure that no more than one replica may be placed at any leaf node. In the multi-block case, we must allow multiple replicas from different blocks to be collocated at a leaf node. To this end, we include a capacity $c(\ell)$ for each leaf node $\ell$ in our formulation, and ensure no more than $c(\ell)$ replicas are placed on $\ell$. However, each placement remains a subset of the set of leaves, which means no placement in a multi-placement may place more than one replica on any given leaf.

The failure aggregate $\ff{\P}$ defined above extends to multi-placements by taking the sum over all placements in the multi-placement. To allow each block to have a distinct number of replicas, we pad the failure aggregates on the \emph{left} with additional zeroes to achieve a vector with the proper length. More specifically, if placement $\P_i$ consists of $\repfact_i$ replicas, then each failure aggregate is defined as a vector of length $\max[\repfact_1, ...,. \repfact_m]$. We refer to this quantity as the \emph{girth} of the multi-placement, and denote it by $\repfact$. Using this notation, the definition of the failure aggregate does not require any modification. We can thus define $\g{\MP} \defined \sum_{i = 1}^m \ff{\P_i}$, where $\ff{\P_i}$ takes on values in $\mathbb{N}^{\repfact+1}$, where $\repfact$ is understood to be the girth of the multi-placement $\MP$. This leads naturally to the following problem.

\begin{problem}[Optimal Multi-block Placement]
	\label{p:graph-multi-placement}
	Given an arborescence, $\graph{\V}{\A}$, with leaves in $\C$, where each leaf $\ell \in \C$ has assigned capacity \mbox{$c(\ell) \in \mathbb{Z}^+$}, a positive integer $m$, and $m$ positive integers $\repfact_1, \repfact_2, \ldots, \repfact_m$ for which  \mbox{$\sum_{i=1}^m \repfact_i \leq \sum_{\ell \in \C} c(\ell),$} find a multi-placement $\MP = (\P_1, ..., \P_m)$, which lexico-minimizes $\g{\MP}$ subject to the constraints that 
	\begin{enumerate}[1)] \item for each $\ell \in \C$, $\MP$ contains no more than $c(\ell)$ copies of $\ell$, and 
		\item $|\P_i| = \repfact_i$ for each $i=1, \ldots, m$.
	\end{enumerate}
\end{problem}

In the context of an instance of \autoref{p:graph-multi-placement}, we define:
\begin{enumerate}[label={\alph*)}]
	\item the \emph{\totalSize} as the sum of the sizes of all blocks, denoted by $\mpRepFact = \sum_{i=1}^m \rho_i$,
	\item the \emph{\girth} as the maximum size of any block, denoted by $\rho = \max (\repfact_1 ,..., \repfact_m)$,
	\item the \emph{\skewterm} as the absolute difference between the largest and smallest replication factors of each block, denoted by $\skewfactor$. For convenience, we assume that $\skewfactor \geq 1$, that is, $\skewfactor = \max ( \max_i \rho_i - \min_i \rho_i, 1 )$. 
\end{enumerate} 

Storage area networks used widely in practice make use of multi-placements with bounded \skewterm\footnote{From the VMWare Virtual SAN Administrator's Guide: ``For $n$ failures tolerated, $n+1$ copies of the virtual machine object are created"\cite{VMWare}. In contrast, the Parallels system allows the number of replicas per chunk to vary between a minimum and maximum value \cite{Parallels}. While our current work does not fully address the replication practices of the Parallels system, our algorithm can still be applied. The Parallels system must store a minimum number of replicas, and we can place these replicas optimally using our algorithm. Moreover, the recommended settings in a Parallels cluster use a maximum and minimum replication factor of 2 and 3 respectively (i.e. the recommended skew is at most one).}.
In \autoref{s:mp-balancing} we present an exact dynamic programming algorithm which, for any specification $\repfact_1,...,\repfact_m$ finds an optimal multi-placement of $m$ blocks with skew $\delta$ and girth $\repfact$. Our algorithm runs in polynomial-time when $\delta$ is a fixed constant. In any case, $\delta < \repfact$, and in practice, both values are small constants \cite{Parallels,VMWare}.

Throughout the paper, anytime we minimize or compare vector quantities we are minimizing or comparing them in the lexicographic order. Moreover, we reserve the symbols $\P$, $\ff{\P}$ and $p_i$, to have their meaning as defined in this section. We will also consistently use $\fnum{u}{\P}$ to refer to the failure number of node $u$ in placement $\P$. The symbol $\mpRepFact$ will be consistently used to denote the \totalSize of a multi-placement, whereas $\repfact$ will denote the size of a placement in \autoref{s:single-block} and the girth of a multi-placement in \autoref{s:mp-balancing} and beyond.

\section{Solving Single-block Replica Placement}
\label{s:single-block}

In this section, we describe an algorithm for solving \autoref{p:graph-single-placement}. First, we prove that any optimal placement must be balanced. We then exploit this balancing property to develop an $O(\n + \repfact \log \repfact)$ algorithm for finding an optimal placement of size $\repfact$ on the leaves of an arborescence with $\n$ nodes.

As an aside, we note that a greedy algorithm also works for this problem. Briefly, the greedy solution forms a partial placement $\P'$, to which new replicas are added one at a time, until $\repfact$ replicas have been placed overall. $\P'$ starts out empty, and at each step, the leaf $u$ which lexico-minimizes $\ff{\P' \cup \{u\}}$ is added to $\P'$. That this na\"{i}ve greedy approach works correctly is not immediately obvious. It can be shown via an exchange argument that each partial placement found by the greedy algorithm is a subset of \emph{some} optimal placement. Proving this is straight-forward, yet somewhat tedious (see \autoref{app:greedy-proof}). However, as we show below, the running time of the greedy approach does not compare favorably with our $O(\n + \repfact \log \repfact)$ algorithm.

The greedy approach solves \autoref{p:graph-single-placement} in $O(\n^2\repfact)$ time. Each iteration requires checking $O(|\C|)$ leaves for inclusion. For each leaf $\lf$ which is checked, every node on a path from $\lf$ to the root must have its failure number recomputed. Both the length of a leaf-root path and the number of leaves can be bounded by $O(\n)$ in the worst case, yielding $O(\n^2\repfact )$ time overall.

\subsection{The Balancing Property for Optimal Placements}

Our algorithm for optimal replica placements hinges on the fact that any optimal placement must be \emph{balanced}. Intuitively, this means that the number of replicas which are descendants of any internal node $u$ are distributed among the children of $u$ so that no single child accommodates too many replicas. The intuition behind balancing is made precise in \autoref{d:bal}, which follows below.

Let $\Csub{u}$ be the set of leaves which are descendants of node $u$. We refer to $|\Csub{u}|$ as the \emph{capacity of node u}.
\begin{definition}\label{d:bal}
	Given a placement $\P$, let node $u$ have children $\children{1}{\nChildren}$.  Let $\Csub{\child{i}}$ be the set of leaves which are descendants of child $\child{i}$. Node $u$ is said to be \emph{balanced with respect to placement $P$} if, for all $\child{i}, \child{j} \in \{\children{1}{\nChildren}\}$ $$|\Csub{\child{i}}| > \fnum{\child{i}}{\P} \implies \fnum{\child{j}}{\P} \leq \fnum{\child{i}}{\P} + 1,$$
	and the above condition is referred to as the \emph{balancing condition}.
	
	Moreover, placement $\P$ is said to be \emph{balanced} if, all nodes $u \in V$ are balanced with respect to $\P$.
\end{definition}
The balancing condition holds trivially if $|\Csub{\child{i}}| = \fnum{\child{i}}{P}$. We say that children where $|\Csub{\child{i}}| = \fnum{\child{i}}{P}$ are \emph{filled}, and children where $|\Csub{\child{i}}| > \fnum{\child{i}}{P}$ are \emph{unfilled}. As a consequence of the balancing condition the replicas are ``evenly spread" among the unfilled children. Our algorithm exploits the following result to achieve an $O(\n + \repfact \log \repfact)$ running time for minimizing $\ff{\P}$ in a tree.

\begin{theorem}\label{thm-balanced-sufficiency}
	Any placement $\P$ in which $\ff{\P}$ is lexico-minimum among all placements for a given tree must be balanced.
\end{theorem}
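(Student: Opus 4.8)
The plan is to prove the contrapositive: if a placement $\P$ is not balanced, then it is not lexico-minimum, because one can exhibit a placement $\P'$ with $\ff{\P'} \llt \ff{\P}$ via an exchange argument. Suppose $\P$ is unbalanced, so some node $u$ with children $\children{1}{\nChildren}$ violates the balancing condition of \autoref{d:bal}: there exist children $\child{i}, \child{j}$ with $\child{i}$ unfilled (that is, $|\Csub{\child{i}}| > \fnum{\child{i}}{\P}$) yet $\fnum{\child{j}}{\P} > \fnum{\child{i}}{\P} + 1$. Since $\fnum{\child{i}}{\P} = |\placedOn{\child{i}}{\P}|$ and $\child{i}$ is unfilled, there is a leaf $\ell_i \in \Csub{\child{i}} \setminus \P$; and since $\fnum{\child{j}}{\P} \geq 1$, there is a leaf $\ell_j \in \placedOn{\child{j}}{\P}$. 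I would then define $\P' \defined (\P \setminus \{\ell_j\}) \cup \{\ell_i\}$, a valid placement of size $\repfact$ obtained by relocating a single replica out of the subtree of $\child{j}$ and into the subtree of $\child{i}$.

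Next I would track precisely how failure numbers change under this exchange. For any node $w$, $\fnum{w}{\P}$ depends only on whether $\ell_i$ and $\ell_j$ are reachable from $w$. Every common ancestor of $\ell_i$ and $\ell_j$ — in particular $u$ itself together with all its ancestors — both loses $\ell_j$ and gains $\ell_i$, so its failure number is unchanged. Hence the only affected nodes lie strictly inside the two subtrees: every node on the path from $\child{j}$ down to $\ell_j$ has its failure number decremented by $1$, and every node on the path from $\child{i}$ down to $\ell_i$ has its failure number incremented by $1$. Because failure numbers are non-increasing along root-to-leaf paths, the largest failure number among all affected nodes is $M \defined \fnum{\child{j}}{\P}$, attained at least by $\child{j}$ itself, whereas every affected node in the subtree of $\child{i}$ had failure number at most $\fnum{\child{i}}{\P} \leq M - 2$.

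The crux is to show that the relocation yields a lexicographic improvement, and the two-unit gap $\fnum{\child{j}}{\P} \geq \fnum{\child{i}}{\P} + 2$ is exactly what makes this go through. Recall that a node with failure number $x$ contributes to entry $p_{\repfact - x}$ of $\ff{\P}$, so larger failure numbers correspond to smaller (more significant) indices. The smallest index that any change can touch is $\repfact - M$, since no affected node ever attains a failure number exceeding $M$. At that index the contributors in the subtree of $\child{j}$ whose failure number equals $M$ (at least one, namely $\child{j}$) are decremented, strictly reducing $p_{\repfact - M}$. Crucially, nothing can move \emph{into} this bucket: the incremented nodes lie in the subtree of $\child{i}$ and had failure number at most $M - 2$, hence at most $M - 1 < M$ afterward, landing at a strictly larger index. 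Thus $p_{\repfact - M}$ strictly decreases while every entry at a smaller index is untouched, giving $\ff{\P'} \llt \ff{\P}$ and contradicting the optimality of $\P$.

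I expect the main obstacle to be the bookkeeping in the final paragraph: one must argue rigorously that the topmost affected entry $p_{\repfact - M}$ undergoes a net strict decrease with no competing increment. This rests entirely on the gap between $\fnum{\child{j}}{\P}$ and $\fnum{\child{i}}{\P}$, so I would state the monotonicity of failure numbers along root-to-leaf paths as an explicit sub-claim, and then verify that the decremented and incremented nodes occupy disjoint failure-number ranges near $M$ (so that neither the decremented nodes moving to $M-1$ nor the incremented nodes moving to at most $M-1$ can ever raise $p_{\repfact - M}$). Making this disjointness airtight is where the care is required.
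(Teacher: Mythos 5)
Your proposal is correct and follows essentially the same route as the paper's own proof: the same single-replica exchange $\P' = (\P \setminus \{\ell_j\}) \cup \{\ell_i\}$, with the improvement established by showing that the bucket $p_{\repfact - M}$ (the paper's $p_{\repfact - a}$ with $a = \fnum{\child{j}}{\P}$) strictly decreases while no affected node can enter it, precisely because the incremented nodes sit at failure number at most $M-2$. Your explicit statements of monotonicity along root-to-leaf paths and of the invariance at common ancestors are points the paper's proof uses implicitly, but the argument is the same.
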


\begin{proof}
	Suppose $\P$ is not balanced, yet $\ff{\P}$ is lexico-minimum. We derive a contradiction as follows.
	
	Let $u$ be an unbalanced node, then $u$ must have children $\child{i}$ and $\child{j}$ such that $\child{i}$ is unfilled and $\fnum{\child{i}}{\P} + 1 < \fnum{\child{j}}{\P}$. Since $\child{i}$ is unfilled, we must be able to take one of the replicas placed on a leaf of $\child{j}$ and place it on $\child{i}$ instead. Leaves $\lf_i \in \Csub{\child{i}} \setminus P$ and $\lf_j \in \Csub{\child{j}} \cap P$ must exist, because $\child{i}$ is unfilled, and $\Csub{\child{j}} \cap P$ is non-empty. Set $\P^\ast \defined (\P \setminus \{\lf_j\}) \cup \{\lf_i\}$. We will show $\P^\ast$ is a strictly better placement than $P$.
	
	Let $\ff{\P} = \langle p_0, ..., p_\repfact \rangle$, and $\ff{\P^\ast} = \langle p_0^\ast, ..., p_\repfact^\ast \rangle$. For convenience, let $\fnum{\child{j}}{\P} = a$. To show that $\ff{\P} \lgt \ff{\P^\ast}$, we aim to prove that $p_{\repfact - a}^\ast < p_{\repfact - a}$, and that $p^\ast_k = p_k$ for all $k<\repfact - a$. We will concentrate on proving the former, and afterwards show that the latter follows easily.
	
	Let $\path{\lf_i}{\child{i}}$ (respectively $\path{\lf_j}{\child{j}}$) be the nodes on the unique path from $\lf_i$ to $\child{i}$ (respectively $\lf_j$ to $\child{j}$). As a result of the swap, note that only the nodes on these paths have their failure numbers modified. Therefore, to prove $p^\ast_{\repfact-a} < p_{\repfact-a}$, it suffices to consider the failure numbers of the nodes in $\path{\lf_i}{\child{i}} \cup \path{\lf_j}{\child{j}}$. Let $S^-$ (respectively $S^+$) be the set of nodes whose failures change from $a$ (respectively change to $a$), as a result of the swap. Formally, we define:
	$$S^- \defined \{ v \in \V \mid \fnum{v}{\P} = a, \fnum{v}{\P^\ast} \neq a \},$$
	$$S^+ \defined \{ v \in \V \mid \fnum{v}{\P} \neq a, \fnum{v}{\P^\ast} = a \}.$$
	By definition, $p^\ast_{\repfact-a} = p_{\repfact-a} - |S^-| + |S^+|.$ We claim that $|S^-| \geq 1$ and $|S^+| = 0$, which suffices to show that $p^\ast_{\repfact-a} < p_{\repfact-a}$, as required.
	
	To show that $|S^-| \geq 1$, note that $\fnum{\child{j}}{\P} = a$ by definition, and after the swap, the failure number of $\child{j}$ decreases. Therefore, $|S^-| \geq 1$.
	
	To show that $|S^+| = 0$, we must prove that no node in $\path{\lf_i}{\child{i}} \cup \path{\lf_j}{\child{j}}$ has failure number $a$ after the swap has occurred. We show the stronger result, that all such nodes' failure numbers are strictly less than $a$.
	
	Let $v_j$ be a node on the path $\path{\lf_j}{\child{j}}$, and consider the failure number of $v_j$. Notice that for every such $v_j$, we have that
	$$\fnum{v_j}{\P^\ast} \leq \fnum{\child{j}}{\P^\ast} = a -1 < a,$$
	where the first inequality follows since the failure number of any node is less than or equal to that of any of its ancestors, and $\fnum{\child{j}}{\P^\ast} = a -1$, since the number of replicas on $\child{j}$ decreases by 1 as a result of the swap. Therefore, $\fnum{v_j}{\P^\ast} < a$, for any $v_j \in \path{\lf_j}{\child{j}}$.
	
	Likewise, let $v_i$ be a node on the path $\path{\lf_i}{\child{i}}$, and consider the failure number of $v_i$. Since the swap added a replica at node $\child{i}$, clearly $\fnum{v_i}{P^\ast} = \fnum{v_i}{\P} + 1$. Recall also that $\fnum{\child{i}}{\P} + 1 < \fnum{\child{j}}{\P}$, therefore, for all $v_i$, we have
	$$\fnum{v_i}{\P} \leq \fnum{\child{i}}{\P} < \fnum{\child{j}}{\P} - 1 = a -1,$$
	which establishes that $\fnum{v_i}{\P} < a-1$. Substituting $\fnum{v_i}{\P^\ast}$ yields that $\fnum{v_i}{\P^\ast} < a$ for any $v_i \in \path{\lf_i}{\child{i}}$.
	
	Therefore, no node in $\path{\lf_i}{\child{i}} \cup \path{\lf_j}{\child{j}}$ has failure number $a$, so $|S^+| = 0$, as desired. Moreover, since we showed that $\fnum{v}{\P^\ast} < a$, for any node $v \in \path{\lf_i}{\child{i}} \cup \path{\lf_j}{\child{j}}$, and these are the only nodes whose failure numbers change, we have also proven that $p_k = p_k^\ast$ for all $k \leq \repfact - a$, thus completing the proof. \qed
\end{proof}

\subsection{An $O(n + \repfact\log\repfact)$ Algorithm for Optimal Placements}
To ease our exposition, we first describe an $O(n\repfact)$ version of the algorithm. In \autoref{s:rhologrho}, we make modifications to improve the running time to $O(n + \repfact \log \repfact)$.

Our algorithm finds an optimal placement that is balanced. Conceptually, we can think of our algorithm as assigning $\repfact$ replicas to the root of the tree, and assigning these replicas to children of the root in some fashion. We then recursively carry out the same procedure on each child, thereby proceeding down the tree, at each step ensuring that the replicas are assigned according to the balancing property. However, as we move down the tree, we observe that for certain nodes, the balancing condition only determines the number of replicas placed up to an additive factor of $\pm 1$. Therefore, on the way back up the tree, we solve a minimization problem to determine how many replicas to place on each node in order to minimize $\ff{\P}$. To concisely communicate the ``total number of replicas assigned to node $u$", we shall refer to the number of replicas assigned to a node of the tree as its \emph{mass}. 

More specifically, we proceed as follows. Before the recursive procedure begins, we first record the capacity of each node $u$ via a post-order traversal of the tree. Our algorithm is then executed in two consecutive phases. During the \emph{divide} phase, the algorithm is tasked with dividing the mass assigned to node $u$ among the children of $u$. First, for each child $\child{i}$, we determine $\minRepOn{\child{i}}$, the minimum possible mass on $\child{i}$ in any balanced placement. After the divide phase, we have determined which children $c$ are \emph{\dfill}, $(\minRepOn{\child{i}} = |\Csub{\child{i}}|)$ and which are \emph{\punfill}, $(\minRepOn{\child{i}} < |\Csub{\child{i}}|)$. \Dfill children have a mass equal to their capacity, and require no further optimization. To achieve balancing, each \punfill child labeled $\child{i}$ must have a mass of either $\minRepOn{\child{i}}$ or $\minRepOn{\child{i}} + 1$. 
The algorithm is then recursively called on each \punfill child to obtain optimal subproblems of mass $\minRepOn{\child{i}}$ and $\minRepOn{\child{i}} + 1$ for their subtrees. After this recursive call is complete, two optimal solutions are available at each \punfill child. The \emph{combine} phase then chooses which of the two placements should be used at each \punfill child so as to obtain a minimum overall, thereby determining the final mass for each such child.

\subsubsection{Divide Phase}

When node $u$ is first considered by the divide phase, there are at most two possible values for its final mass. Let these values be $\minRepOn{u}$ and $\minRepOn{u} + 1$. Let $u$ have $t$ children, labeled $\children{1}{\nChildren}$, with capacities $|\Csub{\child{1}}|, ..., |\Csub{\child{\nChildren}}|$. The divide phase determines which children are \dfill, and which are \punfill.

The set of \punfill children can be determined in $O(\nChildren)$ time in a manner similar to the algorithm for the Fractional Knapsack Problem \cite{Dantzig1957}. We iteratively process the children of $u$ and, based upon their capacities, determine whether they are \dfill or \punfill in a balanced placement of $\repfact$ replicas. We ensure that in each iteration, at least one-half of the children with undetermined status have their \dfill/\punfill status determined. To determine which half, the median capacity child with undetermined status is found using the median of medians algorithm \cite{Cormen2009}. Based upon the number of replicas yet to be ``claimed" by \dfill children, either 
\begin{inparaenum}[a)]
	\item the set of children with capacity greater than or equal to the median are labeled as \punfill, or
	\item the set of children with capacity less than or equal to the median are labeled as \dfill.
\end{inparaenum}
The algorithm then recurses on the remaining \punfill children. In the process of computing the \dfill and \punfill children values of $\minRepOn{\child{i}}$ are determined for all \punfill children $\child{i}$. See \autoref{a:get-filled} for pseudocode describing this procedure, and \autoref{f:divide-example} for a sample execution. The proof of correctness is straight-forward but tedious, and is provided in \autoref{app:labeling-proof}. Because the algorithm runs in time $O(\nChildren)$ for a node with $\nChildren$ children, the divide phase takes $O(\n)$ time over the entire tree.

\begin{algorithm}[t]
	\SetKwFunction{getFilled}{Label-Children}
	\SetKwProg{Fn}{Function}{begin}{end}
	\Fn{\getFilled{$\{ \children{1}{\nChildren} \}$, $r$}}{
		$\filled \gets \emptyset$;  \tcp*[r]{$\filled$ := \dfill children}
		$\unfilled \gets \emptyset$; \tcp*[r]{ $\unfilled$ :=  \punfill children}
		$M \gets \{ \children{1}{\nChildren} \}$  \tcp*[r]{$M$ := unassigned children}
		$s \gets r$  \tcp*[r]{$s$ := number of replicas not yet permanently assigned}
		\While{$M \neq \emptyset$}{
			\label{ln:while}
			$med \gets \text{ median capacity of children in $M$ }$\;
			$M_{\ell} \gets \{\child{i} \in M : |\Csub{\child{i}}| < med  \}$\;
			$M_{e} \gets \{\child{i} \in M : |\Csub{\child{i}}| = med \}$\;
			$M_{g} \gets \{\child{i} \in M : |\Csub{\child{i}}| > med \}$\;
			$x \gets s - \sum_{\child{i} \in M_{\ell}} |\Csub{\child{i}}|$ \;
			\uIf{$x < (med -1) \cdot (|\unfilled| + |M_{e}| + |M_{g}|)$}{\label{l:case1}
				$\unfilled \gets \unfilled \cup M_{e} \cup M_{g}$\tcp*{$M_{e} \cup M_{g}$ \punfill}
				$M \gets M - (M_{e} \cup M_{g})$\; 
				
			}\uElseIf{$x \geq (med) \cdot (|\unfilled| + |M_{e}| + |M_{g}|)$} {\label{l:case2}
				$\filled \gets \filled\cup M_{\ell} \cup M_{e}$\tcp*{$M_{\ell} \cup M_{e}$ \dfill}
				$M \gets M - (M_{\ell} \cup M_{e})$\;
				$s \gets x - \sum_{\child{i} \in M_{e}} |\Csub{\child{i}}|$\;
			}
			\Else(\tcp*[f]{$M_{\ell}$ \dfill, $M_{e} \cup M_{g}$ \punfill}) {
				$\unfilled \gets \unfilled \cup M_{e} \cup M_{g}$ \; $\filled \gets \filled \cup M_{\ell}$\;
				$M \gets \emptyset$\;
			}
		}
		\Return{($\filled$, $\unfilled$)} \tcp*[r]{return \dfill and \punfill children}
	}
	\caption{Determine \dfill and \punfill nodes}\label{a:get-filled}
\end{algorithm}

\begin{figure}
\begin{tikzpicture}[scale=0.95, transform shape,text height=2ex]
\usetikzlibrary{patterns}
\newlength{\hatchspread}
\newlength{\hatchthickness}
\newlength{\hatchshift}
\newcommand{\hatchcolor}{}
\tikzset{hatchspread/.code={\setlength{\hatchspread}{#1}},
	hatchthickness/.code={\setlength{\hatchthickness}{#1}},
	hatchshift/.code={\setlength{\hatchshift}{#1}},
	hatchcolor/.code={\renewcommand{\hatchcolor}{#1}}}
\tikzset{hatchspread=3pt,
	hatchthickness=0.4pt,
	hatchshift=0pt,
	hatchcolor=black}
\pgfdeclarepatternformonly[\hatchspread,\hatchthickness,\hatchshift,\hatchcolor]
{custom north east lines}
{\pgfqpoint{\dimexpr-2\hatchthickness}{\dimexpr-2\hatchthickness}}
{\pgfqpoint{\dimexpr\hatchspread+2\hatchthickness}{\dimexpr\hatchspread+2\hatchthickness}}
{\pgfqpoint{\dimexpr\hatchspread}{\dimexpr\hatchspread}}
{
	\pgfsetlinewidth{\hatchthickness}
	\pgfpathmoveto{\pgfqpoint{\dimexpr\hatchshift-0.15pt}{-0.15pt}}
	\pgfpathlineto{\pgfqpoint{\dimexpr\hatchspread+0.15pt}{\dimexpr\hatchspread-\hatchshift+0.15pt}}
	\ifdim \hatchshift > 0pt
	\pgfpathmoveto{\pgfqpoint{-0.15pt}{\dimexpr\hatchspread-\hatchshift-0.15pt}}
	\pgfpathlineto{\pgfqpoint{\dimexpr\hatchshift+0.15pt}{\dimexpr\hatchspread+0.15pt}}
	\fi
	\pgfsetstrokecolor{\hatchcolor}
	\pgfusepath{stroke}
}
\tikzstyle{every node}=[minimum size=0.5cm, circle, align=center,  draw=black]

\node (r) {$u$};
\node [below=0.5cm of r, xshift=-0.75cm] (3) {};
\node [left=of 3]   (2) {};
\node [left=of 2]   (1) {};
\node [right=of 3]   (4) {};
\node [right=of 4]   (5) {};
\node [right=of 5]   (6) {};

\draw[->] (r) -- (1.north east);
\draw[->] (r) -- (2.north east);
\draw[->] (r) -- (3.north);
\draw[->] (r) -- (4.north);
\draw[->] (r) -- (5.north west);
\draw[->] (r) -- (6.north west);

\node [draw=none, below=0.25cm of 1] (11) {1};
\node [draw=none, below=0.25cm of 2] (12) {2};
\node [draw=none, below=0.25cm of 3] (13) {4};
\node [draw=none, below=0.25cm of 4] (14) {5};
\node [draw=none, below=0.25cm of 5] (15) {9};
\node [draw=none, below=0.19cm of 6] (16) {11};

\node [draw=none, left=0cm of 11] {$|\Csub{i}| = $};

\node [draw=none, above right=-0.25cm of r, xshift=0.25cm] {$\repOn{u} \in \{20, 21\}$};

\node [draw=none, rectangle, right=1cm of 16] (e1) {$med = 4.5$};
\node [draw=none] at ($(13)!0.5!(14)$) (mid) {$\LARGE\mid$};
\node [draw=none, rectangle, below=0cm of e1]     (e2) {$x=20 \not\leq 3.5(3) = 10.5$};
\node [draw=none, rectangle, below=-0.1cm of e2]  (e3) {$x=20 > 4.5(3) = 13.5$};

\node [draw=none] at (e3 -| 11) {$F$};
\node [draw=none] at (e3 -| 12) {$F$};
\node [draw=none] at (e3 -| 13) {$F$};

\draw [transform canvas={yshift=-2.5mm}] (e3.east -| 1.west) -- (e3.east);

\node [draw=none, yshift=-0.5cm] at (14 |- e3) (34) {5};
\node [yshift=-0.5cm] at (15 |- e3) (35) {9};
\node [draw=none, yshift=-0.5cm] at (16 |- e3) (36) {11};

\node [draw=none, rectangle, right=1cm of 36] (e4) {$med = 9$};
\node [draw=none, rectangle, below=0cm of e4] (e5) {$x=13 \leq 8(2) = 16$};

\draw [pattern=custom north east lines, hatchspread=10pt] ($(e4.north west -| 1.west)+(0,-0.1)$) rectangle (e4.south east -| mid.west);

\node [draw=none] at (e5 -| 15) {$U$};
\node [draw=none] at (e5 -| 16) {$U$};

\draw [transform canvas={yshift=-2.5mm}] (e5.east -| 1.west) -- (e5.east);

\node [yshift=-0.5cm] at (14 |- e5) (34) {5};

\node [draw=none, rectangle] at (34 -| e4) (e6) {$med = 5$};
\node [draw=none, rectangle, below=0cm of e6] (e7) {$x=13 \leq 5(3) = 15$};

\node [draw=none] at (e7 -| 14) {$U$};

\draw [pattern=custom north east lines, hatchspread=10pt] ($(e6.north west -| 1.west)+(0,-0.1)$) rectangle (e6.south east -| mid.west);
\draw [pattern=custom north east lines, hatchspread=10pt] ($(e6.north west -| 5.west)+(0,-0.1)$) rectangle (e6.south east -| 6.east);

\end{tikzpicture}

	\caption{Example execution of \autoref{a:get-filled}. Either $20$ or $21$ replicas are placed on node $u$, which has six children, with capacities $1,2,4,5,9$ and $11$. Each iteration is divided by a line. The node with the median capacity is circled in each iteration, except in the first iteration, where no node has capacity equal to the median. Computation of the branch conditions at lines \ref{l:case1} and \ref{l:case2} are shown on the right.}\label{f:divide-example}
\end{figure}

We next show that computing two values at each child is all that is required to compute \emph{both} placements of mass $\minRepOn{u}$ and $\minRepOn{u}+1$ at node $u$. This avoids a combinatorial explosion by keeping constant the number of subproblems considered at each node throughout the recursion. 

\begin{theorem}\label{thm-two-values}
	Let $\unfilled$ and $\filled$ be the set of \punfill and \dfill children found by \autoref{a:get-filled}, and let $\R$ be  the minimum number of replicas to be distributed among the \punfill children, i.e. $\R := \minRepOn{u} - \sum_{\child{i} \in \filled} |\Csub{\child{i}}|$. In any case where $\minRepOn{u}$ or $\minRepOn{u} + 1$ replicas must be balanced among $t$ \punfill children, it suffices to consider placing either $\smallRep$ or $\bigRep$ children at each \punfill child.
\end{theorem}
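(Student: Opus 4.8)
The plan is to reduce the statement to a standard fact about spreading a fixed number of replicas as evenly as possible among $|\unfilled|$ bins, and then to verify a short floor/ceiling identity. First I would record the consequence of balancing for the \punfill children. Every child in $\unfilled$ is unfilled, so \autoref{d:bal}, applied to an arbitrary pair $\child{i}, \child{j} \in \unfilled$ in both directions, gives $\fnum{\child{j}}{\P} \le \fnum{\child{i}}{\P} + 1$ and $\fnum{\child{i}}{\P} \le \fnum{\child{j}}{\P} + 1$; hence in any balanced placement the masses assigned to the \punfill children differ pairwise by at most one.

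Next I would invoke the elementary counting fact that if $N$ replicas are distributed among $|\unfilled|$ children so that no two masses differ by more than one, then each child receives either $\floorfrac{N}{|\unfilled|}$ or $\ceilfrac{N}{|\unfilled|}$ replicas, with exactly $N \bmod |\unfilled|$ children receiving the larger amount. Because the \dfill children are pinned to their capacities, assigning $\minRepOn{u}$ (respectively $\minRepOn{u}+1$) replicas to $u$ is the same as distributing exactly $\R$ (respectively $\R + 1$) replicas among the \punfill children. Applying the fact with $N = \R$ and then $N = \R + 1$ shows that, across both cases, every \punfill child's mass belongs to the four-element set $\{\floorfrac{\R}{|\unfilled|},\, \ceilfrac{\R}{|\unfilled|},\, \floorfrac{(\R+1)}{|\unfilled|},\, \ceilfrac{(\R+1)}{|\unfilled|}\}$.

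The remaining step, and the only one requiring care, is to show that these four values collapse onto the two values $\smallRep$ and $\bigRep$. Writing $\R = q\,|\unfilled| + s$ with $0 \le s < |\unfilled|$, so that $\smallRep = \floorfrac{\R}{|\unfilled|} = q$, I would note that $\R + 1 = q\,|\unfilled| + (s+1)$ with $1 \le s + 1 \le |\unfilled|$, which forces $\bigRep = \ceilfrac{(\R+1)}{|\unfilled|} = q + 1 = \smallRep + 1$. Since $\R + 1 > \R$ gives $\floorfrac{(\R+1)}{|\unfilled|} \ge \smallRep$, and since $\ceilfrac{\R}{|\unfilled|} \le q + 1 = \bigRep$, all four values lie in the integer interval $[\smallRep, \bigRep]$; as $\smallRep$ and $\bigRep$ are consecutive integers, each of the four equals either $\smallRep$ or $\bigRep$, which is exactly the claim. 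The main obstacle is the boundary bookkeeping in this last step, in particular confirming that $\bigRep = \smallRep + 1$ holds uniformly: this includes the divisible case $|\unfilled| \mid \R$ (where $\ceilfrac{\R}{|\unfilled|}$ coincides with $\smallRep$ yet the larger value $\bigRep$ still surfaces once the extra replica is added) and the case $s + 1 = |\unfilled|$ (where every \punfill child shares the larger value). None of this is deep, but the floor/ceiling manipulations must be stated carefully to cover every case, and I would be sure to justify that the \dfill children genuinely remain at capacity so that precisely $\R$ or $\R+1$ replicas reach the \punfill children.
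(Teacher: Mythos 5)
Your overall strategy is the same as the paper's --- first an ``even spread'' fact for balanced distributions, then an arithmetic argument collapsing the values arising from $\R$ and $\R+1$ replicas onto two consecutive integers --- and your collapse step is correct; indeed, observing that $\bigRep = \smallRep + 1$ holds uniformly and trapping all four floor/ceiling values in $[\smallRep, \bigRep]$ is cleaner than the paper's three-way case analysis on divisibility. However, your first step has a genuine gap. You assert that ``every child in $\unfilled$ is unfilled'' and then apply \autoref{d:bal} \emph{in both directions} to every pair of \punfill children. This conflates the \emph{label} assigned by \autoref{a:get-filled} (\punfill, a statement comparing $\minRepOn{\child{i}}$ to capacity) with the \emph{placement property} that \autoref{d:bal} actually requires, namely $|\Csub{\child{i}}| > \fnum{\child{i}}{\P}$. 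A \punfill child can sit exactly at capacity in a balanced placement: with two \punfill children of capacities $4$ and $10$ and $\R = 6$ (a configuration the labeling algorithm can produce), placing $\R + 1 = 7$ replicas as $(4,3)$ is balanced, yet the capacity-$4$ child is filled, so no balancing condition may be invoked ``from'' it. Worse, balancedness alone does not imply your pairwise-difference claim: with capacities $(2,10)$ and masses $(2,4)$ the placement is balanced (the only unfilled child imposes $2 \leq 4 + 1$), yet the masses differ by two and are not the floor/ceiling values $(3,3)$. So the implication you rely on is false without an additional ingredient.

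The missing ingredient is the capacity guarantee from the labeling algorithm, Corollary~\ref{cor:label:condition} in \autoref{app:labeling-proof}, which gives $|\Csub{\child{i}}| > \nicefrac{\R}{|\unfilled|}$ for every $\child{i} \in \unfilled$; this is what excludes configurations like $(2,10)$ above. With that bound, one can repair your step: if the minimum-mass \punfill child is unfilled, all masses lie within one of it; if it is filled, then every mass is at least $\smallRep + 1$, and a short counting argument shows the total would exceed $\R + 1$ unless all masses are equal --- so pairwise differences are at most one in every case. The paper avoids the issue entirely by a different bookkeeping: its even-spread fact is proved by contradiction, where the specific child holding at most $\floorfrac{x}{|\unfilled|} - 1$ replicas is shown to be \emph{strictly} below its capacity (using the hypothesis that every capacity is at least $\ceilfrac{x}{|\unfilled|}$), so the balancing condition provably applies from that one child; no claim that \emph{all} \punfill children are unfilled is ever needed. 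Your proof needs one of these two repairs --- either route the argument through the deficient child as the paper does, or explicitly import the capacity guarantee --- before the first step is sound.
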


\begin{proof}
	We first show that if $x$ replicas are to be distributed among $|\unfilled|$ \punfill children with sufficient capacity in a \emph{balanced} manner, then each child needs to store either $\floorfrac{x}{|\unfilled|}$ or $\ceilfrac{x}{|\unfilled|}$ replicas. Assume, on the contrary, that every child has capacity of at least $\ceilfrac{x}{|\unfilled|}$ but some child stores either at most $\floorfrac{x}{|\unfilled|} - 1$ replicas or at least $\ceilfrac{x}{|\unfilled|} + 1$ replicas.
	There are two cases depending on whether or not $x \bmod |\unfilled| = 0$.
	
	\begin{enumerate}[label={Case \roman*}]
		
		\item ($x \bmod |\unfilled| = 0$):~~In this case, $\floorfrac{x}{|\unfilled|} = \ceilfrac{x}{|\unfilled|} = \nicefrac{x}{|\unfilled|}$. If some child stores at most $\nicefrac{x}{|\unfilled|} - 1$ replicas, then some other child must store at least $\nicefrac{x}{|\unfilled|} + 1$, and vice versa. This, in turn, violates the balancing property.

		\item ($x \bmod |\unfilled| \neq 0$):~~In this case, $\floorfrac{x}{|\unfilled|} < \nicefrac{x}{|\unfilled|} <  \ceilfrac{x}{|\unfilled|}$. If some child stores at most $\floorfrac{x}{|\unfilled|} - 1$ replicas, then some other child must store at least  $\ceilfrac{x}{|\unfilled|}$ replicas. Likewise, if some child stores at least $\ceilfrac{x}{|\unfilled|} + 1$ replicas, then some other child must store at most $\floorfrac{x}{|\unfilled|}$ replicas. Both situations violate the balancing property.
		
	\end{enumerate}
	
	We now prove the main result. Note that we need to place either $\R$ or $\R+1$ replicas on $|\unfilled|$ \punfill children identified by the labeling algorithm. If $|\unfilled| = 1$, then, clearly, the \punfill child needs to store either $\R$ or $\R+1$ replicas. Thus assume that $|\unfilled| \geq 2$. Note that the capacity of each \punfill child  is at least $\ceilfrac{\R}{|\unfilled|}$.  There are three cases depending on the values of $\R \bmod |\unfilled|$ and $(\R + 1) \bmod |\unfilled|$.
	
	\begin{enumerate}[label={Case \roman*}]
		
		\item ($\R \bmod |\unfilled| = 0$):~~With $\R$ replicas, we need to store exactly $\nicefrac{\R}{|\unfilled|}$ replicas on each \punfill child. With $\R + 1$ replicas, we need to store either $\floorfrac{(\R+1)}{|\unfilled|}$ or $\ceilfrac{(\R+1)}{|\unfilled|}$ replicas on each \punfill child. But since $\R \bmod |\unfilled| = 0$, $\floorfrac{\R}{|\unfilled|} = \floorfrac{(\R+1)}{|\unfilled|}$, yielding the two values claimed.

		\item $\left((\R + 1) \bmod |\unfilled| = 0\right)$:~~With $\R$ replicas, we need to store either $\floorfrac{\R}{|\unfilled|}$ or $\ceilfrac{\R}{|\unfilled|}$ replicas on each \punfill child. With $\R + 1$ replicas, we need to store exactly $\nicefrac{(\R+1)}{|\unfilled|}$ replicas on each \punfill child. But since $(\R + 1) \bmod |\unfilled| = 0$,  $\ceilfrac{\R}{|\unfilled|} = \ceilfrac{(\R+1)}{|\unfilled|} = \nicefrac{(\R+1)}{|\unfilled|}$, yielding the two values claimed.
		
		\item ($\R \bmod |\unfilled| \neq 0$ and $(\R + 1) \bmod |\unfilled| \neq 0$):~~In this case, $\floorfrac{\R}{|\unfilled|} = \floorfrac{(\R+1)}{|\unfilled|}$ and $\ceilfrac{\R}{|\unfilled|} = \ceilfrac{(\R+1)}{|\unfilled|}$, yielding the two values claimed. 
		
	\end{enumerate}
	
	\noindent
	This completes the proof.
	\qed
\end{proof}

\subsubsection{Combine Phase}\label{s:conquer-phase}

Once the recursive call completes, we combine the results from each of the children to achieve the lexico-minimum value of the objective function overall. Let $\unfilled$ be the set of \punfill children found in the divide phase. The combine phase decides which $\nHvy$ \punfill children receive a mass of $\bigRep$, and which receive a mass of only $\smallRep$. We call a placement with size $\bigRep$ \emph{heavy}, and a placement of size $\smallRep$ \emph{light}. We must select $\nHvy$ children to receive heavy placements in such a way that the overall placement is lexico-minimum. Recall that we must return two optimal placements, one of size $\minRepOn{u}$ and another of size $\minRepOn{u} + 1$. We show how to obtain an optimal placement of size $\minRepOn{u}$, the $\minRepOn{u} + 1$ case is easily obtained thereafter.

Let the \punfill children be given as $\children{1}{|\unfilled|}$. For each \punfill child $\child{i}$ let $\a_i$ (respectively $\b_i$) represent the minimum value of $\ff{\P}$, where $\P$ is any light placement (respectively any heavy placement) of replicas on child $\child{i}$. Recall that the values of optimal heavy and light placements were recursively computed for each child, so values of $\a_i, \b_i \in \intvecs$ are readily available. We formulate an optimization problem by setting decision variables $x_i \in \{0,1\}$, for which  $x_i = 0$ if child $\child{i}$ receives a light placement, or $1$ if $\child{i}$ receives a heavy placement. The problem can then be described as an assignment of values to $x_i$ according to the following system of constraints.
\begin{equation}\label{e:optProb}
\min \sum_{i=1}^{|\unfilled|} \a_i + (\b_i - \a_i)x_i, ~~~\text{subj. to:}~~~ \sum_{i=1}^{|\unfilled|} x_i = \nHvy.
\end{equation}

An assignment of $x_i$ which satisfies the requirements of (\ref{e:optProb}) can be found by computing $\b_i - \a_i$ for all $i$, and greedily assigning $x_i = 1$ for the children with the $\nHvy$ smallest values of $\b_i - \a_i$. This solution is clearly feasible. \autoref{t:optCorrect} below states that this assignment is also optimal.

Our proof of \autoref{t:optCorrect} relies crucially on the fact that the lexicographic order on integer vectors forms a linearly-ordered Abelian group under the operation of component-wise addition. For completeness, we state the properties of a linearly-ordered Abelian group here.

\begin{definition}
	A linearly-ordered Abelian group is a triple $\logroup$, where $\G$ is a set of elements, $\gop$ is a binary operation on $\G$, and $\logeq$ is a linear (total) order on $\G$ such that all of the following properties are satisfied \cite{Levi1947}.
	\begin{enumerate}[a)]
		\item \emph{Associativity:} for all $x,y,z\in \G$, $x + (y +z) = (x+y)+z$
		\item \emph{Commutativity:} for all $x,y \in \G$, $x + y = y + x$.
		\item \emph{Identity:} there is an element $0 \in \G$ such that for all $x \in G$, $0 + x = x$.
		\item \emph{Inverses:} for all $x \in G$, there is an element $x^{-1} \in G$, such that $x + (x^{-1}) = 0$. In commutative (Abelian) groups, $x^{-1}$ is typically denoted $-x$.
		\item \emph{Translation-invariance:} for all $x,y,z \in \G$, if $x \logeq y$, then $x + z \logeq y + z$.
	\end{enumerate}
\end{definition}

It is straight-forward to show that $\groupdef{\intvecs}{+}{\lgeq}$ is a linearly-ordered Abelian group, where $+$ is component-wise addition, and $\lgeq$ is the lexicographic order. Armed with this fact, we can now formally state and prove correctness of the greedy optimization procedure.

\begin{theorem}\label{t:optCorrect}
	Let $\perm = (\perm_1, \perm_2, ..., \perm_{|\unfilled|})$ be a permutation of indices $\{1, ..., |\unfilled|\}$ such that
	$$\b_{\perm_1} - \a_{\perm_1} \loleq \b_{\perm_2} - \a_{\perm_2} \loleq ... \loleq \b_{\perm_{|\unfilled|}} - \a_{\perm_{|\unfilled|}}.$$ 
	If $\langle x_1, ..., x_{|\unfilled|} \rangle$ is defined according to the following rule: set $x_{\perm_i} = 1$ if and only if \mbox{$i \leq \nHvy$}, else $x_{\perm_i} = 0$, then $\langle x_1, ..., x_{|\unfilled|} \rangle$ is an optimal solution to (\ref{e:optProb}).
\end{theorem}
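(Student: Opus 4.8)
The plan is to recognize (\ref{e:optProb}) as a minimum-weight basis problem over a uniform matroid in which the weights $\b_i - \a_i$ live in the linearly-ordered Abelian group $\intvecg$, and to establish optimality of the greedy choice by a standard exchange argument. The point is that each axiom of a linearly-ordered Abelian group supplies exactly what one step of the exchange needs: the total order $\loleq$ lets us compare any two candidate solutions; associativity and commutativity let us freely regroup the sums; and translation-invariance lets us cancel a common summand when comparing two sums that differ in a single term. This last property is indispensable, since the lexicographic order does \emph{not} behave like ordinary componentwise order.

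First I would dispose of the constant term. For a selection $S \defined \{ i : x_i = 1\}$ the objective equals $\sum_{i} \a_i + \sum_{i \in S} (\b_i - \a_i)$, and the term $\sum_i \a_i$ is independent of $S$. By translation-invariance, comparing two full objectives is the same as comparing the two sums with this common term deleted, so minimizing the objective is equivalent to lexico-minimizing $\sum_{i \in S}(\b_i - \a_i)$. Writing $\vec{d}_i \defined \b_i - \a_i$, the problem reduces to choosing a size-$\nHvy$ subset $S \subseteq \{1,\dots,|\unfilled|\}$ that lexico-minimizes $\sum_{i \in S} \vec{d}_i$.

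Next I would set up the exchange. Let $G \defined \{\perm_1,\dots,\perm_{\nHvy}\}$ be the greedy selection prescribed by the theorem, and let $S$ be any feasible selection with $S \neq G$. Since $|G| = |S| = \nHvy$, there exist $g \in G \setminus S$ and $s \in S \setminus G$. As $\perm$ orders the differences nondecreasingly and $g = \perm_k$ with $k \leq \nHvy$ while $s = \perm_l$ with $l > \nHvy$, we get $\vec{d}_g \loleq \vec{d}_s$. Put $S' \defined (S \setminus \{s\}) \cup \{g\}$ and $\vec{c} \defined \sum_{i \in S \setminus \{s\}} \vec{d}_i$; applying translation-invariance to $\vec{d}_g \loleq \vec{d}_s$ yields $\vec{d}_g + \vec{c} \loleq \vec{d}_s + \vec{c}$, i.e. $\sum_{i \in S'} \vec{d}_i \loleq \sum_{i \in S} \vec{d}_i$. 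Thus $S'$ is no worse than $S$ while agreeing with $G$ in one more position.

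Finally I would iterate: each exchange strictly increases $|S \cap G|$ without increasing the objective, so after at most $\nHvy$ exchanges $S$ is transformed into $G$, giving $\sum_{i \in G} \vec{d}_i \loleq \sum_{i \in S} \vec{d}_i$ for every feasible $S$. Since $S$ was arbitrary, the greedy assignment of the theorem is optimal. I expect the only real obstacle to be expository discipline rather than mathematical depth: one must be scrupulous to manipulate every vector inequality solely through the group axioms — in particular invoking translation-invariance instead of any naive componentwise cancellation — because lexicographic order lacks the coordinatewise monotonicity that such cancellation would tacitly assume. Beyond that, the argument is simply the uniform-matroid greedy proof transported verbatim into the ordered-group setting.
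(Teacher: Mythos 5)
Your proof is correct and is essentially the paper's own argument: both reduce the problem to lexico-minimizing $\sum_{i \in S}(\b_i - \a_i)$ over size-$\nHvy$ subsets and then run the same one-element exchange, justified by translation-invariance of $\intvecg$, iterated until the arbitrary feasible set is transformed into the greedy one. The uniform-matroid framing is just packaging around the identical exchange step, so there is no substantive difference in approach.
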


\begin{proof}
	First, notice that any optimal solution to (\ref{e:optProb}) also minimizes the quantity $\sum_{i} (\b_i - \a_i)x_i$. Therefore, it suffices to minimize this quantity. For convenience, we consider $\langle x_1, ..., x_{|\unfilled|}\rangle$ to be the characteristic vector of a subset $S$ of indices $\{1, ..., |\unfilled|\}$. We will show that no other such subset $S'$ can yield a characteristic vector $\langle x'_1, ..., x'_{|\unfilled|} \rangle$ which is strictly better than $\langle x_1, ..., x_{|\unfilled|}\rangle$ as follows.
	
	Let $\beta = \nHvy$, and let $S = \{\perm_1, ..., \perm_\beta\}$ be the first $\beta$ entries of $\perm$ taken as as set. Suppose that there is some $S'$ which represents a feasible assignment which is strictly better than that represented by $S$. Clearly, $S' \subseteq \{1,...., |\unfilled|\}$, such that $|S'| = \beta$ and $S \neq S'$. Since $S \neq S'$, and $|S'| = |S|$, we have that there must be some $i \in S \setminus S'$ and $j \in S' \setminus S$. We claim that we can improve on $S'$ by forming $S^\ast = (S' \setminus\{j\}) \cup \{i\}$. Specifically, we claim that
	\begin{equation}\label{e:non-increasing}
	\sum_{k \in S^\ast} (\b_k - \a_k) \lleq \sum_{k \in S'} (\b_k - \a_k),
	\end{equation}
	which implies that replacing a single element in $S'$ with one from $S$ does not cause the quantity minimized in (\ref{e:optProb}) to increase.
	
	To prove (\ref{e:non-increasing}), note that $j \notin S$ and $i \in S$ implies that $(\b_i - \a_i) \loleq (\b_j - \a_j)$. We now apply the translation-invariance of $\groupdef{\intvecs}{+}{\lgeq}$, which states that for any $x,y,z \in \intvecs$, $x \lleq y \implies z + x\lleq z + y$. Let $x = (\b_i - \a_i)$, $y = (\b_j - \a_j)$, and let $z = \sum_{k \in (S^\ast \setminus {i})} (\b_k - \a_k)$. This yields
	$$\sum_{k \in (S^\ast \setminus \{i\})} (\b_k - \a_k) + (\b_i - \a_i) \lleq \sum_{k \in (S^\ast \setminus \{i\})} (\b_k - \a_k) + (\b_j - \a_j).$$
	But since $S^\ast \setminus \{i\} = S' \setminus \{j\}$, we have that 
	\begin{align*}
	\sum_{k \in (S^\ast \setminus \{i\})} (\b_k - \a_k) + (\b_i - \a_i) &\lleq \sum_{k \in (S' \setminus \{j\})} (\b_k - \a_k) + (\b_j - \a_j)\\
	\sum_{k \in S^\ast} (\b_k - \a_k) &\lleq \sum_{k \in S'} (\b_k - \a_k)
	\end{align*}
	Thereby proving (\ref{e:non-increasing}). This shows that any solution which cannot be represented by $S$ can be modified to swap in an extra member of $S$ without increasing the quantity minimized by (\ref{e:optProb}). By induction, it is therefore possible to include every element of $S - S'$ until $S'$ is transformed into $S$. Therefore, $\langle x_1, ..., x_{|\unfilled|} \rangle$ is an optimal solution to (\ref{e:optProb}). \qed
\end{proof}

The required greedy solution can be quickly formed by first selecting the \punfill child having the $(\nHvy)^{th}$ largest value of $\b_i - \a_i$ using linear-time selection. Thereafter, the partition procedure from quicksort can be used to find those children having values below this selected child. For clarity of notation, we assume from here on that the \punfill children $\children{1}{|\unfilled|}$ are sorted in increasing order of $\b_i - \a_i$, even though \emph{the algorithm performs no such sorting}.

At the end of the combine phase, we compute and return the sum
\begin{equation}\label{e:combine-sum}
\sum_{i \leq \nHvy} \b_i + \sum_{i > \nHvy} \a_i + \sum_{\child{j} \in F} \ff{\Csub{\child{j}}} + \alpf{\minRepOn{u}},
\end{equation}
where $\alpf{k}$ is a vector of size $\repfact+1$ with a one in the ${k}^{th}$ entry, and zeroes everywhere else. Since node $u$ has mass $\minRepOn{u}$, the $\alpf{\minRepOn{u}}$ term accounts for node $u$'s contribution to $\ff{\P}$. Thus equation (\ref{e:combine-sum}) gives the value of an optimal placement of $\minRepOn{u}$ replicas placed on node $u$.

\begin{algorithm}[t]
	\caption{An $O(n \repfact)$ algorithm for optimal single-block placement.}\label{a:placeReplicas}
	\SetKwFunction{placeReplicas}{Place-Replicas}
	\SetKwFunction{getFilled}{Label-Children}
	\SetKwFunction{partition}{Partition}
	\SetKwFunction{fill}{Filled-Value}
	\SetKwProg{Fn}{Function}{begin}{end}
	let $\partition(S, k)$ partition $S$ into sets $L, H$, where $L$ contains the $k$ smallest elements, and $H$ contains the remaining $|S| - k$ elements\;
	\Fn{\placeReplicas{$u, r$}}{
		let $\children{1}{\nChildren}$ be children of $u$ \tcp*{Divide phase}
		$U, F \gets$ \getFilled{$\{\children{1}{\nChildren}\}, r$} \tcp*{$O(t)$ time}
		$R \gets \repfact - \sum_{\child{i} \in F} |\Csub{\child{i}}|$\;
		
		\For(\tcp*[f]{Combine phase}){$\child{i} \in U$} { 
			$\a_i \gets $\placeReplicas{$\child{i}, \smallRep$}\label{l:recurseBegin}\;
			$\b_i \gets $\placeReplicas($\child{i}, \bigRep$)\label{l:recurseEnd}\;
		}
		$L, H \gets$ \partition{$\{\b_1-\a_1,...,\b_{|U|} - \a_{|U|}\}, \nHvy$} \tcp*{$O(|U|\repfact)$ time }
		\Return $\sum_{\child{i} \in L} \b_i + \sum_{\child{i}\in H} \a_i + \sum_{\child{j}\in F} \ff{\Csub{\child{j}}} + \alpf{\repfact}$ \tcp*{$O(t\repfact)$ time}
	}
	
\end{algorithm}

Pseudocode for the entire algorithm appears in \autoref{a:placeReplicas}. Implementing this procedure directly yields an $O(\n\repfact)$ time algorithm, where $n$ is the number of nodes in the tree, and $\repfact$ is the number of replicas to be placed. In the next section, we describe several improvements which are used to achieve a running time of $O(n + \repfact \log \repfact)$.

\subsubsection{Transforming to Achieve $O(\n + \repfact \log \repfact)$ Time}\label{s:rhologrho}

First, observe that the maximum failure number returned from child $\child{i}$ is $\minRepOn{\child{i}} + 1$. This, along with the property that every node's failure number is greater than or equal to that of its descendants, implies that the vector returned from $\child{i}$ will have a zero in indices $0,...,\repfact - \minRepOn{\child{i}} - 2$. To avoid wasting time, we modify the algorithm to return only the non-zero suffix of this vector, which has length at most $O(\minRepOn{\child{i}})$. At each node, we can compute (\ref{e:combine-sum}) by summing the entries of the vector in decreasing order of their index, and skipping zero entries. Specifically, to compute $\vec{v}_1 + ... + \vec{v}_\nChildren$, we first allocate an empty vector $\vec{w}$ of size $\minRepOn{u} + 1$, to store the result of the sum. For each vector $\vec{v}_i$, we set $\vec{w}[j] \gets \vec{w}[j] + \vec{v}_i[j]$ for indices $j$ from indices $\repfact - \minRepOn{\child{i}}$ up to $\repfact$. After all the vectors have been processed, $\vec{w} = \vec{v}_1 + ... + \vec{v}_\nChildren$. This algorithm takes only $O(\minRepOn{1}) + ... + O(\minRepOn{t}) = O(\minRepOn{u})$ time to compute a single sum. Using smaller vectors also implies that the $(\nHvy)^{th}$ best child can be found in $O(\minRepOn{u})$ time, since each \punfill child returns a vector of size at most $O(\minRepOn{u}/|\unfilled|)$, each comparison need take no more than $O(\minRepOn{u}/|\unfilled|)$ time. Since there are only $|\unfilled|$ children to compare, we obtain $O(\minRepOn{u})$ time in total for linear-time selection when using these sparse vectors. With these modifications, the entire combine phase takes only $O(\minRepOn{u})$ time at every node $u$. We will collectively refer to the techniques presented in this paragraph as \emph{prefix truncation} in later sections.

To bring down the running time of the combine phase, note that in any placement, nodes at the same depth have $\repfact$ replicas placed on them in total. We can therefore achieve an $O(\repfact \log \repfact)$ time combine phase overall by ensuring that the combine phase only needs to occur in at most $O(\log \repfact)$ levels of the tree. To do this, observe that when $\minRepOn{u} = 0$, any leaf with minimum depth forms an optimal placement of size $1$. Moreover, we can easily construct pointers from each node to its minimum depth leaf during an $O(n)$ time preprocessing phase. Therefore, the combine phase does not need to be executed once $\minRepOn{u} = 0$. To ensure that there are only $O(\log \repfact)$ levels, we transform the tree to guarantee that as the combine phase proceeds down the tree, $\minRepOn{u}$ decreases by at least a factor of two at each level. The balancing property ensures that this will automatically occur when there are two or more \punfill children at each node. However this is not guaranteed when the tree contains what we term a \emph{degenerate chain}, a path of nodes each of which only have a single \punfill child. By removing degenerate chains, we can obtain an $O(\repfact \log \repfact)$ combine phase.

\begin{figure}[tb]
	\subcaptionbox{A degenerate chain.\label{f:degenerate-unfilled-case}}[0.6\textwidth]
	{
		\scalebox{1.0}{\begin{tikzpicture}[scale=0.525]

			\node[inner sep=0]
			{\includegraphics[scale=0.45]{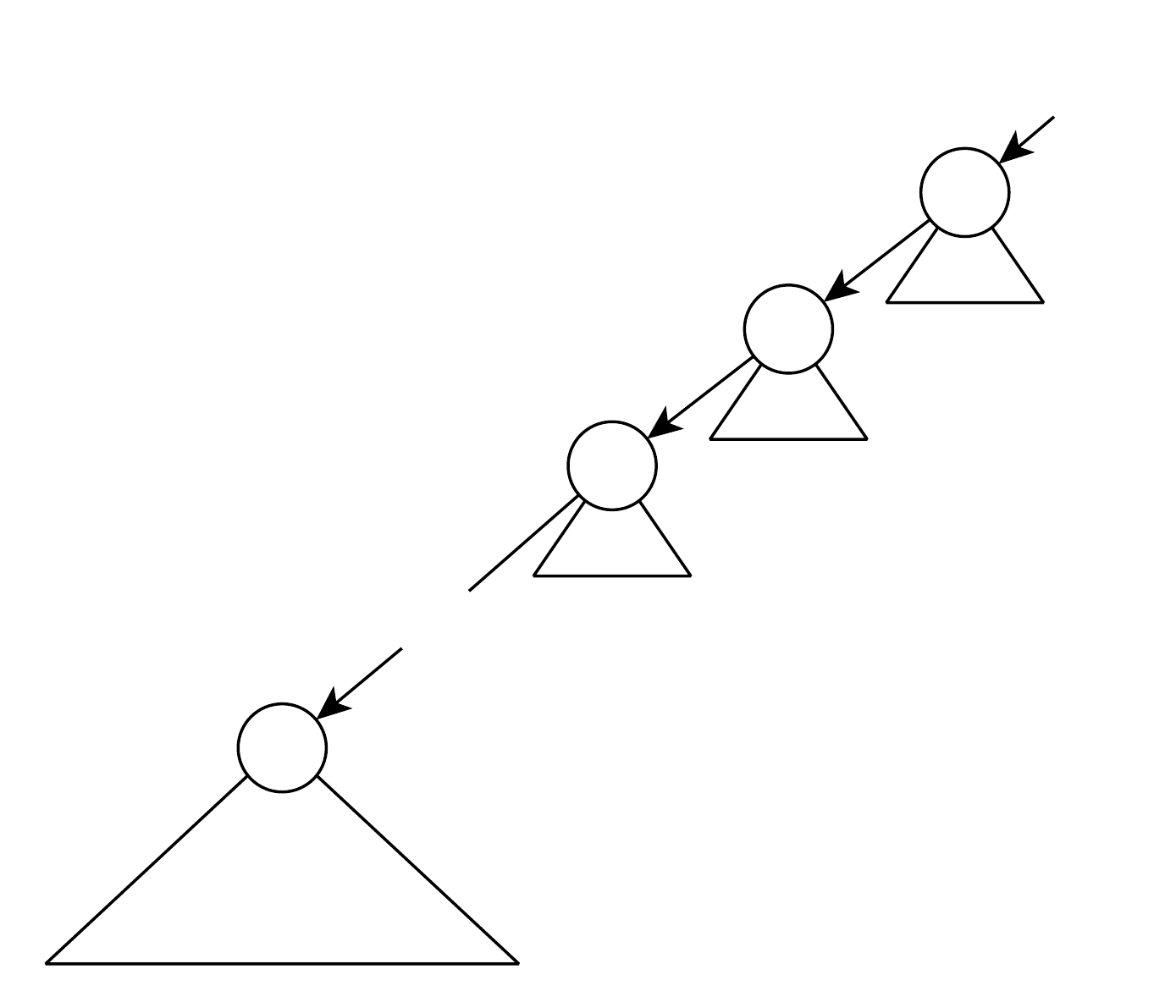}};

			\draw (-1.75, -1.1) node[circle, fill=white] {$\iddots$};
			\draw[decorate, decoration={brace, amplitude=5pt}] (-4, -2) -- (4.0, 4.3) node [midway, xshift=-8, yshift=8, rotate=37] {length $O(\repfact)$};
			\draw (-3.15, -4) node {\small $T_k$};
			\draw (-3.15, -2.55) node {\small $v_k$};
			\draw (0.35, -0.45) node {\small $T_3$};
			\draw (0.3, 0.4) node {\small $v_3$};
			\draw (2.1, 1.75) node {\small $v_2$};
			\draw (2.1, 1) node {\small $T_2$};
			\draw (4, 2.4) node {\small $T_1$};
			\draw (3.9, 3.1) node {\small $v_1$};
			\draw(4, 1.8) node {\tiny $O(1)$ leaves};
			\draw(2.15, 0.4) node {\tiny $O(1)$ leaves};
			\draw(0.2, -1.1) node {\tiny $O(1)$ leaves};
			
			\end{tikzpicture}
			
		}
	}
	\subcaptionbox{Contracted pseudonode.\label{f:contracted-nodes}}[0.35\textwidth]
	{
		\scalebox{1.0}{\begin{tikzpicture}[scale=0.525]
			\centering
			\node[inner sep=0]
			{\includegraphics[scale=0.5]{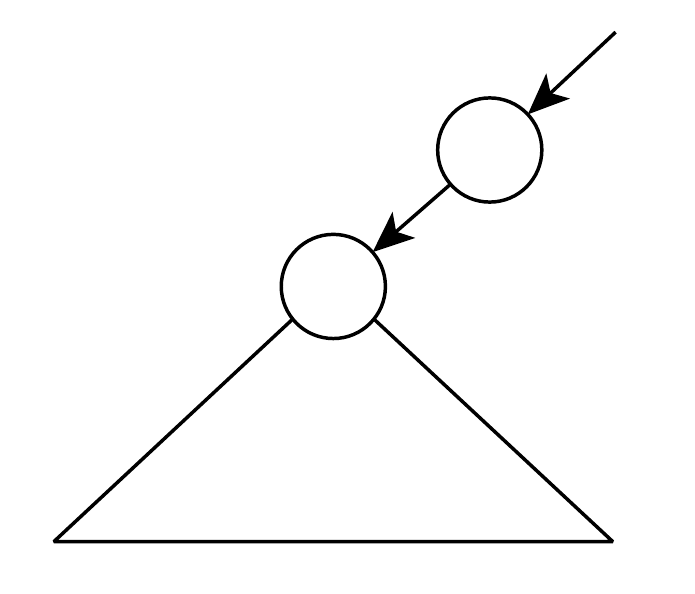}};

			\draw (-0.125, -1.5) node[align=center] {$T_k$};
			\draw (-0.125, 0) node[align=center] {$v_k$};
			
			\draw (1.35, 1.35) node[align=center] {$w$};
			

			\end{tikzpicture}
			
		}
	}
	\caption{Illustration of a degenerate chain.}
\end{figure}

\autoref{f:degenerate-unfilled-case} illustrates a degenerate chain of length $k$. In this figure, for all $i \in \{1,...,k-1\}$, all nodes in $T_i$ are \dfill. Moreover, node $v_k$ has at least two \punfill children. Thus, for $i\in \{1,...,k-1\}$, each $v_i$ has only a single \punfill child, namely $v_{i+1}$. It is easy to see that if the number of leaves in each set $T_i$ is constant with respect to $\repfact$, then the length of the chain can be as large as $O(\repfact)$. This would imply that there can be $O(\repfact)$ levels in the tree where the entire combine phase is required. To remove degenerate chains, we contract nodes $v_1, ..., v_{k-1}$ into a single pseudonode, $w$, as in \autoref{f:contracted-nodes}. However, we must take care to ensure that the values returned by pseudonode $w$ take into account contributions from the entire contracted structure. We will continue to use $v_i$ and $T_i$ throughout this section to refer to nodes in a degenerate chain. The remainder of this section treats the removal of degenerate chains at a high level. Interested readers can find detailed pseudocode in \autoref{app:transform-pseudo}.

Let $\vec{a_w}, \vec{b_w}$ be the values returned by pseudonode $w$. In order for the transformation to be correct, we need to ensure that these values are the same as those which would have been returned had no transformation been performed. To ensure this, we must consider and include the contribution of each node in the set $S_w = T_1 \cup ... \cup T_{k-1} \cup \{v_1, ..., v_k\}$. It is easy to see that the failure numbers of $v_1, ..., v_{k-1}$ depend only upon whether $\minRepOn{v_k}$ or $\minRepOn{v_k} + 1$ replicas are placed on node $v_k$, while the \punfill nodes in sets $T_1,..., T_{k-1}$ have no such dependency. Since the value of $\minRepOn{v_k}$ is only available at the end of the divide phase, we detect and contract degenerate chains immediately afterwards.

The transform phase runs as a breadth first search. Detecting degenerate chains is relatively straightforward, but careful memory management is required to keep the running time of this phase below $O(\n + \repfact \log \repfact)$. When contracting a degenerate chain, we must sum each \dfill  node's contribution to the vectors $\vec{a_w}$ and $\vec{b_w}$. Intermediate values of this sum must be stored in an array of size $O(\minRepOn{v_1})$ as we contract the chain. The key to achieving an $O(\n + \repfact \log \repfact)$ transform phase lies in allocating space for this array only \emph{once} for each chain. Taking care in this way allows us to bring the running time for contracting a degenerate chain of size $|S_w|$ down to $O(|S_w| + \minRepOn{v_1})$.

When we sum this expression over all degenerate chains, we obtain a running time of $O(n + \repfact \log \repfact)$ for the transform phase. To reach this result, examine the sum over values of $\minRepOn{v_1}$ for all \emph{pseudonodes} having the same depth. Since there are at most $\repfact$ replicas among such pseudonodes, this sum can be at most $O(\repfact)$ at any depth. After the degenerate chains have been contracted, there are only $O(\log \repfact)$ levels where $\minRepOn{u} > 1$. Thus, pseudonodes can be only be present in the first $O(\log \repfact)$ levels of the final tree. Therefore the $O(\minRepOn{v_1})$ term sums to $O(\repfact \log \repfact)$ overall. Since the $O(|S_w|)$ term clearly sums to $O(n)$ overall, the transform phase takes at most $O(n + \repfact \log \repfact)$ time.

Including the transform phase implies that there are only $O(\log \repfact)$ levels where the combine phase needs to be run in its entirety. Therefore, the combine phase takes $O(\repfact \log \repfact)$ time overall. When combined with the $O(n)$ divide phase and the $O(n + \repfact \log \repfact$) transform phase, this yields an $O(n + \repfact \log \repfact)$ algorithm for solving replica placement in a tree.

\section{Solving Multi-block Replica Placement}\label{s:mp-balancing}

In this section, we describe an exact algorithm to optimize the simultaneous placement of multiple blocks of replicas at once (i.e. \autoref{p:graph-multi-placement}). As previously discussed, this problem naturally occurs in data centers, where multiple sets of replicas co-exist. Recall that in \autoref{model} we defined a \emph{multi-placement} $\MP$ as an ordered set of $\m$ placements, $\MP \defined (\P_1, ..., \P_m)$. 
Recall also that in the multi-placement case, each leaf node has a capacity of $c(\ell)$, the maximum number of replicas which it can accommodate. Moreover, no leaf node will accept more than one replica from any given single placement (or block), as this would in some sense defeat the purpose of replication.
We further extended the failure aggregate to multi-placements by defining the failure aggregate of a multi-placement as the sum of the failure aggregates of the individual placements (padded as needed).

In the single-placement case, it is easy to see how the failure aggregate of a placement is comprised of local contributions from each node. Each node contributes a factor to the objective based upon its failure number. In the multi-placement case, each node's local contribution to the failure aggregate is not as clear. To clearly state the contribution of each node to the failure aggregate of a multi-placement, we will introduce the concept of the \emph{\signature} of a multi-placement, and the concept of a \emph{\subplacement} of a multi-placement.

Each multi-placement has an associated \emph{\signature}, a vector which summarizes the number of replicas assigned to each block. The $k^{th}$ component of the signature counts the number of blocks which have $k$ replicas assigned to them. Formally, the signature of multi-placement $\MP$ is a vector $\vec{\sigma}(\MP) = \langle n_0, ..., n_\repfact\rangle$, where $\repfact $ is the size of the largest placement, and $n_k$ is the number of placements in $\MP$ which have size $\repfact-k$, (i.e. \sloppy\mbox{$n_k = \left| \{i : [1,...,m] : |P_i| = \repfact-k\} \right|$}). 

As an example application of this concept, recall that the input to an instance of
\autoref{p:graph-multi-placement} specifies a series of $m$ integers, $\repfact_1, ..., \repfact_m$, where $\repfact_i$ is the number of replicas placed on block $i$ in a valid solution. It is easy to see that these integers uniquely specify the \signature which a valid solution is allowed to have. For example, if $m=5$ and \mbox{$(\repfact_1, \repfact_2, \repfact_3, \repfact_4, \repfact_5) = (1,2,2,3,3)$} in an instance of \autoref{p:graph-multi-placement}, then only multi-placements with \signature $\vec{\sigma}(\MP) = \langle 2, 2, 1, 0\rangle$ are valid solutions. Observe also that \emph{any} multi-placement with a \signature of $\langle 2,2,1,0 \rangle$ can be made to satisfy the requirement that $(\repfact_1, \repfact_2, \repfact_3, \repfact_4, \repfact_5) = (1,2,2,3,3)$ simply by relabeling the blocks appropriately.

As a second example, observe that the signature of a multi-placement summarizes the failure numbers associated with the root node. For instance, if $\vec{\sigma}(\MP) = \langle 2, 2, 1,0\rangle$, then there is one block of $\MP$ for which the root has failure number $1$, two blocks for which the root has failure number $2$, and two blocks for which the root has failure number $3$.

This last observation hints at the importance of signatures which we wish to convey. The signature of a multi-placement accumulates
the failure numbers of the root node across all $\m$ blocks of the multi-placement into an alternate vector form. This vector form has certain advantages since its number of non-zero entries depends on the skew in the desired multi-placement.
Using the concept of a \emph{\subplacement}, we can similarly collect the failure numbers of \emph{any node in the tree}. For any given multi-placement the \emph{\subplacement} at node $u$ consists only of the replicas of the multi-placement which are assigned to leaves of the subtree rooted at node $u$. More formally, if $\Csub{u}$ is the set of leaves assigned to node $u$, and $\MP= (P_1, P_2, ...,P_m)$, then $\MP_u = (P_1 \cap \Csub{u}, P_2\cap \Csub{u}, ..., P_\m \cap \Csub{u})$. To illustrate this concept, refer to the tree in \autoref{f:sub-placement-ex}. The multi-placement $\MP = (\{a,b,c\}, \{b,d,e\}, \{b,c,e\})$ has a \subplacement at node $u$ given by $\MP_u = (\{a,b,c\}, \{b\}, \{b\})$. Notice that the signature of $\MP_u$ is given by \mbox{$\vec{\sigma}(\MP_u) = \langle 1, 0, 2, 0 \rangle$}, and furthermore, this collects the failure numbers of $u$ \emph{with respect to the original multi-placement $\MP$} into a convenient form which matches that of the failure aggregate. Specifically, node $u$ has failure number 3 with respect to one block of $\MP$ (block 1), and a failure number of 1 with respect to two blocks of $\MP$ (blocks 2 and 3).

\begin{figure}
	\centering
\begin{tikzpicture}[parent anchor=center,child anchor=center,every node/.style={draw, minimum height=6mm}]
\node[circle] (root) {}
[level distance=12.5mm,sibling distance=15mm]
child {node[circle] (u) {$u$} 
	[sibling distance=7mm]
	child{node[circle] {} 
		[sibling distance=7.5mm]
		child{node[rectangle] {$a$} }
		child{node[rectangle] {$b$} }
	}
	child{node[rectangle] {$c$} }
}
child {node[circle] (anchor) {$w$}
	child{node[circle] {}
		[sibling distance=7.5mm] 
		child{node[rectangle] {$d$} }
		child{node[rectangle] {$e$} }
	}
};
\node[draw=none, right=1cm of anchor,yshift=-5mm]{
	\normalsize
	$\begin{array}{lllll}
	\MP &= (~\{a,b,c\},& \{b,d,e\}, &\{b,d\}&)\vspace{2mm}\\
	\MP_u &= (~\{a,b,d\},& \{b\}, &\{b\}  &)\vspace{2mm}\\
	\MP_w &= (~\{\}, &\{d,e\}, &\{e\} &)
	\end{array}$};
\end{tikzpicture}
	\caption{Multi-placement $\MP$ and two of its \subplacement{s}, $\MP_u$ and $\MP_w$.} \label{f:sub-placement-ex}
\end{figure}

This suggests a way to rewrite the failure aggregate of a multi-placement into a more convenient form in which each node's individual contribution to the overall failure aggregate is made transparent. Specifically, the failure aggregate of a multi-placement is just the sum of the signatures of all of its \subplacement{s}, as formalized in the following lemma.

\begin{lemma}\label{lem:signature-solution}
	For any $u \in V$, let $\MP_u$ be the \subplacement of $\MP$ at node $u$, then
	\[\g{\MP} = \sum_{u \in V} \sig{\MP_u}. \]
\end{lemma}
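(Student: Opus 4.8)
The plan is to prove the claimed vector identity one coordinate at a time, reducing everything to a single double-counting argument. Throughout, I fix $\repfact$ to be the girth of $\MP$, so that every failure aggregate $\ff{\P_i}$ and every signature $\sig{\MP_u}$ is regarded as a vector in $\mathbb{N}^{\repfact+1}$ (signatures of \subplacement{s} are left-padded with zeroes to this common length, exactly as in the worked example preceding the lemma). The one identity that does all the work is
\[ |\P_i \cap \Csub{u}| = \fnum{u}{\P_i}, \]
which is immediate from \autoref{def-failure}: the leaves of $\P_i$ reachable from $u$ are precisely the elements of $\P_i$ lying in $\Csub{u}$, the set of descendant leaves of $u$.

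First I would fix a coordinate $k \in \{0, \ldots, \repfact\}$ and introduce the pair-count
\[ N_k \defined \left|\left\{ (u,i) \in \V \times \{1,\ldots,m\} ~:~ \fnum{u}{\P_i} = \repfact - k \right\}\right|, \]
and show that the $k$-th component of each side of the lemma equals $N_k$. For the left-hand side, unfolding $\g{\MP} = \sum_{i=1}^m \ff{\P_i}$ together with the definition of the failure aggregate shows that its $k$-th component is $\sum_{i=1}^m |\{ e \in \F \cup \C : \fnum{e}{\P_i} = \repfact - k\}|$. Since $\F \cup \C = \V$ (every vertex is internal or a leaf), this is exactly $N_k$ with the pairs grouped by their second coordinate $i$. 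For the right-hand side, the definition of the signature gives that the $k$-th component of $\sum_{u \in \V} \sig{\MP_u}$ is $\sum_{u \in \V} |\{ i : |\P_i \cap \Csub{u}| = \repfact - k \}|$; applying the key identity turns this into $\sum_{u \in \V} |\{ i : \fnum{u}{\P_i} = \repfact - k\}|$, which is $N_k$ with the pairs grouped by their first coordinate $u$. Both sides agree with $N_k$ for every $k$, so the vectors are equal.

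I do not expect a genuine technical obstacle here; the argument is a reindexing of a finite double sum, so the ``hard part'' is bookkeeping rather than mathematics. The two points that require care are verifying that the girth/padding convention makes all vectors live in the same space $\mathbb{N}^{\repfact+1}$ so that $\sum_{u \in \V} \sig{\MP_u}$ is well-defined (in particular, a \subplacement $\MP_u$ whose largest block is strictly smaller than $\repfact$ must still contribute a length-$(\repfact+1)$ signature), and confirming $\F \cup \C = \V$ so that the node set summed over in the failure aggregate coincides with the vertex set $\V$ summed over on the right. Once $N_k$ is in place, the equality falls out of swapping the order in which the pairs $(u,i)$ are enumerated.
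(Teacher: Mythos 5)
Your proof is correct and follows essentially the same route as the paper's: both arguments double-count the pairs $(u,i)$ with $\fnum{u}{\P_i} = \repfact - k$, once grouped by block (giving the failure aggregate side) and once grouped by node (giving the signature side). Your explicit pair-count $N_k$ and the identity $|\P_i \cap \Csub{u}| = \fnum{u}{\P_i}$ merely formalize what the paper's proof asserts in prose.
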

\begin{proof}
	Let $\sig{\MP_u} = \langle n^u_0, ...,n^u_\repfact \rangle$, then $n^u_j$ counts the number of placements of $\MP$ in which node $u$ has failure number $\repfact-j$. Furthermore, let $\ff{\P_i} = \langle p^i_0, ...,p^i_\repfact \rangle$, then $p^i_j$ counts the number of nodes which have failure number $\repfact-j$ with respect to placement $P_i$. We first show that \[\sum_{u\in V} n_j^u = \sum_{i=1}^m p^i_j\] for any failure number $\repfact-j$. This correspondence is easy to see as follows. Each node $u$ contributes a factor of 1 for each placement in which $u$ has failure number $\repfact-j$ on \emph{both} sides of the correspondence. On the RHS, node $u$ is counted as one of the nodes which has failure number $\repfact-j$ with respect to block $i$, so $u$ contributes a factor of 1 to the term $p^i_j$. On the LHS, node $u$ contributes a factor of 1 to the term $n_j^u$, since block $i$ is one of the blocks with respect to which $u$ has failure number of $\repfact-j$. Clearly, for every factor of 1 contributed on the LHS another factor of 1 must be contributed on the RHS, thus the two sums are equal. Hence,
	\begin{multline*}\sum_{u \in V}\sig{\mathcal{P}_u} = \left\langle \sum_{u\in V} n_0^u, \sum_{u\in V} n_1^u, ..., \sum_{u \in V} n_\repfact^u \right\rangle =
	\\ \left\langle \sum_{i=1}^m p^i_0, \sum_{i=1}^m p^i_1, ..., \sum_{i=1}^m p^i_\repfact \right\rangle =  \sum_{i = 1}^m \ff{P_i} = \g{\mathcal{P}},
	\end{multline*}\qed
\end{proof}

Each node's contribution to the failure aggregate of a multi-placement is thus clear. Node $u$ contributes the signature of its \subplacement, $\sig{\MP_u}$, to the overall value of $\g{\MP}$. We can therefore optimize $\g{\MP}$ by locally optimizing values of $\sig{\MP_u}$. We can do so directly via a dynamic program as follows. At each node $u$ we compute and store $G_u(\vec{\sigma})$, the optimal value which can be attained by any \subplacement at node $u$ which has a signature of $\vec{\sigma}$. We show how a table for $G_u(\vec{\sigma})$ can be recursively computed in \autoref{s:mp-dp}. Since there are roughly $O(m^{\repfact+1})$ possible signatures for which the value of $\G_u$ must be computed, it is not immediately clear that such an approach will be tractable. As we shall see, if we are given a signature and we want to find an optimal multi-placement which has that signature, we can achieve a significant reduction in running time.

First, in the absence of an associated multi-placement, a signature is just a vector $\vec{\sigma} = \langle\sigma_0, ..., \sigma_\repfact \rangle \in \mathbb{N}^{\repfact + 1}$. 
Recall that we defined the \emph{skew} of a multi-placement as the absolute difference between the maximum and minimum failure numbers of each block. This definition extends readily to signatures. Specifically, we define the skew of a signature to be the difference between the indices of its maximum and minimum non-zero entries, formally
\[skew(\langle\sigma_0, ..., \sigma_\repfact \rangle) = \max_{\sigma_i \neq 0} i - \min_{\sigma_j \neq 0} j.\] If the signature consists only of zeroes, we define $skew(\langle 0, ..., 0\rangle) = 0.$ Obviously, the skew of a multi-placement and the skew of its signature are equivalent. Likewise, we define the \emph{girth} of a signature as the maximum index $i$ for which $\sigma_i$ is non-zero.

Our key insight is that \emph{in order to find an optimal multi-placement which has skew $\delta$ we only need to compute values of $G_u$ for signatures which have skew $\delta$.} As there are roughly $O(m^\delta)$ such signatures, we can obtain an algorithm which works well when the skew of the desired multi-placement is small. Specifically, we provide an exact algorithm which runs in polynomial time for fixed values of $\delta$. Since $\delta$ is typically small in practice, this comprises a significant speed-up over the brute-force approach.

The remainder of this section is organized as follows. 
In \autoref{s:mp-theory} we prove that it suffices to recursively consider signatures with skew no greater than $\delta$. 
In \autoref{s:mp-algorithm} we present an exact algorithm for finding an optimal multi-placement based on the above property.

\subsection{The Bounded Skew Property for Optimal Multi-placements}\label{s:mp-theory}

We begin by defining an \emph{exchange} of a multi-placement, which is simply a multi-placement that can be formed by rearranging the assignment of replicas among two blocks.
\begin{definition}
	An \emph{exchange} of a multi-placement $\MP = (P_1, ..., P_m)$ is a multi-placement $\mathcal{Q} = (Q_1,...,Q_m)$ in which
	\begin{enumerate}[a)]
		\item $\bigcup_i P_i = \bigcup_i Q_i$
		\item $|P_i| = |Q_i|$ for all $i$,
		\item there exist indices $i,j$ such that for every $k$ different from $i$ and $j$, $P_k = Q_k$.
	\end{enumerate}
	Blocks $i$ and $j$ are referred to as the \emph{targets} of the exchange, and we say that the exchange \emph{targets blocks $i$ and $j$}.
\end{definition}
Notice that part (b) implies that the signature of an exchange matches that of the original multi-placement.

We define a \emph{localized exchange} as an exchange which involves only \subplacement{s} of sibling nodes $u$ and $v$.
\begin{definition}
	A \emph{localized exchange} of a multi-placement $\MP$ is an exchange, say $\mathcal{Q}$ targeting blocks $i$ and $j$, for which there exist sibling nodes $u$ and $v$ such that for all choices of a node $w$ which is not a an ancestor or descendant of $u$ or $v$, the \subplacement{s} of $\MP$ and $\mathcal{Q}$ at node $w$ are the same, (i.e. $\MP_w = \mathcal{Q}_w$). Moreover, we refer to nodes $u$ and $v$ as the \emph{carriers} of the localized exchange.
\end{definition}
Refer to \autoref{f:implocex-example} for an example of a localized exchange and a few non-examples.

\begin{figure}
	\centering
\begin{tikzpicture}[parent anchor=center,child anchor=center,every node/.style={draw, minimum height=6mm}]
\node[circle] (root) {}
[level distance=15mm,sibling distance=25mm]
child { node[circle] (a) {$a$}
	[sibling distance=7.5mm]
	child{ node[rectangle] {$1$} }
	child{ node[rectangle] {$2$} }		
}
child {node[circle] (u) {} 
	[sibling distance=20mm]
	child {node[circle] (b) {$b$}
		[sibling distance=7mm]
		child{node[circle] {}
			[sibling distance=7.5mm]
			child{ node[rectangle]{$3$}}
			child{ node[rectangle]{$4$}}
		}
		child{node[rectangle]{$5$}}
		child{node[rectangle]{$6$}}
	}
	child {node[circle] (c) {$c$}
		[sibling distance=7.5mm]
		child{node[circle] {}
			[sibling distance=7.5mm]
			child{ node[rectangle] {$7$}}
			child{node[rectangle] {$8$}	}
		}
		child{node[rectangle] {$9$}}
	}
};
\node[draw=none, right=1cm of c]{
	$\begin{array}{ll}
	\MP &= \left(\{1,4,6\}, \{2,3,6,8\}\right)\\
	\mathcal{Q} &= \left(\{1,8,6\}, \{2,3,6,4\}\right)\vspace{1mm}\\
	\multicolumn{2}{l}{\tworowcell{\text{$\mathcal{Q}$ is a localized exchange of $\MP$ with carriers} \\\text{$b$ and $c$ which targets blocks $1$ and $2$.}}}\vspace{1mm}\\
	\MP &= \left(\{1,4,6\}, \{2,3,6,8\}\right)\\
	\mathcal{Q}' &= \left(\{8,4,6\}, \{2,3,6,1\}  \right)\vspace{1mm}\\
	\multicolumn{2}{l}{\tworowcell{\text{$\mathcal{Q}'$ is an exchange of $\MP$, but is not localized,} \\\text{since nodes $a$ and $c$ are not siblings.}}}\vspace{1mm}\\
	\mathcal{Q}'' &= \left(\{1,4\}, \{2,3,7,8\}  \right)\vspace{1mm}\\
	\multicolumn{2}{l}{\tworowcell{\text{$\mathcal{Q}''$ is not an exchange of $\MP$, since both} \\\text{properties (a) and (b) are violated.}}}\\
	\end{array}$};
\end{tikzpicture}
	\caption{An example and non-example of a localized exchange ($\mathcal{Q}$ and $\mathcal{Q}'$ respectively), and a non-example of an exchange ($\mathcal{Q}''$).}\label{f:implocex-example}
\end{figure}

Finally, we define an \emph{improving localized exchange} as a localized exchange $\mathcal{Q}$ in which the objective function improves over that of $\MP$. Because of the constraints imposed on a localized exchange, we only need to check four failure numbers to determine that the objective function has improved. We build this check into the definition below.
\begin{definition}
	An \emph{improving localized exchange} of $\MP$ is a localized exchange $\mathcal{Q}$ with targets $i$ and $j$ and carriers $u$ and $v$ in which 
	\begin{align}\label{e:implocex}\max \left[\fnum{u}{\P_i}, \fnum{u}{\P_j} \right] &>  \max \left[ \fnum{u}{Q_i}, \fnum{u}{Q_j}\right],
	\intertext{and either}
	\label{e:implocex2}
	\max \left[\fnum{v}{\P_i}, \fnum{v}{\P_j} \right] &> \max \left[ \fnum{v}{Q_i}, \fnum{v}{Q_j}\right],
	\intertext{or, in the \subplacement{s} of $\MP$ and $\mathcal{Q}$ at node $v$, denoted by \mbox{$\MP_v = (\P^v_1,...,P^v_m)$}, and $\mathcal{Q}_v = (Q^v_1,...,Q^v_m)$ respectively,}
	\label{e:implocex3}
	P^v_i =  Q^v_j &\text {  and  } P^v_j = Q^v_i.
	\end{align}
\end{definition}
Note that \eqref{e:implocex} holds in every improving localized exchange, while only one of \eqref{e:implocex2} or \eqref{e:implocex3} is required to hold. To justify calling these exchanges \emph{improving}, we must prove that if an improving localized exchange exists, it constitutes a strictly better placement, which claim we state as the following lemma.

\begin{lemma}\label{lem:implocex}
	If a multi-placement $\MP$ has an improving localized exchange $\mathcal{Q}$, then $\g{\MP} \lgt \g{\mathcal{Q}}$.
\end{lemma}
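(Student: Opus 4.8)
The plan is to invoke \autoref{lem:signature-solution} to write the gap as a single sum over nodes,
$\g{\MP} - \g{\mathcal{Q}} = \sum_{w \in V}\big(\sig{\MP_w} - \sig{\mathcal{Q}_w}\big)$,
and then argue this sum is lexicographically positive. First I would pin down which nodes actually contribute. Since $u$ and $v$ are siblings they share exactly the same ancestors, so by the definition of a localized exchange the only nodes whose \subplacement can differ are $u$, $v$, and their descendants; everything else cancels. For a common ancestor $w$ (so $\Csub{u}\cup\Csub{v}\subseteq\Csub{w}$) I would show the contribution still cancels: any leaf outside $\Csub{u}\cup\Csub{v}$ is neither an ancestor nor a descendant of $u$ or $v$, so the localized property forces $\P_k$ and $Q_k$ to agree there, and property (b) of an exchange ($|\P_k|=|Q_k|$) then forces $\fnum{w}{\P_k}=\fnum{w}{Q_k}$ for every block $k$, whence $\sig{\MP_w}=\sig{\mathcal{Q}_w}$. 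Thus only the two disjoint subtrees rooted at $u$ and $v$ matter, and since the exchange targets only blocks $i,j$, only those two blocks can change any failure number.

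Next I would exploit ancestor-monotonicity of failure numbers (a descendant's failure number never exceeds its ancestor's). Writing $a=\max[\fnum{u}{\P_i},\fnum{u}{\P_j}]$ and $a'=\max[\fnum{u}{Q_i},\fnum{u}{Q_j}]$, condition \eqref{e:implocex} gives $a'<a$, and monotonicity bounds every failure number of a descendant of $u$ (for blocks $i,j$) by $a$ in $\MP$ and by $a'<a$ in $\mathcal{Q}$. For the carrier $v$ there are two sub-cases: under \eqref{e:implocex3} the \subplacement at $v$ merely interchanges blocks $i$ and $j$, so every descendant of $v$ (including $v$) has its multiset of failure numbers preserved and contributes nothing; under \eqref{e:implocex2} the identical monotonicity bound holds at $v$ with $b=\max[\fnum{v}{\P_i},\fnum{v}{\P_j}]>b'=\max[\fnum{v}{Q_i},\fnum{v}{Q_j}]$.

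I would then compare the two failure aggregates level by level, recalling that index $\repfact-c$ of a signature counts blocks whose failure number equals $c$, and that the lexicographic order inspects the highest failure number (lowest index) first. Set $c=\max(a,b)$, or simply $c=a$ in case \eqref{e:implocex3}. The bounds above show that no contributing node reaches a failure number exceeding $c$ for blocks $i,j$ in either multi-placement, so $\g{\MP}$ and $\g{\mathcal{Q}}$ agree at every index below $\repfact-c$. At index $\repfact-c$ itself, every contributing node has failure number strictly below $c$ in $\mathcal{Q}$ (because $a',b'<c$), so $\mathcal{Q}$ contributes nothing there; meanwhile the node attaining the maximum (namely $u$ when $c=a$, or $v$ when $c=b$) has a target block of failure number exactly $c$ in $\MP$, contributing at least one. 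Hence the two vectors first differ at index $\repfact-c$, where $\g{\MP}$ is strictly larger, yielding $\g{\MP}\lgt\g{\mathcal{Q}}$.

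The hard part is the deep rearrangement inside the subtrees of $u$ and $v$: a single exchange can perturb the failure numbers of an unbounded number of descendants, so a direct term-by-term accounting looks hopeless. The decisive device that tames this is ancestor-monotonicity, which confines all of those uncontrolled descendant changes strictly below the critical level $c$ in $\mathcal{Q}$, letting the one guaranteed decrease at a carrier dominate the lexicographic comparison. A secondary point needing care is case \eqref{e:implocex3}, where I must verify that a wholesale swap of blocks $i$ and $j$ throughout $v$'s subtree leaves every descendant signature invariant, so that the net improvement is carried entirely by $u$.
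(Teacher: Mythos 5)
Your proposal is correct and follows essentially the same route as the paper's proof: localize all changes to the carriers' ancestors and descendants, cancel the ancestors by a replica-counting argument (the paper does this at the parent $x$ and propagates upward, you do it for all common ancestors at once), use ancestor-monotonicity to push every descendant's failure number strictly below the critical level in $\mathcal{Q}$, treat case \eqref{e:implocex3} by observing that $v$'s subtree merely permutes blocks $i$ and $j$, and conclude that the aggregates first differ at index $\repfact - c$ in favor of $\mathcal{Q}$. The only differences are cosmetic: you invoke \autoref{lem:signature-solution} to organize the node-by-node cancellation, and you take $c = \max(a,b)$ where the paper asserts w.l.o.g.\ that the maximum is attained at $u$ --- your version handles slightly more carefully the situation where only \eqref{e:implocex3} holds and $v$'s failure numbers exceed $u$'s, which the paper's w.l.o.g.\ glosses over.
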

\begin{proof}
	Let $\MP$ be a multi-placement with improving localized exchange $\mathcal{Q}$ with targets $i$ and $j$ and carriers $u$ and $v$. Clearly, the only nodes whose failure numbers could be different in $\mathcal{Q}$ and $\MP$ are nodes which are either ancestors or descendants of node $u$ or node $v$ (for concision, we consider nodes $u$ and $v$ to be descendants of themselves). Thus if $\g{\MP} \lgt \g{\mathcal{Q}}$, as claimed, it must be due to a difference in failure number(s) which occurs among these nodes.
	
	Without loss of generality, let $a = \fnum{u}{\P_i} = \max[ \fnum{u}{\P_i}, \fnum{u}{\P_j} ] = \max [ \fnum{u}{\P_i}, \fnum{u}{\P_j}, \fnum{v}{\P_i}, \fnum{v}{\P_j}].$
	
	First, we dispense with the descendants of nodes $u$ and $v$.
	
	Suppose that \eqref{e:implocex3} holds, then it is easy to see that $\g{\MP_v}$ and $\g{\mathcal{Q}_v}$ are equivalent, since
	\[\g{\MP_v} = \sum_{i=1}^m \ff{P_i^v}\stackrel{(*)}{=} \sum_{i=1}^m \ff{Q_i^v} = \g{\mathcal{Q}_v} \]
	where $(*)$ holds by \eqref{e:implocex3} and property (c) of an exchange. Thus, in this case, the overall contribution of descendants of $v$ to the objective value does not change, so they can be \emph{disregarded}. Moreover, since the failure number of every node is upper-bounded by the failure number(s) of its ancestor(s), for any node $w$ which is a descendant of $u$, we have that $\fnum{w}{Q_i} \leq a-1$ and $\fnum{w}{Q_j} \leq a-1$. Thus after making the exchange, the failure numbers of descendants of $u$ (with respect to blocks $i$ and $j$) can be at most $a-1$, and the descendants of $v$ can be disregarded.
	
	Suppose instead that \eqref{e:implocex2} holds. Then by combining \eqref{e:implocex} and \eqref{e:implocex2} we obtain, 
	\begin{multline*}
	\max \left[\fnum{u}{Q_i}, \fnum{u}{Q_j}, \fnum{v}{Q_i}, \fnum{v}{Q_j}  \right]  < \\
	\max \left[\fnum{u}{P_i}, \fnum{u}{P_j}, \fnum{v}{P_i}, \fnum{v}{P_j}  \right] = a.
	\end{multline*}
	Thus, for any node $w$ which is a descendant of $u$ or $v$, $\fnum{w}{Q_i} \leq a-1$, and $\fnum{w}{Q_j} \leq a-1$. And so after making the exchange, the failure numbers of descendants of $u$ and $v$ w.r.t. blocks $i$ and $j$ can be at most $a-1$.
	
	To summarize, each descendant of nodes $u$ and $v$ either has failure number at most $a-1$ with respect to blocks $i$ and $j$, or can be disregarded as a node at which no improvement can occur.
	
	Next, we consider the ancestors of $u$ and $v$. Recall that $u$ and $v$ are siblings, so let $x$ be the parent of $u$ and $v$. We show that the failure number of $x$ with respect to blocks $i$ and $j$ does not change as a result of the exchange, since
	\begin{align*}
	f(x,Q_i) &= f(u,Q_i) + f(v,Q_i) + \sum_{\substack{c \in children(x)\\ u \neq c \neq v}} \fnum{c}{Q_i}\\
	&= |Q^u_i \cup Q^v_i |  + \sum_c \fnum{c}{Q_i} ~~~ \textit{(by definition of failure number)}\\
	&= |P^u_i \cup P^v_i |  + \sum_c \fnum{c}{Q_i} ~~~ \textit{(by properties (b)-(c) of an exchange)}\\
	&= |P^u_i \cup P^v_i |  + \sum_c \fnum{c}{P_i} ~~~ \textit{(since $\mathcal{Q}$ is localized w}/ \textit{carriers $u$ and $v$)}\\       
	&= f(u,P_i) + f(u, P_i) + \sum_c \fnum{c}{P_i} = f(x,P_i),
	\end{align*}
	and the sum over $c$ is everywhere understood to be constrained as in the first line of the above derivation.
	We can similarly show that $f(x,Q_j) = f(x,P_j)$ replacing $i$ by $j$ in the above.
	Since the failure number of $x$ does not change as a result of the exchange, it is easy to see that the failure numbers of all ancestors of $u$ and $v$ likewise do not change. So, for all  $k \leq \repfact - a$, $\g{\MP}_k = \g{\mathcal{Q}}_k$. Moreover, by \eqref{e:implocex}, it is easy to see that $\g{\MP}_{\repfact - a} > \g{\mathcal{Q}}_{\repfact - a}$, since at least the failure number of $u$ strictly decreases as a result of the exchange, and $f(u,P_i) = a$. Thus $\g{\MP} \lgt \g{\mathcal{Q}}$. \qed
\end{proof}

This lemma comes in handy when proving theorems about optimal multi-placements by allowing us to construct an improving localized exchange to demonstrate the existence of a better placement. We apply it to prove the following theorems about optimal multi-placements.

\begin{theorem}\label{t:delta-skew}
	In every optimal multi-placement $\MP = \langle P_1, ..., P_m \rangle$ with skew at most $\delta > 0$, for every node $u$, the \subplacement $\MP_{u}$ has skew at most $\delta$.
\end{theorem}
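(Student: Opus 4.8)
The plan is to argue by contradiction: assuming $\MP$ is optimal yet some $\MP_u$ has skew exceeding $\delta$, I will exhibit an improving localized exchange, which by \autoref{lem:implocex} produces a strictly better multi-placement and contradicts optimality. Fix a node $u$ whose subplacement has skew $>\delta$, and let $i$ and $j$ be blocks attaining the largest and smallest failure numbers at $u$, so that $\fnum{u}{\P_i} - \fnum{u}{\P_j} > \delta \geq 1$. Every exchange I build will target exactly these two blocks; since such exchanges only relabel which block occupies a leaf (never increasing the number of blocks stored at any leaf), feasibility with respect to the capacities and the one-replica-per-block constraint is automatic, and it suffices to track, for each node $w$, the signed difference $\Delta(w) \defined \fnum{w}{\P_i} - \fnum{w}{\P_j}$. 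Observe that $\Delta$ is additive over children, $\Delta(w) = \sum_{c}\Delta(c)$ (leaves of $\Csub{w}$ partition among the children), that $\Delta(u) \geq \delta+1 \geq 2$, and that $\Delta(\text{root}) = |\P_i| - |\P_j| \leq \delta$ because $\MP$ itself has skew at most $\delta$.

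Next I would locate the two sibling \emph{carriers}. Walking up the path from $u$ to the root, $\Delta$ starts at $\Delta(u)\geq \delta+1$ and ends at $\Delta(\text{root})\leq\delta$, so there is a first path node $w$ with $\Delta(w)\leq\delta$; let $u'$ be its child on the path toward $u$, so $\Delta(u')\geq\delta+1\geq 2$. By additivity the remaining children of $w$ satisfy $\sum\Delta(c)=\Delta(w)-\Delta(u')\leq \delta-(\delta+1)=-1$, hence some sibling $v'$ of $u'$ has $\Delta(v')<0$. These siblings $u'$ and $v'$ become the carriers: block $i$ is over-represented at $u'$ while block $j$ is over-represented at $v'$, which is precisely the configuration a localized exchange can exploit. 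Both $(\P_i\setminus\P_j)\cap\Csub{u'}$ and $(\P_j\setminus\P_i)\cap\Csub{v'}$ are nonempty because $\Delta(u')>0$ and $\Delta(v')<0$.

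The exchange then splits into two cases on $\Delta(v')$. If $\Delta(v')\leq -2$, I pick leaves $\ell_i\in(\P_i\setminus\P_j)\cap\Csub{u'}$ and $\ell_j\in(\P_j\setminus\P_i)\cap\Csub{v'}$ and swap their block labels; as $\ell_i,\ell_j$ lie in sibling subtrees this is a localized exchange with carriers $u',v'$, and the bounds $\Delta(u')\geq 2$, $\Delta(v')\leq -2$ force the larger failure number to strictly drop at both carriers, giving \eqref{e:implocex} at $u'$ and \eqref{e:implocex2} at $v'$. If instead $\Delta(v')=-1$, a single swap fails at $v'$ since its top failure number does not strictly decrease, so I instead perform a \emph{clean swap} of the entire subplacements of blocks $i$ and $j$ inside $\Csub{v'}$ — satisfying \eqref{e:implocex3} verbatim — and compensate by relabelling one leaf of $(\P_i\setminus\P_j)\cap\Csub{u'}$ to block $j$. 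A short count shows $|\P_i|$ and $|\P_j|$ are preserved (the clean swap raises block $i$'s mass in $\Csub{v'}$ by exactly $1$ and the compensation lowers it by $1$ at $u'$), while $\Delta(u')\geq 2$ again yields \eqref{e:implocex} at $u'$. In either case \autoref{lem:implocex} gives $\g{\MP}\lgt\g{\mathcal{Q}}$, the desired contradiction.

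The step I expect to be the genuine obstacle is exactly the $\Delta(v')=-1$ case. The ``first drop'' argument only guarantees a sibling with $\Delta(v')<0$, and one can construct instances in which every negative sibling has $\Delta(v')=-1$; there a plain leaf-for-leaf swap does not lower the maximum failure number at $v'$, so the naive exchange is not improving. The crux is recognizing that a full clean swap on $\Csub{v'}$ leaves that subtree's contribution to $\g{\cdot}$ unchanged (this is the content of the \eqref{e:implocex3} branch of \autoref{lem:implocex}), so pairing it with a one-leaf compensation at $u'$ still constitutes an improving localized exchange. I would also verify the feasibility of that compensation, which is immediate because $\Delta(u')\geq 2$ forces $(\P_i\setminus\P_j)\cap\Csub{u'}$ to contain a spare leaf; note that the single-leaf swap is deliberately reserved for $\Delta(v')\leq -2$ precisely to avoid having to move $|\Delta(v')|$ compensating leaves at $u'$, which $u'$ might not possess.
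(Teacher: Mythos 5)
Your proof is correct and follows essentially the same route as the paper's: both arguments reduce the theorem to \autoref{lem:implocex} by constructing an improving localized exchange, and both split into exactly the same two cases — a sibling whose deficit in the offending block pair is at least $2$ (handled by a single leaf-for-leaf swap at each carrier, giving \eqref{e:implocex} and \eqref{e:implocex2}) versus a deficit of exactly $1$ (handled by the clean swap of blocks $i$ and $j$ under the sibling plus a one-replica compensation at the other carrier, giving \eqref{e:implocex} and \eqref{e:implocex3}). The only difference is bookkeeping in locating the carriers: the paper takes a \emph{least-depth} node whose \subplacement violates the skew bound and pairs it with a sibling, whereas you walk up from the violator tracking the pairwise difference $\fnum{\cdot}{\P_i}-\fnum{\cdot}{\P_j}$ and take the last node on the path where it exceeds $\delta$ together with a sibling of negative difference — an equivalent extremal choice that yields the same configuration.
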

\begin{proof}
	Suppose that in an optimal multi-placement with skew $\delta$, some node has a \subplacement with skew strictly greater than $\delta$. Let $u$ be a least depth such node. Node $u$ cannot be the root, because if it were then $\MP$ would have a skew greater than $\delta$, contradicting the assumption made in the statement of the theorem. Therefore node $u$ has a parent, which we will denote by $w$. Because $u$ was chosen to be least depth, \subplacement $\MP_w$ must have a skew of at most $\delta$.
	
	In order for \subplacement $\MP_u$ to have skew strictly greater than $\delta$, there must exist blocks $i$ and $j$ of $\MP$ for which $\fnum{u}{P_i} = a$, and \mbox{$\fnum{u}{P_j} \leq a - \delta - 1$}. Because $\MP_w$ has skew $\delta$, the skew must be corrected by some sibling of $u$, denoted by $v$, for which $\fnum{v}{P_i} \leq b-1$ and $\fnum{v}{P_j} = b$. If no such sibling exists then node $w$ must have a skew of at least $\delta+ 1$, contradicting that $u$ was chosen to have least depth. We will proceed in two cases. In each case, we will construct an improving localized exchange $\mathcal{Q} = \langle Q_1, ..., Q_m\rangle$ with target blocks $i$ and $j$ and carriers $u$ and $v$.
	
	\begin{enumerate}[{Case} 1)]
		\item If $\fnum{v}{P_i} = b-1$ then form $Q_i$ and $Q_j$ by swapping all of node $v$'s replicas which are in block $i$ with all of node $v$'s replicas which are in block $j$. 
		Clearly, this swap satisfies \eqref{e:implocex3}.
		After making this swap, the signature of $\mathcal{Q}$ will no longer match that of $\MP$, which is a violation of property (b) of an exchange. To maintain property (b), we take one replica from block $i$ which is placed on child $u$ and give it to block $j$. Below we summarize the result of this exchange.
		\begin{align*}
		&\MP : && \fnum{u}{P_i} = a && \fnum{u}{P_j} \leq a - \delta - 1 \\
		& \mathcal{Q} : && \fnum{u}{Q_i} = a - 1 && \fnum{u}{Q_j} \leq a - \delta
		\end{align*}
		
		Clearly, the maximum failure number in the top row is strictly greater than the maximum failure number in the bottom row, and thus $\mathcal{Q}$ is an improving localized exchange of $\MP$ which satisfies equations \eqref{e:implocex} and \eqref{e:implocex3}. Thus, by \autoref{lem:implocex}, $\MP$ is not an optimal multi-placement, a contradiction.
		
		\item If instead $\fnum{v}{P_i} < b-1$, then we must form $\mathcal{Q}$ differently. In this scenario, $\fnum{v}{P_i} \leq b-2$. We swap one of node $v$'s replicas from block $j$ to block $i$. As a result, $\fnum{v}{Q_i} \leq b-1$ and $\fnum{v}{P_j} = b-1$. To maintain property (b), we swap one of node $u$'s replicas from block $i$ to block $j$, exactly as in the previous case. Below we summarize the result of this exchange.
		\begin{align*}
		&\MP : && \fnum{u}{P_i} = a && \fnum{u}{P_j} \leq a - \delta - 1 && \fnum{v}{P_i} \leq b - 2 && \fnum{v}{P_j} = b \\
		& \mathcal{Q} : && \fnum{u}{Q_i} = a - 1  && \fnum{u}{Q_j} \leq a - \delta && \fnum{v}{Q_i} \leq b-1 && \fnum{v}{Q_j} = b-1
		\end{align*}
		
		It is easy to see by inspection that the resulting exchange is an improving localized exchange satisfying \eqref{e:implocex} and \eqref{e:implocex2}. Thus $\mathcal{Q}$ is an improving localized exchange, and we obtain a contradiction via \autoref{lem:implocex}.\qed
	\end{enumerate}
\end{proof}

Note that \autoref{t:delta-skew} restricts the structure of an optimal solution. However, \autoref{t:delta-skew} by itself is not sufficient to infer that a bottom-up dynamic program can restrict its attention to only signatures with bounded skews.  This is because a dynamic program typically needs to maintain partial results (e.g., best combination of signatures of a \emph{subset} of children at a node) as it works its way towards finding an optimal solution. We need to show that even these partial results satisfy the same structure as an optimal solution, namely  signatures in partial results also have bounded skews.
This is the focus of the next theorem.

In order to clearly state the theorem, we introduce some notation for combining multi-placements $\MP = (P_1, ..., P_m)$ and $\MP' = (P_1',...,P_m')$. Specifically, we define the \emph{direct sum} of two multi-placements as \mbox{$\MP \oplus \MP' = (P_1 \cup P_1' ,..., P_m \cup P_m')$}. Associativity and commutativity of $\oplus$ easily follow from that of set union.

\begin{theorem}\label{t:skew-ordering}
	Fix an arbitrary linear order, $\prec$, on the nodes of tree $T$. For any choice of signature $\vec{\sigma}$ having skew $\delta$, there exists an optimal multi-placement $\MP^\ast$ with signature $\vec{\sigma}$ in which, for every node $u$ having children $c_1,...,c_t$, where $c_1 \prec c_2 \prec ... \prec c_t$, and for all values of $s \in \{1,...,t\}$, we have that
	\begin{equation*}
	\bigoplus_{j=1}^s \MP^\ast_{c_j} \text{ has skew at most $\delta$}.
	\end{equation*}
\end{theorem}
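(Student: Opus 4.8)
The plan is to prove this by an extremal (exchange) argument that upgrades \autoref{t:delta-skew}: that theorem already controls the \emph{full} combination at each node, and I want a potential function to drag the \emph{partial} combinations into line as well. Fix the signature $\vec\sigma$ (with skew $\delta$) and restrict attention to optimal multi-placements having this signature. By \autoref{t:delta-skew} every such $\MP$ already satisfies: each \subplacement $\MP_u$ has skew at most $\delta$, and in particular the complete combination $\bigoplus_{j=1}^{t}\MP_{c_j}=\MP_u$ has skew at most $\delta$ at every node $u$. Hence only the intermediate prefixes $s<t$ can possibly violate the claim. Among all optimal multi-placements with signature $\vec\sigma$ I would then choose $\MP^\ast$ minimizing a potential $\Phi$ that records total prefix imbalance, organized \emph{lexicographically by depth} (shallow depths first), where the entry at a given depth aggregates over the nodes $u$ at that depth and all prefixes $s$ a strictly convex penalty in the prefix counts $g^u_i(s)\defined\sum_{\ell\le s}\fnum{c_\ell}{P^\ast_i}$. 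The target is to show this minimizer already satisfies the theorem.

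Assume for contradiction that $\MP^\ast$ violates the claim, and pick a \emph{least-depth} node $u$, an offending prefix $s$, and blocks $i,j$ with $g^u_i(s)-g^u_j(s)\ge\delta+1$. Since $\MP^\ast_u$ has skew at most $\delta$, the full difference obeys $\fnum{u}{P^\ast_i}-\fnum{u}{P^\ast_j}=g^u_i(t)-g^u_j(t)\le\delta$, so the prefix excess must be paid back over the suffix; summing the per-child differences then forces an \emph{early} child $c_p$ ($p\le s$) with $\fnum{c_p}{P^\ast_i}>\fnum{c_p}{P^\ast_j}$ and a \emph{late} child $c_q$ ($q>s$) with $\fnum{c_q}{P^\ast_i}<\fnum{c_q}{P^\ast_j}$. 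I would then apply a localized exchange with carriers $c_p,c_q$ and targets $i,j$ that moves one block-$i$ replica from the subtree of $c_p$ into that of $c_q$ and one block-$j$ replica the other way. Because $c_p$ and $c_q$ are both children of $u$, this move leaves $\fnum{u}{\cdot}$ unchanged and therefore leaves the failure numbers and signatures of $u$ and all of its ancestors untouched; only the two carrier subtrees are affected, while the offending prefix shrinks ($g^u_i(s)$ drops by one, $g^u_j(s)$ rises by one).

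The crux, and the step I expect to be the main obstacle, is to realize this exchange so that it is simultaneously \emph{objective-neutral} and \emph{strictly} $\Phi$-decreasing. Neutrality is delicate because redistributing inside the carrier subtrees changes the signatures $\sig{\MP_v}$ at descendants of $c_p$ and $c_q$. The fact to exploit is that at $c_p$ the move sends the local counts $(\fnum{c_p}{P^\ast_i},\fnum{c_p}{P^\ast_j})=(x_i,x_j)$ with $x_i>x_j$ to $(x_i-1,x_j+1)$, and at $c_q$ sends $(y_i,y_j)$ with $y_i<y_j$ to $(y_i+1,y_j-1)$ — both are \emph{balancing} steps that bring two block-counts closer. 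I would first establish a monotonicity lemma, provable by the same replica-swap technique underlying \autoref{t:delta-skew} and \autoref{lem:implocex}, stating that balancing two block-counts at a node never increases the optimal sub-objective of its subtree; re-optimizing each carrier subtree accordingly gives a valid multi-placement whose objective does not increase, and since $\MP^\ast$ is optimal the objective cannot decrease either, so the move is neutral and preserves both optimality and the signature $\vec\sigma$. Finally, because $u$ has least depth and the move only perturbs $u$'s own prefixes (at indices in $[p,q)$) together with strictly deeper nodes, the depth-lexicographic convex $\Phi$ strictly decreases, contradicting minimality. The genuinely subtle point is that a single localized exchange perturbs an \emph{entire range} of intermediate prefixes of $u$, which can re-inflate some other prefix; I would control this either by choosing $c_p,c_q$ as the closest offending children bracketing $s$, or by switching to a maximum-gap ($\ell_\infty$) potential so that strictly reducing the largest prefix gap dominates, thereby securing the strict decrease of $\Phi$.
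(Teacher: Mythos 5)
Your proposal is correct in outline, but it takes a genuinely different route from the paper's proof, so it is worth comparing the two. The paper reduces the theorem to a root-level statement (its Lemma~\ref{c:skew-ordering}, extended to all nodes by structural induction) and then runs a \emph{sequential} exchange procedure: take the minimum prefix index $k$ whose partial direct sum violates the skew bound, take the lexicographically least offending block pair $(i,j)$, and split into four cases according to which child failure-number bounds are tight. In three cases it constructs an \emph{improving} localized exchange and contradicts optimality via \autoref{lem:implocex}; in the remaining tight case it constructs a \emph{neutral} exchange (a full block-$i$/block-$j$ swap at the two carrier children) and then verifies by explicit bookkeeping that no previously visited pair re-enters any set $pairs(k')$ --- that bookkeeping is the most delicate part of the paper's argument and is what guarantees termination. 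You instead run an extremal argument: among optimal multi-placements with signature $\vec{\sigma}$, minimize a depth-lexicographic, per-block convex prefix potential, pick a least-depth violating node, and perform one neutral exchange between bracketing children that strictly decreases the potential. Termination then comes for free from well-foundedness of the potential, so you avoid both the paper's four-case tightness analysis and its ``no pair re-added'' verification; the price is a new structural lemma (your monotonicity lemma) that the paper never needs.

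Two points must be shored up before your plan is a proof. First, the monotonicity lemma is load-bearing and only asserted; it is true, and the clean instantiation of your ``replica-swap technique'' is a greedy descent: if in the subtree rooted at a carrier the count of block $i$ strictly exceeds that of block $j$, descend from the carrier always into a child $c$ with $\fnum{c}{P_i} > \fnum{c}{P_j}$ (such a child exists by pigeonhole, since children's failure numbers sum to the parent's), reaching a leaf in $P_i \setminus P_j$; relabeling that leaf's block-$i$ replica to block $j$ changes only nodes $w$ on the descent path, each of which satisfies $\fnum{w}{P_i} \geq \fnum{w}{P_j}+1$, so each such node's signature change is neutral (difference exactly one) or strictly improving (difference at least two), and the subtree objective does not increase. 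Combined with separability of the objective and translation-invariance, this makes your exchange objective-neutral, hence optimality- and signature-preserving. Second, of your two proposed fixes for the ``perturbed range of prefixes'' problem, only the bracketing choice provably works, and you should commit to it: taking $c_p$ as the last child at index at most $s$ with $\fnum{c_p}{P_i} > \fnum{c_p}{P_j}$, and $c_q$ as the first child at index greater than $s$ with the reverse strict inequality, guarantees $g^u_i(s') - g^u_j(s') \geq \delta + 1 \geq 2$ for \emph{every} prefix $s' \in [p,q)$, so strict convexity makes the potential drop at every affected prefix and change nowhere else at that depth or above. The $\ell_\infty$ alternative is not sound as stated: the maximum prefix gap at the depth of $u$ may be attained by a block pair and prefix untouched by your exchange, in which case the maximum does not strictly decrease and the contradiction evaporates (one would need an additional tie-breaking hierarchy to repair it).
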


The previous two theorems imply that any optimal solution with a signature of skew at most $\delta$ can be constructed from signatures whose skews are also upper-bounded by $\delta$. Moreover, the \emph{order} in which we combine the partial results from the children to construct a partial result at the parent is not relevant. No matter how the children of a node are ordered, a route to an optimum multi-placement which uses only partial results obtained using signatures with skew at most $\delta$ exists.  

To prove \autoref{t:skew-ordering}, we will require the technical lemma stated below.

\begin{lemma}\label{c:skew-ordering}
	Let $u$ be the root of tree $T$, and let $u$ have children given in the arbitrary order $c_1,...,c_t$. Then, for every choice of signature $\vec{\sigma}$ with skew $\delta$, there is an optimal multi-placement $\MP^\ast = (P^\ast_1, ..., P^\ast_m)$ with signature $\vec{\sigma}$ in which for all values of $s \in \{1,...,t\}$,
	\begin{equation}\label{eq:dirsum-skew-delta}
	\bigoplus_{j=1}^s \MP^\ast_{c_j} \text{ has skew at most $\delta$}.
	\end{equation}
\end{lemma}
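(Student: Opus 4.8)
The plan is to recast the statement in terms of the integer matrix of failure numbers $a_{ij} \defined \fnum{c_j}{P^\ast_i}$, with one column per child $c_j$ and one row per block $i$. Since the signature of a direct sum depends only on the sizes of its blocks, the prefix $\bigoplus_{j=1}^{s}\MP^\ast_{c_j}$ has block $i$ of size $S_i^{(s)} \defined \sum_{j \le s} a_{ij}$, so its skew is exactly $\max_i S_i^{(s)} - \min_i S_i^{(s)}$. Thus the lemma asks for an optimal multi-placement in which every prefix of column sums is balanced to within $\delta$. As a starting point I would invoke \autoref{t:delta-skew} to fix an optimal $\MP^\ast$ of signature $\vec{\sigma}$ in which every \subplacement has skew at most $\delta$; in particular each column (the \subplacement $\MP^\ast_{c_j}$) and the full row-sum vector (the \subplacement at the root, i.e.\ $\MP^\ast$ itself) have skew at most $\delta$. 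I would also record the optimal-substructure fact, immediate from \autoref{lem:signature-solution}, that each $\MP^\ast_{c_j}$ is itself an optimal sub-multi-placement for its own signature inside the subtree rooted at $c_j$.

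The core is an extremal argument. Among all optimal multi-placements of signature $\vec{\sigma}$ I would select the one, again called $\MP^\ast$, that in addition minimizes a prefix-imbalance potential, for instance $\Phi \defined \sum_{s=1}^{t} \sum_i \bigl(S_i^{(s)}\bigr)^2$, and claim this minimizer has the prefix property. Suppose not, and let $p$ be the least index whose prefix has skew exceeding $\delta$; pick blocks $i,j$ attaining the prefix maximum and minimum, so $S_i^{(p)} - S_j^{(p)} \ge \delta + 1$. Since prefix $p-1$ has skew at most $\delta$, the gap must have grown at child $c_p$, giving $a_{ip} \ge a_{jp}+1$. Because the full row sums satisfy $S_i^{(t)} - S_j^{(t)} = |P^\ast_i| - |P^\ast_j| \le \delta < S_i^{(p)} - S_j^{(p)}$, block $j$ must overtake block $i$ at some later child, so a least index $q > p$ with $a_{jq} \ge a_{iq}+1$ exists. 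By minimality of $q$ together with $S_i^{(p)}-S_j^{(p)} \ge \delta+1$, one checks that $S_i^{(s)} > S_j^{(s)}$ for every $s \in [p,q-1]$.

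I would then repair the violation with the localized exchange $\mathcal{Q}$ whose carriers are the siblings $c_p, c_q$ and whose targets are blocks $i,j$: move one replica of block $i$ to block $j$ inside the subtree of $c_p$ and, compensating, one replica of block $j$ to block $i$ inside the subtree of $c_q$, re-optimizing the two affected subtrees for their new signatures. The inequalities $a_{ip} > a_{jp}$ and $a_{jq} > a_{iq}$ guarantee that the leaves needed for the swap exist, and a counting check shows $\mathcal{Q}$ retains signature $\vec{\sigma}$, so it is a genuine localized exchange. Its effect on the prefixes is exactly to decrement $S_i^{(s)}$ and increment $S_j^{(s)}$ for $s \in [p,q-1]$ while leaving all other prefixes untouched; since $S_i^{(s)} > S_j^{(s)}$ on that range, every affected term of $\Phi$ strictly drops, so $\Phi(\mathcal{Q}) < \Phi(\MP^\ast)$.

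The crux, and the step I expect to be the main obstacle, is proving that $\mathcal{Q}$ is still optimal, i.e.\ that the repair does not worsen $\g{\mathcal{Q}}$. The key observation is that the repair makes both affected \subplacement{s} strictly \emph{more} balanced: at $c_p$ it moves a unit from the larger block $i$ to the smaller block $j$, and at $c_q$ from the larger block $j$ to the smaller block $i$, shrinking the spread in each case. It therefore suffices to establish a monotonicity principle, a multi-block analogue of \autoref{thm-balanced-sufficiency}, stating that replacing a node's signature by a more balanced one (moving a single unit from a larger to a smaller block) never increases the optimal value of its subtree. Granting this, both re-optimized subtrees have value no larger than before, so $\g{\mathcal{Q}} \lleq \g{\MP^\ast}$; optimality of $\MP^\ast$ forces equality, and then $\mathcal{Q}$ is an optimal multi-placement with smaller $\Phi$, contradicting the choice of $\MP^\ast$. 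I would prove the monotonicity principle itself by exhibiting an improving localized exchange inside the single affected subtree and appealing to \autoref{lem:implocex}; this is the delicate failure-number bookkeeping on which the whole argument rests.
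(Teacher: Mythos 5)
Your overall architecture is sound and genuinely different from the paper's: you fix a $\Phi$-minimizing optimal multi-placement, locate the first violating prefix $p$ and the first compensating child $q$, repair with a single-replica swap at $c_p$ and $c_q$, and let the potential function handle termination (this is cleaner than the paper's bookkeeping, which processes violating pairs in lexicographic order and must verify that no previously visited pair re-enters any set $pairs(k')$). But the crux step --- your ``monotonicity principle'' --- has a genuine gap, exactly where you predicted. The principle compares optimal values of a subtree under two \emph{different} signatures (block $i$ of size $a_{ip}$ versus $a_{ip}-1$, block $j$ of size $a_{jp}$ versus $a_{jp}+1$). However, an exchange, by property (b) of the paper's definition, satisfies $|P_k| = |Q_k|$ for every block $k$ and therefore preserves the signature; consequently \autoref{lem:implocex} only ever compares two multi-placements with the \emph{same} signature. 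No improving localized exchange ``inside the single affected subtree'' can certify a cross-signature inequality, so the mechanism you propose for proving the monotonicity principle cannot work as stated. (A related slip: after ``re-optimizing'' the two subtrees, $\mathcal{Q}$ need not even be an exchange of $\MP^\ast$, since the set of occupied leaves may change, violating property (a); your potential argument does not actually need the exchange formalism, but the crux still does need a proof.) The principle itself appears to be true --- one can prove it by descending from $c_p$ through children where block $i$'s failure number strictly exceeds block $j$'s, reaching a leaf in $P_i \setminus P_j$ whose reassignment to block $j$ worsens no node on the path, in the spirit of the proof of \autoref{thm-balanced-sufficiency} --- but that is a separate, nontrivial argument your proposal does not supply.

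For comparison, the paper avoids needing any such principle by a different repair. It splits into cases according to whether the compensating inequalities are tight. When they are not tight (its Cases 1--3), it constructs an improving localized exchange and contradicts optimality outright via \autoref{lem:implocex}. In the remaining tight case ($a_{ip}-a_{jp}=1$ and $a_{jq}-a_{iq}=1$ in your notation), it swaps the \emph{entire contents} of blocks $i$ and $j$ within both carrier subtrees; this is a pure relabeling, so every node's multiset of failure numbers --- and hence the objective --- is preserved exactly, with no re-optimization and no cross-signature comparison. To salvage your write-up with minimal change, either adopt that two-case repair inside your potential-function framework, or prove the monotonicity principle by the direct reassignment argument sketched above rather than by appeal to \autoref{lem:implocex}.
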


Once established, it is clear that this lemma can be used to easily show \autoref{t:skew-ordering} by applying \autoref{c:skew-ordering} at the root, and then recursively at each of the subtrees formed by descendants of the children of the root. In this manner, once \autoref{c:skew-ordering} is established, one can easily show \autoref{t:skew-ordering} by structural induction. Thus, we will focus on proving \autoref{c:skew-ordering}.

\begin{proof}[\autoref{c:skew-ordering}]
	The proof is by construction. We show that any optimal multi-placement that does not satisfy the statement of the theorem can be transformed into another optimal multi-placement that satisfies the statement of the theorem using a series of exchanges.

	In our proof, we will need to refer to several \subplacement{s} of $\MP$ and their constituent placements. We refer to the multi-placement $\bigoplus_{j=1}^k \MP_{c_j}$ as the \emph{partial multi-placement of $\MP$ up to $k$}. But we also need a concise symbol which refers to the replicas from the multi-placement $\bigoplus_{j=1}^k \MP_{c_j}$ which are in block $i$ and are placed on the subtree of $T$ rooted at node $v$. We will denote this set of replicas by $P_{k,i}^v$, and refer to it as a placement. Note that we can refer to placement $\MP$ as ``\subplacement{"} $\MP_u$ without causing confusion and, having done so, it is also clear that we can refer easily to each of its constituent placements, since $\MP= \MP_u = \bigoplus_{j=1}^t \MP_{c_j} = (P^u_{t,1}, ..., P^u_{t,m})$. Our argument focuses on constructing an exchange $\mathcal{Q}$. We will use the symbol $\Q_{k,i}^v$ to refer to the same portion of $\mathcal{Q}$ that $\P_{k,i}^v$ refers to.

	Consider an optimal multi-placement $\MP$ with \signature{} $\vec{\sigma}$. If $\MP$ satisfies the statement of the theorem, then the theorem clearly holds. Otherwise, we show how to construct an exchange that takes us ``closer to our goal" by reducing the skew between a pair of ``offending" blocks to at most $\delta$. By repeatedly performing such exchanges, we can eventually construct another optimal multi-placement with the same \signature{} $\vec{\sigma}$ that satisfies the statement of the theorem. 
	To that end, let $k$ be the minimum value such that
	\[
	\bigoplus_{j=1}^k \MP_{c_j} \text{ has skew at least $\delta + 1$}
	\]
	Since $\bigoplus_{j=1}^k\MP_{c_j}$ has a skew of at least $\delta + 1$ there must exist blocks $i$ and $j$ such that
	\begin{equation}\label{e:root-bound}
	|\fnum{u}{\P_{k,i}^u} - \fnum{u}{P_{k,j}^u}| > \delta
	\end{equation}
	There may be multiple candidate block pairs $(i,j)$ for which the above statement holds. Let the set of such pairs be denoted by
	\[
	pairs(k) = \{(i,j) : i < j \text{ and } \left|\fnum{u}{\P_{k,i}^u} - \fnum{u}{\P_{k,j}^u} \right| > \delta)  \}.
	\]

	\punt{
		To prove the theorem, we will show how to construct an exchange $\mathcal{Q}$ in which one of the two conditions holds:
		\begin{enumerate}[a)]
			\item $\mathcal{Q}$ is an improving localized exchange of $\MP$, thereby contradicting optimality of $\MP$ via \autoref{lem:implocex}.
			\item $\mathcal{Q}$ is an exchange of $\mathcal{P}$ for which the statement of the theorem holds.
		\end{enumerate}
		If at any point we construct an exchange which satisfies condition (a), we will have constructed an exchange whose existence contradicts the optimality of $\mathcal{P}$, via \autoref{lem:implocex}. Likewise, if at any point we have constructed an exchange which satisfies condition (b), the proof will be complete.
		
		To do this, we visit each of the pairs in $pairs(k)$ in lexicographic order. For each pair, we form an exchange $\mathcal{Q}$ which is either 
		\begin{enumerate}[i)]
			\item an improving localized exchange of $\MP$, or 
			\item a localized exchange in which \mbox{$|\fnum{u}{\P_{k,i}^u} - \fnum{u}{\P_{k,j}^u} | \leq \delta$}, and \mbox{$\g{\mathcal{Q}} = \g{\MP}$}.
		\end{enumerate}
		Clearly, if condition (i) is achieved after visiting some pair, then we have achieved condition (a), thereby completing the construction. If instead case (ii) is achieved, we will have constructed an exchange in which the pair $(i,j)$ has been removed from $pairs(k)$, and, as we will show, does not add any \emph{previously visited} pairs to any set $pairs(k')$ where $k' \leq k$. Moreover, since in case (ii) $\g{\MP} = \g{\mathcal{Q}}$ the exchange so constructed is also an optimal solution, by optimality of $\MP$. If after constructing an exchange via case (ii) we find that $pairs(k)$ is empty, we will move to the next smallest value of $k' > k$ for which $pairs(k')$ is non-empty and visit the lexico-minimum pair in this set. If no value of $k'$ is found for which $pairs(k')$ is non-empty, then $\mathcal{Q}$ must be an exchange for which $pairs(j)$ is empty for all values of $j$. Moreover, since the optimality of $\MP$ will have never been contradicted (otherwise we would have ended the construction) $\mathcal{Q}$ will be an optimal solution, thus satisfying the statement of the theorem, and ending the construction by achieving condition (b).

		We now proceed with the proof, which will progress according to the outline described above. 
	}

	Let $(i,j)$ be the lexico-minimum element in $pairs(k)$. Since this block pair does not have a skew of $\delta$, clearly, one of two statements must be true. For some value of $a$, we have either
	\begin{align}
	\fnum{u}{P_{k,i}^u} &= a, & \fnum{u}{P_{k,j}^u} &\leq a - \delta - 1,\label{eq:lmin-case1}\\
	\intertext{or,}
	\fnum{u}{P_{k,i}^u} &\leq a -\delta - 1, &  \fnum{u}{P_{k,j}^u} &= a.\label{eq:lmin-case2}
	\end{align}
	both of which follow easily from \eqref{e:root-bound}. In certain cases of this argument we can assume one of \eqref{eq:lmin-case1} or \eqref{eq:lmin-case2} without loss of generality, but in others we must deal with each separately.
	
	First, notice that since each of the child \subplacement{s} are disjoint, for any value of $k \in \{1,...,t\}$, we must have that
	\begin{equation}\label{e:disjoint-decomp}
	\fnum{u}{\P_{k,i}^u} = \sum_{j=1}^k \fnum{c_j}{\P_{k,i}^{c_j}}.
	\end{equation}
	We will appeal to this fact at several points in the sequel.
	
	Suppose that \eqref{eq:lmin-case1} holds. Then, since $k$ was chosen to be minimum, it must be the case that
	\[
	\fnum{c_k}{\P_{k,i}^{c_k}} = b ~\text{ and }~ \fnum{c_k}{\P_{k,j}^{c_k}} \leq b-1,
	\]
	since otherwise, by the decomposition of \eqref{e:disjoint-decomp}, it is not possible both for \eqref{e:root-bound} to hold and for $k$ to be minimum.
	Moreover, since the overall skew of $\MP$ is no greater than $\delta$, there must be some child $c_\ell$ with $k < \ell$ which fixes the skew w.r.t. blocks $i$ and $j$. That is, at child $c_\ell$ we must have that
	\[
	\fnum{c_\ell}{\P_{k,i}^{c_\ell}} \leq d-1 ~\text { and }~ \fnum{c_\ell}{\P_{k,i}^{c_\ell}} = d.	
	\]
	Symmetric statements hold when \eqref{eq:lmin-case2} holds instead of \eqref{eq:lmin-case1}. We will defer their statement until they are needed.
	
	We proceed in four cases, based upon which of the upper bounds involving $b$ and $d$ are tight. In the first three cases (Cases 1-3) we use \autoref{lem:implocex} to derive a contradiction. In the fourth case (Case 4) we remove the pair $(i,j)$ from $pairs(k)$ while not adding any previously visited pairs to $pairs(k')$ for any $k' \leq k$, thereby moving closer to our goal.
	
	We can treat Cases 1-3 by assuming that \eqref{eq:lmin-case1} holds without loss of generality.  In each of these cases, a symmetric argument applies when \eqref{eq:lmin-case2} is true. The same symmetry does not apply in Case 4, which must be handled more carefully depending on which of equations \eqref{eq:lmin-case1} or \eqref{eq:lmin-case2} holds.
	
	\begin{enumerate}[{Case} 1)]
		\item \mbox{$\fnum{c_k}{\P_{k,i}^{c_k}} = b ~;~ \fnum{c_k}{\P_{k,j}^{c_k}} \leq b - 2 ~;~
			\fnum{c_\ell}{\P_{k,i}^{c_\ell}} = d-1 ~;~ \fnum{c_\ell}{\MP_{k,j}^{c_\ell}} = d$}.\vspace{2mm}
		In this case we construct $\mathcal{Q}$, an improving localized exchange of $\MP$ with target
		blocks $i$ and $j$ and carriers $c_k$ and $c_\ell$ as follows.
		First, we swap the indices of blocks $i$ and $j$ by setting $Q_{k,i}^{c_\ell} = \P_{k,j}^{c_\ell}$ and $Q_{k,j}^{c_\ell} = \P_{k,i}^{c_\ell}$. This effectively adds a replica to block $i$ and removes a replica from block $j$. To fix the signature, we move one of $c_k$'s replicas from block $i$ to block $j$. The result is summarized as\vspace{2mm}
		\mbox{$\fnum{c_k}{Q_{k,i}^{c_k}} = b -1 ~;~ \fnum{c_k}{Q_{k,j}^{c_k}} \leq b - 1 ~;~
			Q_{k,i}^{c_\ell} = \P_{k,j}^{c_\ell} ~;~ Q_{k,j}^{c_\ell} = \P_{k,i}^{c_\ell}$}.\vspace{2mm}
		
		Thus $\mathcal{Q}$ is an improving localized exchange of $\MP$, and the construction is complete.\vspace{2mm}
		
		\item \mbox{$\fnum{c_k}{\P_{k,i}^{c_k}} = b ~;~ \fnum{c_k}{\P_{k,j}^{c_k}} = b - 1 ~;~
			\fnum{c_\ell}{\P_{k,i}^{c_\ell}} \leq d-2 ~;~ \fnum{c_\ell}{\MP_{k,j}^{c_\ell}} = d$}.\vspace{2mm}
		This case is entirely symmetric to the prior one. We do with the replicas of $c_k$ what we we did in the prior case with the replicas of $c_\ell$ and vice versa. The result is summarized as\vspace{2mm}
		\mbox{$	Q_{k,i}^{c_k} = \P_{k,j}^{c_k} ~;~ Q_{k,j}^{c_k} = \P_{k,i}^{c_k} ~;~
			\fnum{c_\ell}{Q_{k,i}^{c_\ell}} \leq d -1 ~;~ \fnum{c_\ell}{Q_{k,j}^{c_\ell}} = d - 1
			$}\vspace{2mm}
		
		Thus $\mathcal{Q}$ is an improving localized exchange of $\MP$, and the construction is complete.\vspace{2mm}
		
		\item \mbox{$\fnum{c_k}{\P_{k,i}^{c_k}} = b ~;~ \fnum{c_k}{\P_{k,j}^{c_k}} \leq b - 2 ~;~
			\fnum{c_\ell}{\P_{k,i}^{c_\ell}} \leq d-2 ~;~ \fnum{c_\ell}{\P_{k,i}^{c_\ell}} = d$}.\vspace{2mm}
		In this case, we move one of $c_k$'s replicas from block $i$ to block $j$ and move one of $c_\ell$'s replicas from block $j$ to block $i$. This is summarized as\vspace{2mm}
		\mbox{$\fnum{c_k}{Q_{k,i}^{c_k}} = b-1 ~;~ \fnum{c_k}{Q_{k,j}^{c_k}} \leq b - 1 ~;~
			\fnum{c_\ell}{Q_{k,i}^{c_\ell}} \leq d-1 ~;~ \fnum{c_\ell}{Q_{k,i}^{c_\ell}} = d-1$}.\vspace{2mm}
		Thus $\mathcal{Q}$ is an improving localized exchange of $\MP$, and the construction is complete.\vspace{2mm}
		
		\item
		We split into two cases.
		\begin{enumerate}[{Case 4}a)]
			\item In this case \eqref{eq:lmin-case1} holds. Recall this means that
			\[\fnum{u}{P_{k,i}^u} = a ~;~ \fnum{u}{P_{k,j}^u} \leq a - \delta - 1,\]
			implying that we must have	
			\[\fnum{c_k}{\P_{k,i}^{c_k}} = b ~;~ \fnum{c_k}{\P_{k,j}^{c_k}} = b - 1 ~;~
			\fnum{c_\ell}{\P_{k,i}^{c_\ell}} = d-1 ~;~ \fnum{c_\ell}{\P_{k,i}^{c_\ell}} = d.\]
			Furthermore, we must have equality where \mbox{$\fnum{u}{P_{k,j}^u} = a - \delta - 1$}, which we shall argue as follows. Suppose for the purpose of obtaining a contradiction that \mbox{$\fnum{u}{P_{k,j}^u} < a - \delta - 1 \implies \fnum{u}{P_{k,j}^u} \leq a - \delta - 2$}. Then, since by \eqref{e:disjoint-decomp}, we know $\fnum{u}{P_{k,j}^u} = \fnum{u}{P_{k-1,j}^{u}} + \fnum{c_k}{P_{k,j}^{c_k}}$, we obtain
			\begin{equation}\label{eq:equality-contra1}
			\fnum{u}{P_{k-1,j}^u} \leq a - \delta - b - 1. 
			\end{equation}
			But by \eqref{e:disjoint-decomp}, we can similarly obtain that $\fnum{u}{P_{k,i}^u} = \fnum{u}{P_{k-1,i}^{u}} + \fnum{c_k}{P_{k,i}^u}$, which implies that 
			\begin{equation}\label{eq:equality-contra2}
			\fnum{u}{P_{k-1,i}^u} = a - b.
			\end{equation}
			But \eqref{eq:equality-contra1} and \eqref{eq:equality-contra2} together imply that $(i,j) \in pairs(k-1)$, contradicting that $k$ was chosen to be minimum. Thus $\fnum{u}{P_{k,j}^u} = a - \delta - 1$.
			
			In this case, we construct $\mathcal{Q}$, a localized exchange of $\MP$ which removes $(i,j)$ from $pairs(k)$.
			We construct $\mathcal{Q}$ by swapping all replicas of block $i$ with block $j$ at both child $c_k$ and $c_\ell$. Specifically, we set $Q_{k,i}^{c_k} = P_{k,j}^{c_k}$, $Q_{k,j}^{c_k} = Q_{k,i}^{c_k}$, $Q_{k,i}^{c_\ell} = P_{k,j}^{c_\ell}$, and $Q_{k,j}^{c_\ell} = P_{k,i}^{c_\ell}$
			Thus, clearly $\g{\mathcal{Q}} = \g{\MP}$ as observed in the argument used to prove \autoref{lem:implocex}. 
			This removes $(i,j)$ from $pairs(k)$, since, when we swap the replicas of blocks $i$ and $j$ at child $c_k$, we effectively decrement the failure number w.r.t. block $i$ by 1 and increment the failure number w.r.t. block $j$ by 1. Thus, the exchange $\mathcal{Q}$ is summarized as
			\begin{gather*}\fnum{u}{Q_{k,i}^u} = a -1 ~;~ \fnum{u}{Q_{k,j}^u} = a - \delta
			~;~\\
			\fnum{c_k}{Q_{k,i}^{c_k}} = b -1 ~;~ \fnum{c_k}{Q_{k,j}^{c_k}} = b ~;~
			\fnum{c_\ell}{Q_{k,i}^{c_\ell}} = d ~;~ \fnum{c_\ell}{Q_{k,i}^{c_\ell}} = d-1.
			\end{gather*}
			
			The first line clearly implies $(i,j) \notin pairs(k)$. Now we need to show that this exchange does not add any previously visited pairs to $pairs(k')$ for any $k' \leq k$. If $k' < k$, then this is easily seen, since the only \subplacement{s} affected by the exchange are those at $c_k$ and $c_\ell$, and $k' < k < \ell$.
			
			We must do some more work to show that no previously visited pairs are added to $pairs(k)$. That is to say, we must show that any pair $(x,y) \llt (i,j)$ is not added to $pairs(k)$ by the exchange. It is clear that only pairs for which $x = i$ or $y \in \{i,j\}$ might be added, since only blocks $i,j$ are affected by the exchange, and if $x = j$ then $(j,y) \not\llt (i,j)$ as required.
			
			Since $(i,j)$ was the lexicographically smallest element of $pairs(k)$, we know that our bounds on the skew must hold for all block pairs $(x,y)$ which are lexicographically smaller. This allows us to derive bounds on the possible values of $\fnum{u}{P_{k,x}^u}$ and $\fnum{u}{P_{k,y}^u}$ which will make it clear that no previously visited pairs are added by the exchange.
			
			Since $(x,j) \llt (i,j)$ for values of $x < i$, we obtain that \mbox{$\fnum{u}{P_{k,x}^u} \leq \fnum{u}{P_{k,j}^u} + \delta \leq a - 1 $}, since otherwise the block pair $(x,j)$ would be in $pairs(k)$, implying $(i,j)$ was not the lexico-minimum. Likewise, since $(x,i) \llt (i,j)$ we obtain that \mbox{$\fnum{u}{P_{k,x}^u} \geq \fnum{u}{P_{k,i}^u} - \delta = a - \delta$.} Thus \mbox{$\fnum{u}{P_{k,x}^u} \in [a - \delta , a - 1]$}.
			
			We similarly bound $\fnum{u}{P_{k,y}^u}$ for values of $y$ where $i < y < j$ as follows. Since $(i,y) \llt (i,j)$, we have that $\fnum{u}{P_{k,y}^u} \leq a - \delta$. Since, for any value of $x < i$, $(x,y) \llt (i,j)$ we know that $\fnum{u}{P_{k,y}^u} \leq \fnum{u}{P_{k,x}^u} + \delta \leq a + \delta - 1$. Thus $\fnum{u}{P_{k,y}^u} \in [a-\delta, a+\delta - 1]$. 
			
			Since $P_{k,x}^u = Q_{k,x}^u$ and $P_{k,y}^u = Q_{k,y}^u$ for all values of $x$ and $y$, we can show that, after the exchange,
			\begin{enumerate}[I)]
				\item $(x,i) \notin pairs(k)$, since \mbox{$\fnum{u}{Q_{k,x}^u} \in [a-\delta, a-1]$} and \mbox{$\fnum{u}{Q_{k,i}^u} = a-1$}.
				\item $(x,j) \notin pairs(k)$, since \mbox{$\fnum{u}{Q_{k,x}^u} \in [a-\delta, a-1]$} and \mbox{$\fnum{u}{Q_{k,j}^u} = a - \delta$.}
				\item $(i,y) \notin pairs(k)$, since \mbox{$\fnum{u}{Q_{k,y}^u} \in [a-\delta, a+ \delta-1]$}, and $\fnum{u}{Q_{k,i}^u} = a-1$.
			\end{enumerate}
			Since previously visited pairs of $pairs(k)$ are one of (I), (II), or (III), no such pairs are added.
			
			\item If instead \eqref{eq:lmin-case2} holds then the same argument used in the prior case to show that equality holds in \eqref{eq:lmin-case1} also applies to \eqref{eq:lmin-case2}, where the roles of $i$ and $j$ are reversed. Thus, we have
			\[\fnum{u}{P_{k,i}^u} = a -\delta - 1 ~;~ \fnum{u}{P_{k,j}^u} = a,\]
			implying that we must have	
			\[\fnum{c_k}{\P_{k,i}^{c_k}} = b-1 ~;~ \fnum{c_k}{\P_{k,j}^{c_k}} = b ~;~
			\fnum{c_\ell}{\P_{k,i}^{c_\ell}} = d ~;~ \fnum{c_\ell}{\P_{k,i}^{c_\ell}} = d-1.\]
			The exact same exchange is formed using the exact same operations used in the prior case, except that this exchange is now summarized as
			\begin{gather*}\fnum{u}{Q_{k,i}^u} = a -\delta ~;~ \fnum{u}{Q_{k,j}^u} = a - 1
			~;~\\
			\fnum{c_k}{Q_{k,i}^{c_k}} = b ~;~ \fnum{c_k}{Q_{k,j}^{c_k}} = b -1 ~;~
			\fnum{c_\ell}{Q_{k,i}^{c_\ell}} = d-1 ~;~ \fnum{c_\ell}{Q_{k,i}^{c_\ell}} = d.
			\end{gather*}
			Under these conditions we show that no previously visited pairs are added to $pairs(k)$ as follows. First, we bound values of $\fnum{u}{P_{k,x}^u}$ and $\fnum{u}{P_{k,y}^u}$ when $x = i$ or $y \in \{i,j\}$ below.
			
			When $(x,i) \llt (i,j)$, we obtain that $\fnum{u}{P_{k,x}^u} \leq a$. Likewise, when $(x,j) \llt (i,j)$, we obtain that $\fnum{u}{P_{k,x}^u} \geq a - \delta$. Similarly, when $(i,u) \llt (i,j)$, we obtain that $\fnum{u}{P_{k,y}^u} \leq a-1$. And finally, when $(x,y) \llt (i,j)$, we obtain that $\fnum{u}{P_{k,y}^u} \geq a - 2\delta$. 
			
			Thus, after the exchange we can show that
			\begin{enumerate}[I)]
				\item $(x,i) \notin pairs(k)$, since \mbox{$\fnum{u}{Q_{k,x}^u} \in [a-\delta, a]$} and $\fnum{u}{Q_{k,i}^u} = a-\delta$.
				\item $(x,j) \notin pairs(k)$, since \mbox{$\fnum{u}{Q_{k,x}^u} \in [a-\delta, a]$} and $\fnum{u}{Q_{k,j}^u} =$ \mbox{ $a - 1$}.
				\item $(i,y) \notin pairs(k)$, since \mbox{$\fnum{u}{Q_{k,y}^u} \in [a-2\delta, a-1]$}, and $\fnum{u}{Q_{k,i}^u} = a-\delta$.
			\end{enumerate}
			Thus in this case also, previously visited pairs are not added back to $pairs(k)$. 
		\end{enumerate}
	\end{enumerate}
	
	We can perform such exchanges repeatedly to remove all pairs from $pairs(k)$. In other words, by repeatedly performing such exchanges, we can ensure that $pairs(1)$, $pairs(2)$, $\ldots$, $pairs(t)$ all become empty at which point we will obtain an optimal multi-placement that satisfies \eqref{eq:dirsum-skew-delta}. The theorem is thus proved. \qed

\end{proof}

\subsection{An Exact Algorithm for Optimal Multi-placements}\label{s:mp-algorithm}

Our algorithm for finding an optimal multi-placement is a bottom-up dynamic program.  It uses the following key properties to achieve the desirable running time. First, as mentioned earlier, the optimal value of the objective function of a multi-placement only depends on its signature (by \autoref{lem:signature-solution}). This implies that the state information maintained by our dynamic program is a function of signature and not multi-placement. Second, it is sufficient to only consider those signatures whose skews are upper-bounded by the skew of the desired multi-placement (by \autoref{t:delta-skew} and \autoref{t:skew-ordering}). This further implies that the state information maintained by our dynamic program is a function of a signature with bounded skew.

For each node of the tree, we maintain a dynamic programming table in which, for each signature $\vec{\sigma}$ with skew $\delta$, we store $G_u(\vec{\sigma})$, the optimal value of any \subplacement at node $u$ which has signature $\vec{\sigma}$. Given completely filled out tables for every child of $u$, we show how to combine the results to obtain a filled table for $u$ itself in \autoref{s:mp-dp}. Once filled out, the table for $u$ will contain the optimal solution, since a ``\subplacement{"} of the root is a multi-placement.

As we shall see shortly, we will need to understand how the signature is affected when two disjoint multi-placements are combined. Two signatures $\vec{\sigma}_1$ and $\vec{\sigma}_2$ can be combined to form signature $\vec{\sigma}$ only if two disjoint multi-placements with signatures $\vec{\sigma}_1$ and $\vec{\sigma}_2$ can be combined to yield a multi-placement with signature $\vec{\sigma}$. As an example, consider the signatures $\vec{\sigma}_1 = \langle 0,3,1\rangle$ and \mbox{$\vec{\sigma}_2 = \langle 0,2,2 \rangle$}. These signatures can be combined to yield the signature $\vec{\sigma} = \langle2,1,1\rangle$. To see this, notice that we can combine disjoint multi-placements $\MP_1$ and $\MP_2$ with signatures $\vec{\sigma}_1$ and $\vec{\sigma}_2$ to yield a multi-placement $\MP$ with signature $\vec{\sigma}$, by doing the following. Combine two placements of size $1$ from $\MP_1$  with two placements of size $1$ from $\MP_2$. This yields the two placements of size $2$ in $\vec{\sigma}$. To obtain the single placement of size $1$, combine one placement of size $1$ from $\MP_1$ with the empty placement in $\MP_2$. Finally, the remaining empty placements are combined to yield the empty placement in $\vec{\sigma}$. In general, there may be multiple ways to combine two signatures, each of which may yield a different signature as a result. 

In \autoref{s:combo-signatures} we give an algorithm to compute $\Phi(\vec{\sigma}, \delta)$, the set of all signature pairs which can be combined to yield $\vec{\sigma}$, which has skew $\delta$, in which both signatures in the pair \emph{also} have skew at most $\delta$.

Suppose that $u$ and $v$ are the only children of $w$. Then $\Phi(\vec{\sigma}, \delta)$ provides the mapping that we use to compute the value of $G_w(\vec{\sigma})$. Roughly speaking, for every pair ($\vec{\sigma}_1, \vec{\sigma}_2) \in \Phi(\vec{\sigma}, \delta)$ our algorithm uses the values in $G_u(\vec{\sigma}_1)$ and $G_v(\vec{\sigma}_2)$ to update the value in $G_w(\vec{\sigma})$. Because $\Phi(\vec{\sigma},\delta)$ enumerates all possible signature pairs which can be combined to yield $\vec{\sigma}$, by considering every pair in $\Phi(\vec{\sigma}, \delta)$ we consider all possible ways to combine optimal \subplacement{s} at $u$ and $v$ to yield a \subplacement at $w$ which has signature $\vec{\sigma}$. The optimal \subplacement at $w$ will be comprised of one such pair. To ensure we attain the optimum, we try all pairs.

Our algorithm starts by computing a table of values of $\Phi(\vec{\sigma}, \delta)$ for all values of $\vec{\sigma}$ (recall that $\delta$ is fixed and given as input). An algorithm to compute this table is described in \autoref{s:combo-signatures}. The dynamic program itself is described in \autoref{s:mp-dp}. Briefly, the dynamic program fills tables of $G_u(\vec{\sigma})$ for all nodes $u$ and signatures $\vec{\sigma}$ by visiting each edge of the tree according to a modified post-order traversal. Each visited edge combines two previously unconnected portions of the tree. When each edge is visited, we update the dynamic programming table associated with the parent node of the edge in question, using the table of values for $\Phi$ to determine how the cells in the dynamic programming table of $G_u(\vec{\sigma})$ are combined. Once the table for the root node has been filled, we can obtain the optimal solution in the usual way by examining a record of the dynamic programming computation.

\subsubsection{Combining Signatures with Bounded Skew}\label{s:combo-signatures}

Given a signature $\vec{\sigma}$ with skew at most $\delta$, we wish to find all signatures $\vec{\sigma}'$ and $\vec{\sigma}''$, each with skew at most $\delta$ which can be combined to yield $\vec{\sigma}$. Every such combination can be expressed by a positive integer matrix as follows. Let $x_{ij}$ be the number of placements with size $\repfact - i$ from $\vec{\sigma}'$ which are combined with placements of size $\repfact - j$ from $\vec{\sigma}''$ where $i,j \in \{0,...,\repfact\}$ (note we use the convention that matrix rows and columns are zero-indexed). Since no more than $\vec{\sigma}'_i$ placements of size $\repfact - i$ can be taken from $\vec{\sigma}'$, we have that $\vec{\sigma}'_i = \sum_j x_{ij}$. Likewise, we have that $\vec{\sigma}''_j = \sum_i x_{ij}$, since no more than $\vec{\sigma}''_j$ placements of size $j$ can be taken from $\vec{\sigma}''$s. Furthermore, we must have exactly as many placements in $\vec{\sigma}'$ and $\vec{\sigma}''$ as we do in $\vec{\sigma}$, that is,
\[\sum_{i=0}^{\repfact} \vec{\sigma}_i' = \sum_{i=0}^{\repfact} \vec{\sigma}_i'' = \sum_{i=0}^{\repfact}\sum_{j=0}^\repfact x_{ij} = \sum_{i=0}^{\repfact} \vec{\sigma}_i = \m. \]
Finally, the values of $x_{ij}$ must combine to yield $\vec{\sigma}$. Since when placements of sizes $\repfact - i$ and $\repfact - j$ are combined they yield one of size $2\repfact - i - j$. Setting $2 \repfact - i - j = \repfact - k$, we have $\repfact +k = i+j$, and thus
\[\vec{\sigma}_{k} = \sum_{(i,j) :  i + j  = \repfact + k } x_{ij}. \]
For fixed $k$, the above sum ranges over the anti-diagonals\footnote{An anti-diagonal of a square matrix is a set of cells $x_{ij}$ for which $i+j=k$, for some fixed $k$.} of a matrix formed by the entries of $x_{ij}$. Thus, the vectors $\vec{\sigma}$, $\vec{\sigma}'$, and $\vec{\sigma}''$ can each be seen to arise from the row, column, and the \emph{last} $\rho + 1$ anti-diagonal sums, respectively, of a non-negative $(\repfact + 1) \times (\repfact + 1)$ integer matrix $X$, where the sum of all entries in $X$ is $m$. An example matrix $X$ and its relationship to $\vec{\sigma}$, $\vec{\sigma}'$ and $\vec{\sigma}''$ is depicted in \autoref{fig:integer-matrix-example}.

\begin{figure}
	\centering
	\begin{subfigure}{0.48\linewidth}
		\begin{tikzpicture}[diagonal/.style={color=gray!30!white, line width=2pt}]
		\pgfdeclarelayer{bg}
		\pgfsetlayers{bg,main}
		
		\matrix (m) [matrix of math nodes, nodes in empty cells, right delimiter={]}, left delimiter ={[}] {
			0 & 0 & 0 & 0 & 0 & 0 \\
			0 & 0 & 0 & 0 & 0 & 0 \\
			0 & 0 & 0 & 0 & 1 & 1 \\
			0 & 0 & 0 & 1 & 2 & 2 \\
			0 & 0 & 0 & 1 & 1 & 0 \\
			0 & 0 & 0 & 0 & 0 & 0 \\
		} ;
		
		\matrix (sigprimeprime) [matrix anchor= north west, matrix of math nodes, nodes in empty cells, xshift=0.5cm] at (m.north east) {
			0 \\
			0 \\
			2 \\
			5 \\
			2 \\
			0 \\
		} ;
		
		\node [left=-0.05cm of sigprimeprime-1-1] {$\langle$};
		\node [right=-0.05cm of sigprimeprime-6-1.east] {$\rangle = \vec{\sigma}''$};
		
		\matrix (sigprime) [matrix anchor=south west, matrix of math nodes, nodes in empty cells] at (m.north west) {
			0 & 0 & 0 & 2 & 4 & 3 & \rangle \\
		} ; 
		
		\node[left=-0.1cm of sigprime] {$\vec{\sigma}' = \langle $};
		
		\matrix (sigma) [matrix anchor = north west, matrix of math nodes, nodes in empty cells, xshift=-0.75cm] at (m.south west) {
			0 & 2 & 4 & 3 & 0 & 0 & \rangle \\
		} ;
		\node [left=-0.1cm of sigma] {$\vec{\sigma} = \langle$};
		
		\begin{pgfonlayer}{bg}
		\draw[diagonal] (m-1-6.north east) -- (sigma-1-1.north east);
		\draw[diagonal] (m-2-6.north east) -- (sigma-1-2.north east);
		\draw[diagonal] (m-3-6.north east) -- (sigma-1-3.north east);
		\draw[diagonal] (m-4-6.north east) -- (sigma-1-4.north east);
		\draw[diagonal] (m-5-6.north east) -- (sigma-1-5.north east);
		\draw[diagonal] (m-6-6.north east) -- (sigma-1-6.north east);
		\end{pgfonlayer}
		
		\draw[dashed] (m-3-4.north west) rectangle (m-5-6.south east);
		\end{tikzpicture}
		\subcaption{}\label{f:matrix1}
	\end{subfigure}
	\begin{subfigure}{0.48\linewidth}\begin{tikzpicture}[diagonal/.style={color=gray!30!white, line width=2pt}]
		\pgfdeclarelayer{bg}
		\pgfsetlayers{bg,main}
		
		\matrix (m) [matrix of math nodes, nodes in empty cells, right delimiter={]}, left delimiter ={[}] {
			0 & 0 & 0 & 0 & 0 & 0 \\
			0 & 0 & 0 & 0 & 0 & 0 \\
			0 & 0 & 0 & 0 & 0 & 2 \\
			0 & 0 & 0 & 2 & 2 & 1 \\
			0 & 0 & 0 & 0 & 2 & 0 \\
			0 & 0 & 0 & 0 & 0 & 0 \\
		} ;
		
		\matrix (sigprimeprime) [matrix anchor= north west, matrix of math nodes, nodes in empty cells, xshift=0.5cm] at (m.north east) {
			0 \\
			0 \\
			2 \\
			5 \\
			2 \\
			0 \\
		} ;
		
		\node [left=-0.05cm of sigprimeprime-1-1] {$\langle$};
		\node [right=-0.05cm of sigprimeprime-6-1.east] {$\rangle = \vec{\sigma}''$};
		
		\matrix (sigprime) [matrix anchor=south west, matrix of math nodes, nodes in empty cells] at (m.north west) {
			0 & 0 & 0 & 2 & 4 & 3 & \rangle \\
		} ; 
		
		\node[left=-0.1cm of sigprime] {$\vec{\sigma}' = \langle $};
		
		\matrix (sigma) [matrix anchor = north west, matrix of math nodes, nodes in empty cells, xshift=-0.75cm] at (m.south west) {
			0 & 2 & 4 & 3 & 0 & 0 & \rangle \\
		} ;
		\node [left=-0.1cm of sigma] {$\vec{\sigma} = \langle$};
		
		\begin{pgfonlayer}{bg}
		\draw[diagonal] (m-1-6.north east) -- (sigma-1-1.north east);
		\draw[diagonal] (m-2-6.north east) -- (sigma-1-2.north east);
		\draw[diagonal] (m-3-6.north east) -- (sigma-1-3.north east);
		\draw[diagonal] (m-4-6.north east) -- (sigma-1-4.north east);
		\draw[diagonal] (m-5-6.north east) -- (sigma-1-5.north east);
		\draw[diagonal] (m-6-6.north east) -- (sigma-1-6.north east);
		\end{pgfonlayer}
		
		\draw[dashed] (m-3-4.north west) rectangle (m-5-6.south east);
		\end{tikzpicture}
		\subcaption{}\label{f:matrix2}
	\end{subfigure}
	\caption[Illustration of two matrix supports which yield a given signature.]{(\subref{f:matrix1}) One of the 6x6 matrix supports that yields signature $\vec{\sigma} = \langle 0,0,2,4,3,0 \rangle$ as a combination of $\vec{\sigma}' = \langle0,0,0,2,4,3 \rangle$ and \mbox{$\vec{\sigma}'' = \langle 0,0,2,5,2,0 \rangle$.} Entries of $\vec{\sigma}$ are formed by the highlighted diagonal sums. (\subref{f:matrix2}) A second matrix support for the same three vectors.}\label{fig:integer-matrix-example}
\end{figure}

Clearly, every pair of signatures $\vec{\sigma}', \vec{\sigma}''$ which can be validly combined to yield $\vec{\sigma}$ arise from such an integer matrix. We call the matrix $X$ the \emph{matrix support} of the triple $(\vec{\sigma}, \vec{\sigma}', \vec{\sigma}'')$. Notice that a given triple may have multiple matrix supports (see \autoref{f:matrix2}), but every possible pair of signatures $\vec{\sigma}'$ $\vec{\sigma}''$ which can be combined to form $\vec{\sigma}$ has \emph{at least one} matrix support. If $\vec{\sigma}'$ and $\vec{\sigma}''$ each have girth $\rho$, the matrix support will have dimension $(\rho + 1) \times (\rho + 1)$. Moreover each such matrix support is lower anti-triangular\footnote{The \emph{main anti-diagonal} of a square matrix is the set of cells comprising the anti-diagonal starting at the lower-left cell and ending at the upper-right cell. An \emph{lower anti-triangular matrix} is a square matrix in which all entries above the main anti-diagonal are zero.}. To see this, notice that if $X$ is not lower anti-triangular, some entry above the main anti-diagonal is non-zero, which implies that some pair of placements is combined to yield one with size $2\repfact - i - j > \repfact$, since, for any entry above the main anti-diagonal, $i+j < \repfact$. Thus, the placement so yielded would have a size which exceeds the required limit of on the girth of $\vec{\sigma}$, namely $\repfact$. Hence no entry above the main anti-diagonal is non-zero. Therefore, by iterating over all lower anti-triangular matrices $[0,m]^{(\rho+1)\times(\rho+1)}$ we can visit all vector triples which have a matrix support, and thereby compute a table of values for $\Phi(\vec{\sigma}, \delta)$. 

However, when every signature involved has skew bounded by $\delta$, many of the entries of $X$ will be zero. For example in \autoref{fig:integer-matrix-example} each signature has a skew of $2$, and thus only the $3\times 3$ sub-matrix surrounded by a dotted rectangle is non-zero. Since $\vec{\sigma}'$ and $\vec{\sigma}''$ have skews bounded by $\delta$ and these vectors are the column and row sums of $X$, we only need to consider those non-negative integer matrices in which only a $(\delta+1)\times(\delta+1)$ sub-matrix of entries are non-zero. Let $Y$ denote this sub-matrix. Since $X$ is lower anti-triangular, the non-zero entries of $Y$ must on or below the main anti-diagonal of $X$. Moreover, since $\vec{\sigma}$ has a skew bounded by $\delta$, we obtain the additional restriction that at most $\delta+1$ contiguous anti-diagonals of this matrix are non-zero. Such a matrix is referred to as an \emph{anti-banded matrix} and the number of non-zero anti-diagonals is referred to as the \emph{bandwidth} of the matrix. Putting it all together, we have that the set of possible matrix supports is given by the set of lower anti-triangular non-negative integer $(\repfact + 1 )\times (\repfact+1)$ matrices in which the only non-zero entries are contained within a $(\delta+1)\times(\delta+1)$ anti-banded sub-matrix with bandwidth at most $\delta+1$. Such matrices are schematically depicted in \autoref{fig:banded-matrix-schematic}.

\begin{figure}
	\centering
\begin{tikzpicture}[diagonal/.style={color=gray!50!white, line width=2pt}]
\usetikzlibrary{intersections}
\pgfdeclarelayer{bg}
\pgfsetlayers{bg,main}

\matrix (m) [matrix of math nodes, nodes in empty cells, right delimiter={]}, left delimiter ={[}] {
	\phantom{0} & \phantom{0} & \phantom{0} & \phantom{0} & \phantom{0} & \phantom{0} & \phantom{0} & \phantom{0} & \phantom{0}\\
	\phantom{0} & \phantom{0} & \phantom{0} & \phantom{0} & \phantom{0} & \phantom{0} & \phantom{0} & \phantom{0} & \phantom{0}\\
	\phantom{0} & \phantom{0} & \phantom{0} & \phantom{0} & \phantom{0} & \phantom{0} & \phantom{0} & \phantom{0} & \phantom{0}\\
	\phantom{0} & \phantom{0} & \phantom{0} & \phantom{0} & \phantom{0} & \phantom{0} & \phantom{0} & \phantom{0} & \phantom{0}\\
	\phantom{0} & \phantom{0} & \phantom{0} & \phantom{0} & \phantom{0} & \phantom{0} & \phantom{0} & \phantom{0} & \phantom{0}\\
	\phantom{0} & \phantom{0} & \phantom{0} & \phantom{0} & \phantom{0} & \phantom{0} & \phantom{0} & \phantom{0} & \phantom{0}\\
	\phantom{0} & \phantom{0} & \phantom{0} & \phantom{0} & \phantom{0} & \phantom{0} & \phantom{0} & \phantom{0} & \phantom{0}\\
	\phantom{0} & \phantom{0} & \phantom{0} & \phantom{0} & \phantom{0} & \phantom{0} & \phantom{0} & \phantom{0} & \phantom{0}\\
	\phantom{0} & \phantom{0} & \phantom{0} & \phantom{0} & \phantom{0} & \phantom{0} & \phantom{0} & \phantom{0} & \phantom{0}\\
} ;

\matrix (sigprimeprime) [matrix anchor= north west, matrix of math nodes, nodes in empty cells, xshift=0.5cm] at (m.north east) {
	\phantom{0} \\ \phantom{0} \\ \phantom{0} \\ \phantom{0} \\ \vec{\sigma}'' \\ \phantom{0} \\ \phantom{0} \\ \phantom{0} \\ \phantom{0}\\
} ;

\matrix (sigprime) [matrix anchor=south west, matrix of math nodes, nodes in empty cells, yshift=2mm] at (m.north west) {
	\phantom{0} & \phantom{0} & \phantom{0} & \phantom{0} & \phantom{0} & \vec{\sigma}' & \phantom{0}  & \phantom{0} & \phantom{0}\\
} ; 

\matrix (sigma) [matrix anchor = north west, matrix of math nodes, nodes in empty cells, xshift=-0.54cm, yshift=-2mm] at (m.south west) {
	\phantom{0} & \phantom{0} & \phantom{0} &  \vec{\sigma} & \phantom{0}& \phantom{0} & \phantom{0} & \phantom{0} & \phantom{0}\\
} ;

\draw (sigprime.north west) rectangle (sigprime.south east);
\draw (sigprimeprime.north west) rectangle (sigprimeprime.south east);
\draw (sigma.north west) rectangle (sigma.south east);

\node[anchor=center, rotate=50, align=center] at($(m-5-6.south east)!0.4!(m-6-7.south east)$) {\small non-zero \\ entries};

\draw[dashed] (m-4-5.north west) rectangle (m-7-8.south east);

\draw[dashed] (m-4-5.north west) -- ($(sigprimeprime-4-1.north east)+(0.5,0)$);
\draw[dashed] (m-7-8.south west) -- ($(sigprimeprime-7-1.south east)+(0.5,0)$);

\draw[dashed] (m-4-8.north east) -- ($(sigprime-1-8.north east)+(-0.15,0.5)$);
\draw[dashed] (m-7-5.north west) -- ($(sigprime-1-5.north west)+(0,0.5)$);

\draw[dashed] (m-4-7.north east) -- ($(sigma-1-1.south west)-(0,0.5)$);
\draw[dashed] ($(m-6-8.north east)-(0,0.5)$) -- ($(sigma-1-4.south west)+(0.25,-0.5)$);

\draw [decorate,decoration={brace,amplitude=5pt,raise=4pt}]
($(sigprimeprime-4-1.north east)+(0.5,0)$)--($(sigprimeprime-7-1.south east)+(0.5,0)$) node [black,midway,xshift=1cm] {$\delta+1$};

\draw [decorate,decoration={brace,amplitude=5pt,raise=4pt}]
($(sigprime-1-5.north west)+(0.,0.5)$)--($(sigprime-1-8.north east)+(-0.15,0.5)$) node [black,midway,yshift=0.75cm] {$\delta+1$};

\draw [decorate,decoration={brace,amplitude=5pt,raise=4pt}]
($(sigma-1-4.south west)+(0.25,-0.50)$)--($(sigma-1-1.south west)-(0,0.5)$) node [black,midway,yshift=-0.5cm] {$\delta+1$};

\end{tikzpicture}
	\vspace{-0.5cm}
	
	\caption[Schematic representation of an lower anti-triangular matrix in which all non-zero entries are contained an an anti-banded submatrix with bounded bandwidth.]{A schematic representation of an lower anti-triangular non-negative  $(\rho+1)\times(\rho+1)$ matrix in which all non-zero entries are contained in an anti-banded $(\delta+1)\times(\delta+1)$ sub-matrix with bandwidth at most $(\delta+1)$.}\label{fig:banded-matrix-schematic}
\end{figure}

We can enumerate all valid matrix supports by first, selecting the upper-left corner of $Y$ as it sits in $X$. We must keep the $(\delta + 1) \times (\delta + 1)$ sub-matrix of non-zero entries within certain bounds, both to ensure that no entry above the main anti-diagonal is non-zero and to ensure that the sub-matrix does not exceed the boundaries of its parent matrix. The upper left corner of the sub-matrix must lie between diagonal $\rho - \delta$ and diagonal $2(\rho - \delta - 1)$. Thus, it is given by coordinates \[(a,b) \in \{(i,j) \in \{0,...,\rho\}^2 : \rho - \delta \leq i + j \leq 2(\rho - \delta-1)\}.\]  Next, we select the index of the first non-zero diagonal in $Y$, which we denote by $d$. Clearly, our choice of $d$ must lie between $1$ and $\delta+1$. However, we must also take care when selecting $d$ to ensure that no entry above the main anti-diagonal of $X$ is non-zero. This is ensured when $d + a+b \geq \rho+1$, which implies that we need to take \[d \in \{\max[1, \rho+1 - a - b],...,\delta+1\}.\] Notice that every choice of $a,b,$ and $d$ generates a unique set of non-zero cells of $X$. Moreover, we have previously determined that the sum of these non-zero cells must be non-negative integers which sum to $m$, that is, the non-zero cells form a weak composition\footnote{Recall that a \emph{weak composition of an integer $n$ into $k$ parts} is an ordered $k$-tuple of non-negative integers whose sum is $n$.} of $m$ into a number of parts equal to the number of cells available. More specifically, we can generate all matrices $X$ by, for each choice of $(a,b)$ and $d$, enumerating all weak compositions of $m$ into $B(d)$ parts, where $B(d)$ is the number of cells in anti-diagonals \mbox{$d, ..., d + \delta$} of a $(\delta+1)\times(\delta+1)$ matrix. Each such weak composition is placed in the non-zero entries of $X$. We can then sum the rows, columns, and anti-diagonals of $X$ to obtain the triple $(\vec{\sigma}, \vec{\sigma}', \vec{\sigma}'')$ for which $X$ is the matrix support.

The enumeration of possible matrices for $Y$ is easily accomplished using known loopless gray codes for weak compositions \cite{Savage1997}. However, to use these codes, we must have an exact closed-form for $B(d)$, which we now compute.

$B(d)$ is intimately related to the triangular numbers $T_i = \frac{i(i+1)}{2}$. It is not hard to see that $B(d) := (\delta + 1)^2 - T_{d-1} - T_{\delta +1 -d}$ for $d \in \{0,...,\delta+1\}$. This quantity is arrived at by taking the number of cells in a $(\delta+1)^2$ matrix and removing the entries which must be zero. If $d$ is the first non-zero anti-diagonal, then the $T_{d-1}$ entries of the matrix above diagonal $d$ are zero. Likewise, if $\delta+d$ is the last non-zero anti-diagonal then there are $T_{\delta+1 - d}$ entries below diagonal $\delta+d$ which must be zero. Routine simplification of $B(d)$ yields
\[
B(d) = \frac{3\delta^2 + \delta}{2} + d(\delta+2 - d) + 2.
\]
For our running time analysis, we will need a bound on $B(d)$ which it is convenient to describe now. Notice that the only portion of this equation which depends on $d$ is the term $d(\delta+2 - d)$, which we can upper bound by $\frac{1}{4}(\delta+2)^2$ by the AM-GM inequality\footnote{The AM-GM inequality states that $\sqrt{xy} \leq (x+y)/2$. Letting $y = n-x$ and squaring both sides, we have $x(n-x) \leq (x+n-x)^2/4 = n^2/4$. Setting $x = d$ and $n = \delta+2$ yields the claimed bound.}, thus, we have that
\begin{equation}\label{eq:diagonal}
B(d) \leq \frac{3\delta^2 + \delta}{2} + \frac{(\delta+2)^2}{4} + 2 = \frac{7\delta^2 + 6\delta}{4} + 3.
\end{equation}

With this formula in hand, we can use \autoref{alg:phi-table} to iterate over all possible matrices $Y$ and thereby obtain a table of values for $\Phi(\vec{\sigma}, \delta)$, for fixed $\delta$ and values of $\vec{\sigma}$ ranging over all possible signatures with skew at most $\delta$. Recall that $\repfact$ and $\delta$ are fixed, and given as input. The pseudocode of \autoref{alg:phi-table} has slight inefficiencies to preserve clarity, none of which affect the asymptotic bound on the running time.

\begin{algorithm}[h]
	\caption{Algorithm to fill table for $\Phi(\vec{\sigma}, \delta)$.}\label{alg:phi-table}
	\SetKwFunction{rowSum}{row-sum}
	\SetKwFunction{colSum}{col-sum}
	\SetKwFunction{diagSum}{diag-sum}
	Let $X$ be a $(\rho+1) \times (\rho+1)$ zero matrix\;
	\rowSum{$X$} returns the vector of row sums of matrix $X$\;
	\colSum{$X$} returns the vector of column sums of matrix $X$\;
	\diagSum{$X,i,j$} returns the vector of diagonal sums over diagonals $i,i+1,...,j$ of matrix $X$ for $1 \leq i < j \leq 2\repfact + 1$\;
	\For{$(a,b) \in \{(i,j) : \rho - \delta \leq i + j \leq 2(\rho - \delta - 1)\}$}{	
		\For{$d \gets \max[1, \rho + 1 - a - b], ...,\delta+1]$}{
			\For{each weak composition $C$ of $\m$ into $B(d)$ parts}{
				fill the non-zero entries of submatrix $Y$ according to the entries of $C$\;
				copy sub-matrix $Y$ into $X$, using $(a,b)$ as the top-left corner of $Y$ in $X$\;
				$\vec{\sigma}' \gets $\rowSum{$X$} \; 
				$\vec{\sigma}'' \gets $\colSum{$X$} \;
				$\vec{\sigma} \gets $\diagSum{$X,\repfact+1,2\repfact-1$}\; 
				$\Phi(\vec{\sigma}, \delta) \gets \Phi(\vec{\sigma}, \delta) \cup \{(\vec{\sigma}', \vec{\sigma}'')\}$\;
				reset matrix $X$ so every entry is zero\;
			}
		}
	}
\end{algorithm}

The body of the inner-most loop can be implemented to run in $O(\delta^2)$ time. The weak compositions of $m$ into $B(d)$ parts may each be visited in constant time per composition using a loopless gray code \cite{Savage1997}. Thus, the two innermost loops yield a total number of iterations given by the following formula.
\begin{align*}
\sum_{d=1}^{\delta+1} \binom{m + B(d) - 1}{B(d) - 1} &\leq \sum_{d=1}^{\delta+1} \binom{m + \frac{7\delta^2 + 6\delta}{4} + 2}{\frac{7\delta^2 + 6\delta}{4}+2} \\
&<(\delta+1)\left( \frac{e(m + \frac{7\delta^2 + 6\delta}{4} + 2)}{\frac{7\delta^2 + 6\delta}{4} + 2}   \right)^{\frac{7\delta^2 + 6\delta}{4} + 2}\\
&= (\delta+1)\left( \frac{4em}{7\delta^2 + 6\delta + 8}  +e \right)^{\frac{7\delta^2 + 6\delta}{4} + 2.}
\intertext{Where the first bound follows from our bound on $B(d)$, the second bound follows by Stirling's approximation, and the remainder is routine simplification. Combining it with the two outer-most loops, and the running time of the body of the inner-most loop, we obtain a running time of}
&O\left((\rho-\delta)^2\delta^3\left( \frac{4em}{7\delta^2 + 6\delta + 8}  +e \right)^{\frac{7\delta^2 + 6\delta}{4} + 2}\right).
\end{align*} 

We also use the above expression to bound the size of table generated by the algorithm. While each triple of \signature{s} may have multiple matrix supports, the problem of determining an exact number of triples appears to be a difficult one. In fact, mathematicians are still working to find a closed formula for the number of non-negative integer matrices with prescribed row and column sums \cite{Barvinok2012}. Moreover, this problem does not include any constraints on the diagonal sums, or the requirement that all entries sum to a given integer $m$.

In the next section we will see how this table is used in a dynamic program to find an optimal solution to \autoref{p:graph-multi-placement}.

\subsubsection{Bottom-Up Dynamic Program}\label{s:mp-dp}

Recall that $G_u(\vec{\sigma})$ is the value of an optimal \subplacement at node $u$ which has signature $\vec{\sigma}$. In this section, we present a recurrence which determines a table of values for $G_u(\vec{\sigma})$ at each node $u$. This recurrence is computed for each edge in order of a post-order traversal. For each node $u$, the first edge of $u$ which is visited by our algorithm first is termed an \emph{\up} edge, while all other edges are termed \emph{\out} edges.

During the algorithm, we compute several intermediate values of $G_u(\vec{\sigma})$ as edges connecting $u$ to its children are included. Suppose that node $u$ has children $c_1,...,c_\nChildren$. We refer to the subtree rooted at node $u$ and containing children $c_1,...,c_k$ and all of their descendants by $\subtree{u}{k}$. See \autoref{f:notation} for a diagram illustrating this notation. We define $G_u^k(\vec{\sigma})$ as the minimum objective value obtainable in $\subtree{u}{k}$ by a multi-placement with signature $\vec{\sigma}$. Once the table of values for $G_u^k(\vec{\sigma})$ is obtained, we can determine the values of $G_u^{k+1}(\vec{\sigma})$ using the recurrence described below. After every edge connecting $u$ to its children has been visited, we set $G_u(\vec{\sigma}) = G_u^t(\vec{\sigma})$.

\begin{figure}
	\centering
\begin{tikzpicture}
\tikzset{
	circnode/.style={circle,draw,minimum size=.6cm,inner sep=0},
	rectnode/.style={draw,minimum width=.25cm,fill=white,minimum height=.6cm,child anchor=north,anchor=north},
}

\Tree[.\node {};
[.\node [circnode] (u) {$u$};	\edge node  [auto=right,pos=0.4] {up};
[.\node [circnode]{$\child{1}$}; 
[.\node (start) [circnode,minimum size=0.25cm] {}; ]
[.\node [circnode, minimum size=0.25cm] {}; ]
[.\node [circnode, draw=none] {$...$}; ]			
[.\node (end)  [circnode, minimum size=0.25cm] {}; ]			
]
\edge node [auto=right,pos=0.95,xshift=6pt, yshift=-5pt] {out};
[.\node [circnode] {$\child{2}$};
[.\node (start2) [circnode,minimum size=0.25cm] {}; ]
[.\node [circnode, minimum size=0.25cm] {}; ]
[.\node [circnode, draw=none] {$...$}; ]			
[.\node (end2)  [circnode, minimum size=0.25cm] {}; ]
]
\edge node  [auto=right,pos=0.5,xshift=7pt,yshift=2pt] {out};
[.\node [circnode] {$\child{3}$};	
[.\node (start3) [circnode,minimum size=0.25cm] {}; ]
[.\node [circnode, minimum size=0.25cm] {}; ]
[.\node [circnode, draw=none] {$...$}; ]			
[.\node (end3)  [circnode, minimum size=0.25cm] {}; ]
]
\edge node [auto=left,midway,pos=0.4] {out};
[.\node [circnode] {$\child{4}$}; 	]
]
];

\draw[thick,dashed] ($(start.west |- u.north)+(-0.2,0.2)$) rectangle ($(end3.south east)+(0.05,-0.2)$);

\node at ($(start.west |- u.north)+(0,0.5)$) {$\subtree{u}{3}$};

\end{tikzpicture}
	\caption{Diagram illustrating our use of notation. Nodes within the dashed box comprise subtree $\subtree{u}{3}$, along with all of their descendants.}\label{f:notation}
\end{figure}

The recurrence we describe for $G_u^k(\vec{\sigma})$ is comprised of three cases, the leaf case, up case, and out case, stated below. The up and out cases handle the inclusion of up and out edges respectively, while the leaf case forms the base case of the recursion. We will justify the recurrence immediately after its statement.
\begin{align*}
G_u^1(\vec{\sigma}) &= G_{c_1}(\vec{\sigma}) + \vec{\sigma} &&\text{(up case)}\\
G_u^{k+1}(\vec{\sigma}) &= \min_{\vec{\sigma}', \vec{\sigma}'' \in \Phi(\vec{\sigma}, \delta)} \left[ G_u^k(\vec{\sigma}') + G_{c_{k+1}}(\vec{\sigma}'') + \vec{\sigma} - \vec{\sigma}' \right] &&\text{(out case)}
\intertext{The base case occurs for a leaf $\ell$, with capacity $c(\ell)$, in which}
G_\ell(\vec{\sigma}) &= \left\{\begin{array}{@{}ll@{}}
\vec{\sigma} ~~~~& \text{if } \vec{\sigma}_i \leq 1 \text{ for all $i$ and }  \sum_{i=0}^{\repfact} i\vec{\sigma}_i \leq c(\ell)\\
\infty       ~~~~& \text{otherwise}
\end{array}\right.&&\text{(leaf case)}
\end{align*}

We justify this recurrence as follows. Consider first the leaf case. Since each placement is a subset of leaves, no placement may include more than one replica at any given leaf node, justifying the upper bound on values of $\vec{\sigma}_i$. For a leaf node $\ell$, the only possible multi-placements are those which use no more capacity than $\ell$ has been allotted. The capacity used by a multi-placement with \signature $\vec{\sigma}$ is easily seen to be $\sum_{i=0}^\repfact i \vec{\sigma}_i$. When the \signature exceeds the capacity, no optimal value exists, which we represent by $\infty$, defined to be lexicographically larger than any vector.

In the up case, the only leaves available are those under child $c_1$. In this case, all we must do is include the additional contribution of node $u$ to the optimal value computed for $c_1$. Since the signature of the placement at child $c_1$ is $\vec{\sigma}$, by \autoref{lem:signature-solution}, node $u$ contributes an additional factor of $\vec{\sigma}$ to the optimal solution.

To help explain the out case, we appeal to the illustration in \autoref{f:out-case-explained}. Intuitively, we are merely splitting the signature $\vec{\sigma}$ among the subtrees $T_u^k$ and child $c_{k+1}$ in the optimal way. Since $\Phi(\vec{\sigma}, \delta)$ contains all signature pairs which can be combined to form $\vec{\sigma}$, taking the minimum over all such signature pairs yields the best possible combination. Finally, we must adjust the value to account for the fact that the term $G_u^k(\vec{\sigma})$ includes a contribution of $\vec{\sigma}'$ from node $u$, which is now inaccurate. Adding the multi-placement of \signature $\vec{\sigma}''$ has increased the contribution of node $u$ to $\vec{\sigma}$. We account for this increase by including the correction factor $\vec{\sigma} - \vec{\sigma}'$, which removes the incorrect value and includes the correct value in its place.

\begin{figure}
	\centering
\begin{tikzpicture}
\tikzset{
	circnode/.style={circle,draw,minimum size=.7cm,inner sep=0},
	smallnode/.style={circle,draw,minimum size=.25cm,inner sep=0},
	rectnode/.style={draw,minimum width=.25cm,fill=white,minimum height=.7cm,child anchor=north,anchor=north},
}

\Tree[.\node {};
[.\node [circnode] (u) {$u$};
[.\node (start) [smallnode] {}; ]
[.\node [smallnode] {}; ]
[.\node [smallnode, draw=none] {$...$}; ]			
[.\node (end)  [smallnode] {}; ]
]
]
];

\node [circnode, right=1.25cm of end] (ck) {$c_k$};
\node [smallnode, below=0.25cm of ck] (b) {};
\node [smallnode, right=0.25cm of b] (a) {};
\node [smallnode, left=0.25cm of b] (c) {};
\node [smallnode, below left =0.25cm of b] (e) {};
\node [smallnode, below right =0.25cm of b] (f) {};
\node [smallnode, below right =0.25cm of a] (g) {};

\draw (u.south) -- (ck.north);
\draw (b.north) -- (ck);
\draw (a.north) -- (ck);
\draw (c.north) -- (ck);
\draw (b) -- (e.north);
\draw (b) -- (f.north);
\draw (a) -- (g.north);

\draw [decorate,decoration={brace,amplitude=5pt, mirror},yshift=-5pt]
(start.south) -- (end.south) node [midway, yshift=-10pt] {\footnotesize $k - 1$};

\draw[thick,dotted] ($(start.west |- u.north)+(-0.3,0.3)$) rectangle ($(end.south east)+(0.2,-0.55)$);
\draw[thick,dotted] ($(c.west |- ck.north)+(-0.3,0.3)$) rectangle ($(g.south east)+(0.2,-0.2)$);

\node at ($(start.west |- u.north)+(0,0.7)$) (anchor) {$\subtree{u}{k-1}$};

\node at ($(c.west |- ck.north)+(0,0.7)$) {$\subtree{\child{k}}{}$};

\node at ($(u)+(0,-2.1)$) {signature: $\vec{\sigma}'$};
\node at ($(ck)+(0,-1.7)$) (anchor2){signature: $\vec{\sigma}''$};
\draw [thick, dashed] ($(anchor)+(-0.8,0.4)$) rectangle ($(anchor2 -| g.south east)+(0.4,-0.3)$);
\node at ($(anchor)+(-0.8,0.8)$) {$\subtree{u}{k}$};
\node[align=center] at ($(anchor2 -| g.south east)+(-2.1,-1)$) {signature: $\vec{\sigma}$\\ $(\vec{\sigma}', \vec{\sigma}'') \in \Phi(\vec{\sigma}, \delta)$};

\end{tikzpicture}
	\caption{Diagram illustrating how out edges are added to the tree.}\label{f:out-case-explained}
\end{figure}

One minor detail is not accounted for in the above recurrence. At no point should we report that an optimal multi-placement with signature $\vec{\sigma}$ exists unless the leaves of $T_u^k$ have sufficient capacity to store such a multi-placement. This is easily rectified by defining $\G_u^k = \infty$ when $\sum_{i=0}^\repfact i\vec{\sigma}_i > c(T_u^k)$, where $c(T^u_k)$ is defined as the sum of capacities of leaves in the subtree rooted at $T_u^k$. These capacities can be easily obtained as a preprocessing step. Recall that, by convention, we take $\infty$ to be lexicographically larger than any vector.

Using the above recurrence it is a simple matter to compute the value of an optimal multi-placement with a particular signature $\vec{\sigma_0}$ with skew bounded by $\delta$. Simply compute, bottom up, $G_u(\vec{\sigma})$ for all values of $\vec{\sigma}$ which have skew at most $\delta$. Once the root node has had the value $\G_u(\vec{\sigma_0}$ filled in, the computation can stop. Of course, we are not merely interested in the value of the optimal solution, we must produce the optimal solution itself. The multi-placement itself is easily obtained by storing the pair of signatures which results in the minimum value for each out-case above. This record of the computation is all that is needed to obtain a multi-placement which has the given signature.

Finally, we note that once the table $G_u^{k+1}(\vec{\sigma})$ has been entirely filled out, the table containing values of $G_u^k$ is no longer needed, and can be overwritten by values of $G_u^{k+1}$. In this way, only two tables for $G_u$ ever need to be stored for each node $u$ throughout the procedure. Moreover, the second table can be discarded or reused after the table for $G_u^t$ has been recorded. In total, only $n+1$ tables need to be stored for a complete run of the dynamic program on a tree with $n$ nodes.

The running time of the entire procedure can be bounded as follows. Filling out a single table for $G_u^{k}$ takes, at most, time bounded by the time taken to build the table of values for $\Phi$. For each table, we compare and update the objective function, which takes $O(\rho)$ time. Finally, we only perform an update of the table for $G_u$ once for each edge $(u,v)$ in the tree, so, summing over all $\n-1$ edges of the tree, we obtain an overall running time of
\[O\left(\n\repfact(\repfact-\delta)^2\delta^3\left( \frac{4em}{7\delta^2 + 6\delta + 12}  +e \right)^{\frac{7\delta^2 + 6\delta}{4} + 3}\right).\]

This establishes that \autoref{p:graph-multi-placement} can be solved in polynomial time for fixed values of $\delta$.

\section{NP-hardness of Single-block Replica Placement in Bipartite Graphs}
\label{s:np-hard}

We could also consider replica placement in directed graphs which are more general than trees. To this end, we can define the following extension of \autoref{p:graph-single-placement} by replacing the arborescence by a directed graph, and the set of leaves by a set of \emph{candidate placement} nodes.
\begin{problem}[Optimal Single-block Placement in Graphs]
	\label{p:graph-sp}
	Given a directed graph $\graph{\V}{\A}$, a set of candidate nodes $C \subseteq \V$, and positive integer $\repfact$, with $\repfact < |C|$, find a placement $\P \subset C$ with size $\repfact$, such that $\ff{\P}$ is lexico-minimum.
\end{problem}
A similar extension of \autoref{p:graph-multi-placement} is immediate. Note that since the failure number was defined with regards to the reachability relation, no changes are necessary to ensure that the concepts of failure number and failure aggregate remain well-defined.

We sketch an argument that \autoref{p:graph-sp} is NP-hard by reduction from \textsc{Dominating Set} \cite{Garey1979} as follows. The input to the \textsc{Dominating Set} problem is an undirected graph $G$, and an integer $k$. The question asked is: ``is there a subset $D$ of exactly $k$ vertices from $G$ such that every vertex not in $?D$ is adjacent to at least one vertex in $D$?" To answer this question, we can form an instance of \autoref{p:graph-sp} by constructing a bipartite graph $(S,T,E)$, where there is a vertex in $S$ and in $T$ for each vertex in $G$. For each edge $(u,v)$ \emph{not included in} $G$, we connect the representative of $u$ in $S$ with the representative of $v$ in $T$ by an edge directed from $S$ to $T$. Thus the neighborhood of each node $u$ in $S$ is the set of vertices which are \emph{not} adjacent to $u$ in $G$. The bipartite graph $(S,T,E)$ then forms a directed graph as required by the input to \autoref{p:graph-sp}. Let $T$ be the set of candidate nodes. Then, a dominating set of size $k$ exists in $G$ if and only if a placement of size $k$ can be made on the nodes of $T$ such that $\ff{\P} \lleq \langle 0, |\V|, ..., |\V| \rangle.$ The key to the reduction is the zero in the first entry of the vector, which counts the number of nodes with failure number $k$. In a placement $P$, if node $u$ in $S$ has failure number $k$, it indicates that every node in the placement (i.e. the purported dominating set) is adjacent to $u$ \emph{in the complement of $G$}. Equivalently, a node $u$ has a failure number $k$ only if it is is \emph{not} adjacent to any element of $P$. Thus, a placement on $T$ which has no nodes with failure number $k$ must be a dominating set in $G$. The remaining direction is easily shown.

This reduction shows that \autoref{p:graph-sp} remains NP-hard even when the input is restricted to \emph{bipartite} graphs. It additionally shows that \emph{even minimizing the first entry of the failure aggregate} is an intractable problem for bipartite graphs. One may wonder whether intractability can be circumvented by instead lexicominimizing a suffix of the failure aggregate. Unfortunately, this does not work. An easy reduction from \textsc{Independent Set} \cite{Garey1979} shows that it is also NP-hard to lexico-minimize even the last three entries of the failure aggregate.

Both \textsc{Independent Set} and \textsc{Dominating Set} are special cases of the more general \textsc{Set Cover} \cite{Garey1979} problem, for which an $\Omega(\log n)$ lower bound on the approximation ratio is well known \cite{Dinur2014}. While \textsc{Set Cover} does reduce to \autoref{p:graph-sp}, whether this lower-bound can be made to carry over remains an open question. Indeed, the very notion of what it means to \emph{approximate} a vector quantity \emph{in the lexicographic order} is an interesting one which, to the best of our knowledge, has yet to be explored in the complexity literature.

\section{Future Work}\label{s:future-work}

In this paper, we have described two algorithms for solving two replica placement problems in trees. The first problem considers how best to select a placement of size $\repfact$ on which to place replicas of a single block of data. For this problem, we present an $O(n + \repfact \log \repfact)$ algorithm. The second problem considers how to optimally select a multi-placement of $\m$ placements. For this problem, we present an algorithm which runs in polynomial time when the skew of the multi-placement is at most a fixed constant.

We first propose studying approximations to \autoref{p:graph-sp}, including notions of approximation for lexicographic optimization problems. In addition, we propose pursuing exact algorithms for restricted classes of bipartite graphs such as multi-trees, and bipartite graphs with adjacency matrices which have the consecutive ones property. Such graphs can also be used to model failure in data centers. We also plan to investigate several variants of weighted objectives for both the single-block and multi-placement problems which tie together our approach with that considered in \cite{Korupolu2016}. 

While the complexity of multi-placement optimization is currently unknown, we do know that the running time can be improved in special cases. For instance, when the skew is at most $1$, we can attain a running time of $O(\n\m^2\repfact^3)$ by exploiting some additional balancing properties.
Interesting directions for future work on multi-placements include establishing fixed-parameter tractability of the multi-placement problem and proving NP-hardness. It would also be interesting to explore whether the exponent can be brought down to some term which is $o(\delta^2)$. The matrix system described in \autoref{s:combo-signatures} is an important step towards this goal.

\section*{Acknowledgements}
The authors would like to thank Ian Cook, Conner Davis and Balaji Raghavachari for insightful conversations and comments on drafts.


\section*{References}
\bibliographystyle{elsarticle-num}      
\bibliography{bibliography}   

\appendix

\section{Proof of Correctness of the Greedy Algorithm for the Single Placement Problem}\label{app:greedy-proof}

To establish the correctness of the greedy algorithm, we first introduce some concepts and notation.  We now consider partial placements possibly containing fewer than $\repfact$ leaf nodes. 
Given a partial placement $P \subseteq C$ and a subset of nodes $S \subseteq V$, let $\ef(S,P) = \langle p_{0}, p_{1}, \ldots, p_{\repfact} \rangle$, where $p_i \defined | \{ x \in S \cap E \mid f(x,P) = \repfact - i \} |$. 
Note that $\ef$ differs from $\f$ in that $\ef$ gives the failure aggregate for a given subset of nodes, whereas $\f$ gives the aggregate for the entire set of nodes.
Thus, $\f(P)$ and $\ef(V,P)$ are the same vectors.  

Unless otherwise stated, in this section, we only consider vectors of size $\repfact + 1$.

Given a vector $\vec{a} = \langle a_0, a_1 \ldots, a_{\repfact} \rangle$, we denote the vector shifted one index to the left as $\reallywidehat{\vec{a}}$ and is given by
$\langle a_1, a_2, \ldots, a_{\repfact}, 0 \rangle$.
The following two propositions about $\reallywidehat{~~}$ operator are used to prove the correctness of the greedy algorithm.

\begin{proposition}
	\label{prop-vector-leftshift}
	Given a vector $\vec{a}$ with $\vec{a}[0] = 0$ (i.e., the leftmost entry of $\vec{a}$ is 0), we have:
	$$\vec{0} \lleq \vec{a} \iff \vec{a} \lleq \reallywidehat{\vec{a}}$$
\end{proposition}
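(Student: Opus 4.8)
The plan is to show that, under the hypothesis $a_0 = \vec{a}[0] = 0$, both inequalities in the claimed equivalence are controlled by a single quantity: the sign of the first nonzero entry of $\vec{a}$. First I would dispose of the degenerate case $\vec{a} = \vec{0}$ separately, where $\reallywidehat{\vec{a}} = \vec{0}$ as well, so that both $\vec{0} \lleq \vec{a}$ and $\vec{a} \lleq \reallywidehat{\vec{a}}$ hold as equalities and the equivalence is immediate.

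For the remaining case, I would let $k$ denote the least index with $a_k \neq 0$; since $a_0 = 0$ by hypothesis we have $k \geq 1$ and $a_0 = a_1 = \cdots = a_{k-1} = 0$. Comparing $\vec{0}$ against $\vec{a}$, the first index at which they differ is exactly $k$, where $\vec{0}$ has a $0$ and $\vec{a}$ has $a_k$; by the definition of $\llt$ this yields $\vec{0} \lleq \vec{a} \iff a_k > 0$. For the right-hand inequality I would write $\reallywidehat{\vec{a}} = \langle a_1, \ldots, a_\repfact, 0 \rangle$ and observe that, because $a_1 = \cdots = a_{k-1} = 0$, its entries at indices $0, \ldots, k-2$ vanish while its entry at index $k-1$ equals $a_k$. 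Since $\vec{a}$ has zeros at indices $0, \ldots, k-1$, the first index at which $\vec{a}$ and $\reallywidehat{\vec{a}}$ differ is $k-1$, where $\vec{a}$ has a $0$ and $\reallywidehat{\vec{a}}$ has $a_k$; hence $\vec{a} \lleq \reallywidehat{\vec{a}} \iff a_k > 0$ as well. Chaining the two equivalences through the common condition $a_k > 0$ proves the proposition.

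The only delicate point is the index bookkeeping forced by the shift: the decisive entry $a_k$ is detected at position $k$ in the comparison of $\vec{0}$ against $\vec{a}$, but at position $k-1$ in the comparison of $\vec{a}$ against $\reallywidehat{\vec{a}}$, because the hat operator slides every coordinate one place to the left. The key is that this positional shift does not change the \emph{sign} of the entry that decides each comparison, so both reduce to $a_k > 0$. I would take brief care to confirm there is no spurious earlier disagreement (all coordinates strictly before index $k-1$ agree and equal $0$), that the newly appended trailing $0$ is irrelevant because the decision always occurs at index $k-1 < \repfact$, and that the boundary case $k = 1$ (decision at index $0$) is subsumed by the same reasoning.
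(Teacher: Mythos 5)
Your proof is correct. There is in fact nothing in the paper to compare it against: Proposition \ref{prop-vector-leftshift} is stated without proof, treated as immediate, and used only as an auxiliary fact in establishing Lemma \ref{lem-path-ineq} in the appendix on the greedy algorithm. Your argument --- reducing both inequalities to the sign of the first nonzero entry $a_k$, with the hypothesis $\vec{a}[0] = 0$ guaranteeing $k \geq 1$ so that the shifted comparison is decided at the valid index $k-1$ --- is the natural way to fill this gap, and your index bookkeeping is accurate: the comparison of $\vec{0}$ against $\vec{a}$ is decided at position $k$, the comparison of $\vec{a}$ against $\reallywidehat{\vec{a}}$ at position $k-1$, all earlier coordinates agree and equal zero, and the trailing zero appended by the shift never decides anything since $k-1 < \repfact$. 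One strength worth making explicit: your proof nowhere assumes the entries are nonnegative, which matters because the paper applies this proposition to difference vectors of the form $\ef(r \rightsquigarrow v, P) - \ef(r \rightsquigarrow u, P)$ in the proof of Lemma \ref{lem-path-ineq}, and these can have negative entries; in that case $a_k < 0$ renders both sides of the equivalence false, exactly as your reduction handles.
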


\begin{proposition}
	\label{prop-leftshift-diff}
	Given vectors $\vec{a}$ and $\vec{b}$, we have:
	$$\reallywidehat{\vec{a}} - \reallywidehat{\vec{b}} = \widehat{\vec{a}-\vec{b}}$$
\end{proposition}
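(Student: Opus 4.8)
The plan is to prove this identity by a direct componentwise verification, relying on the single observation that the left-shift operator $\reallywidehat{\;\cdot\;}$ is \emph{linear}. Recall that for $\vec{a} = \langle a_0, a_1, \ldots, a_{\repfact} \rangle$ the operator discards the leading entry, shifts the remaining coordinates one position to the left, and appends a $0$ in the last position, yielding $\reallywidehat{\vec{a}} = \langle a_1, a_2, \ldots, a_{\repfact}, 0 \rangle$. The key point is that this transformation acts on each coordinate independently (coordinate $k$ of the output is either coordinate $k+1$ of the input, or $0$ in the final slot), so it commutes with coordinatewise subtraction.

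First I would fix arbitrary vectors $\vec{a} = \langle a_0, \ldots, a_{\repfact} \rangle$ and $\vec{b} = \langle b_0, \ldots, b_{\repfact}\rangle$ and simply expand both sides according to the definition of the operator. The left-hand side gives
\begin{align*}
\reallywidehat{\vec{a}} - \reallywidehat{\vec{b}}
&= \langle a_1, \ldots, a_{\repfact}, 0 \rangle - \langle b_1, \ldots, b_{\repfact}, 0 \rangle \\
&= \langle a_1 - b_1, \ldots, a_{\repfact} - b_{\repfact}, 0 \rangle,
\end{align*}
while on the right-hand side one first forms $\vec{a} - \vec{b} = \langle a_0 - b_0, a_1 - b_1, \ldots, a_{\repfact} - b_{\repfact}\rangle$ and then applies the shift, obtaining $\widehat{\vec{a}-\vec{b}} = \langle a_1 - b_1, \ldots, a_{\repfact} - b_{\repfact}, 0\rangle$. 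The two expressions are identical entry-by-entry, which establishes the claim.

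There is essentially no obstacle to overcome here: the statement is immediate once the operator is recognized as a linear map (equivalently, left-multiplication by the shift matrix with ones on the superdiagonal and zeros elsewhere), and linear maps distribute over differences. The only detail worth remarking on is that both constructions place a $0$ in the final coordinate, so the appended zeros agree trivially; the discarded leading entries $a_0$ and $b_0$ never participate in either output. Hence \autoref{prop-leftshift-diff} holds for all $\vec{a}$ and $\vec{b}$, which is exactly the form in which it will be invoked in the correctness argument for the greedy algorithm.
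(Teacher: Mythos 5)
Your componentwise verification is correct: both sides expand to $\langle a_1 - b_1, \ldots, a_{\repfact} - b_{\repfact}, 0\rangle$, and the observation that the shift operator is linear (so it commutes with subtraction) is exactly the right justification. The paper itself states this proposition without proof, treating it as immediate, so your argument supplies precisely the routine verification the paper implicitly relies on; there is nothing to add or correct.
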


We denote the set of nodes on a path from node $u$ to node $v$ as $u \rightsquigarrow v$, both $u$ and $v$ inclusive. 
The following three equations relate $\f$, $\ef$ and $\reallywidehat{\ef}$ vectors. In the equations, 
$r$ is the root node and $u$ is a leaf node. For \eqref{eq-g-add} and \eqref{eq-g-leftshift}, assume that $|P| < \repfact$.

\begin{align}
\label{eq-f-g}
\f(P) & = \ef(E,P) \\
\label{eq-g-add}
\ef(E, P \cup \{u\}) & = \ef(E, P) - \ef(r \rightsquigarrow u, P) + \ef(r \rightsquigarrow u, P \cup \{u\}) \\
\label{eq-g-leftshift}
\ef(r \rightsquigarrow x, P \cup \{ u \}) & = \reallywidehat{\ef}(r \rightsquigarrow u, P)
\end{align}

We first prove two lemmas that are used to prove the correctness of the greedy algorithm.

\begin{lemma}\label{lem-path-ineq}
	Given a partial placement $P \subseteq C$ with $|P| < \repfact$ and
	candidate nodes $u,v \in C - P$, we have:
	$$\ef(r\rightsquigarrow u, P) \lleq \ef(r\rightsquigarrow v, P) \iff \f(P \cup \{u\}) \lleq \f(P \cup \{v\})$$
\end{lemma}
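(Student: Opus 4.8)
The plan is to reduce both sides of the claimed equivalence to a single statement about the two path vectors $\vec a \defined \ef(r \rightsquigarrow u, P)$ and $\vec b \defined \ef(r \rightsquigarrow v, P)$, and then to finish using the linearly-ordered abelian group structure of $(\intvecs, +, \lgeq)$. First I would chain \eqref{eq-f-g}, \eqref{eq-g-add}, and \eqref{eq-g-leftshift} to rewrite each failure aggregate in terms of its path vector,
\[
\f(P \cup \{u\}) = \ef(E,P) - \vec a + \reallywidehat{\vec a}, \qquad \f(P \cup \{v\}) = \ef(E,P) - \vec b + \reallywidehat{\vec b},
\]
so that the two aggregates share the common summand $\ef(E,P)$.

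Next I would strip away everything the two sides have in common. By translation-invariance of $\lleq$, subtracting $\ef(E,P)$ shows that $\f(P \cup \{u\}) \lleq \f(P \cup \{v\})$ holds exactly when $\reallywidehat{\vec a} - \vec a \lleq \reallywidehat{\vec b} - \vec b$, which rearranges (again by translation-invariance) to $\reallywidehat{\vec a} - \reallywidehat{\vec b} \lleq \vec a - \vec b$. Writing $\vec x \defined \vec a - \vec b$ and invoking Proposition~\ref{prop-leftshift-diff} to collapse the left-hand side to $\reallywidehat{\vec x}$, the entire comparison reduces to $\reallywidehat{\vec x} \lleq \vec x$. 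It then suffices to establish the clean equivalence $\reallywidehat{\vec x} \lleq \vec x \iff \vec x \lleq \vec 0$, because $\vec x \lleq \vec 0$ is (by translation-invariance) literally the statement $\ef(r \rightsquigarrow u, P) \lleq \ef(r \rightsquigarrow v, P)$.

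This last equivalence is the crux, and it is where Proposition~\ref{prop-vector-leftshift} enters. That proposition requires a vanishing leading entry, so the key observation I must record first is that $\vec x[0] = 0$: since $|P| < \repfact$, every node has failure number at most $|P| < \repfact$, so no node on either root-to-leaf path attains failure number $\repfact$, forcing the zeroth coordinate of $\vec a$, of $\vec b$, and hence of $\vec x$, to vanish. I would then apply Proposition~\ref{prop-vector-leftshift} to $-\vec x$ (whose leading entry is likewise zero) to obtain $\vec 0 \lleq -\vec x \iff -\vec x \lleq \reallywidehat{-\vec x}$; using linearity of the shift ($\reallywidehat{-\vec x} = -\reallywidehat{\vec x}$, immediate from Proposition~\ref{prop-leftshift-diff} with $\vec a = \vec 0$) together with order-reversal under negation turns this into exactly $\vec x \lleq \vec 0 \iff \reallywidehat{\vec x} \lleq \vec x$. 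Chaining the equivalences proves the lemma. I expect the only real difficulty to be bookkeeping: tracking the direction of each $\lleq$ through the negation and shift steps, and remembering to invoke the $\vec x[0] = 0$ hypothesis (which rests on $|P| < \repfact$) before citing Proposition~\ref{prop-vector-leftshift}.
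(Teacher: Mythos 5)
Your proposal is correct and follows essentially the same route as the paper's proof: the same chain of equivalences via \eqref{eq-f-g}, \eqref{eq-g-add}, \eqref{eq-g-leftshift}, a reduction to a single difference vector via Proposition~\ref{prop-leftshift-diff}, and a final appeal to Proposition~\ref{prop-vector-leftshift} (your $-\vec{x}$ is exactly the paper's $\vec{h}_{u,v} = \ef(r \rightsquigarrow v, P) - \ef(r \rightsquigarrow u, P)$, so the two arguments differ only by a sign convention). If anything, yours is slightly more careful, since you explicitly verify the hypothesis that the leading entry vanishes (using $|P| < \repfact$) before invoking Proposition~\ref{prop-vector-leftshift}, a check the paper leaves implicit.
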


\begin{proof}
	\begin{align*}
	\f(P \cup \{u\}) & \lleq \f(P \cup \{v\}) \\
	& \iff \{ \text{using \eqref{eq-f-g}} \} \\
	\ef(E, P \cup \{u\}) & \lleq \ef(E, P \cup\{v\}) \\
	& \iff \{ \text{using \eqref{eq-g-add}} \} \\
	\left(\begin{array}{@{}rcl@{}}
	\ef(E,P) & - & \ef(r \rightsquigarrow u, P) \\ & + & \ef(r \rightsquigarrow u, P \cup \{u\}) 
	\end{array}
	\right)
	& \lleq 
	\left(\begin{array}{@{}rcl@{}}
	\ef(E,P) & - & \ef(r \rightsquigarrow v, P) \\
	& + & \ef(r \rightsquigarrow y, P \cup \{v\})
	\end{array}
	\right) \\
	& \iff \{ \text{rearranging terms} \} \\
	\ef(r \rightsquigarrow v, P) - \ef(r \rightsquigarrow u, P) 
	& \lleq 
	\ef(r \rightsquigarrow v, P \cup \{v\}) - \ef(r \rightsquigarrow u, P \cup \{u\}) \\
	& \iff \{ \text{using \eqref{eq-g-leftshift}} \} \\
	\ef(r \rightsquigarrow v, P) - \ef(r \rightsquigarrow u, P) 
	& \lleq
	\reallywidehat{\ef}(r \rightsquigarrow v, P) - \reallywidehat{\ef}(r \rightsquigarrow u, P) \\
	& \iff \{ \text{using Proposition \ref{prop-leftshift-diff}} \} \\
	\ef(r \rightsquigarrow v, P) - \ef(r \rightsquigarrow u, P) 
	& \lleq
	\reallywidehat{\ef(r \rightsquigarrow v, P) - \ef(r \rightsquigarrow u, P)} \\
	& \iff \{ \text{let } \vec{h_{u,v}}(P) = \ef(r \rightsquigarrow v, P) - \ef(r \rightsquigarrow u, P) \}\\
	\vec{h_{u,v}}(P) & \lleq \widehat{\vec{h_{u,u}}}(P) \\
	& \iff \{ \text{using Proposition \ref{prop-vector-leftshift}} \} \\
	\vec{0} & \lleq \vec{h_{u,v}}(P)  \\
	& \iff \{ \text{using definition of } \vec{h_{u,v}} \} \\
	\vec{0} & \lleq \ef(r \rightsquigarrow v, P) - \ef(r \rightsquigarrow u, P) \\
	& \iff \{ \text{rearranging terms} \} \\
	\ef(r \rightsquigarrow u, P) & \lleq \ef(r \rightsquigarrow v, P)
	\end{align*}
	This establishes the lemma.
	\qed
\end{proof}


\begin{lemma}\label{lem-technical}
	Consider a partial placement $P \subseteq C$ and two distinct candidate nodes $u,v \in C - P$. Let $a_{u,v}$ be the least common ancestor of $u$ and $v$, and 
	let $a_u$ be the child of $a_{u,v}$ on the path from $a_{u,v}$ to $u$ (see Fig. \ref{fig:thm-1}). Consider a subset of candidate nodes $S \subseteq C$ such that 
	\begin{inparaenum}[(i)]
		\item $|P \cup S| < \repfact$,
		\item $\Csub{a_u} \cap S = \emptyset$ and 
		\item $v \notin S$.
	\end{inparaenum}
	Then,
	$$\ef(r\rightsquigarrow u, P) \lleq \ef(r\rightsquigarrow v, P) \implies \f(P \cup S \cup \{u\})  \lleq \f(P \cup S \cup \{v\})$$ 
\end{lemma}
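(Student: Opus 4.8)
The plan is to reduce to the $S=\emptyset$ case already handled by \autoref{lem-path-ineq}, and then transport the hypothesis across the addition of $S$ by inserting its leaves one at a time. First I would note that the hypotheses force $u,v\notin S$: since $u\in\Csub{a_u}$ while $\Csub{a_u}\cap S=\emptyset$ we get $u\notin S$, and $v\notin S$ is assumption (iii). Together with (i) this gives $u,v\in C-(P\cup S)$ and $|P\cup S|<\repfact$, so \autoref{lem-path-ineq} applies verbatim to the placement $P\cup S$. Hence it suffices to prove the purely path-local implication
\[
\ef(r \rightsquigarrow u, P) \lleq \ef(r \rightsquigarrow v, P) \implies \ef(r \rightsquigarrow u, P \cup S) \lleq \ef(r \rightsquigarrow v, P \cup S).
\]

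To analyze the two path-aggregates I would split each path at the least common ancestor. Because the failure number $\f(x,Q)$ of a node $x$ does not depend on which leaf's path it is counted along, the shared prefix $r\rightsquigarrow a_{u,v}$ contributes identically to both aggregates and cancels in the difference $\vec{h}(Q):=\ef(r\rightsquigarrow v,Q)-\ef(r\rightsquigarrow u,Q)$, leaving $\vec{h}(Q)=\ef(a_v\rightsquigarrow v,Q)-\ef(a_u\rightsquigarrow u,Q)$, where $a_v$ is the child of $a_{u,v}$ on the path toward $v$. The hypothesis reads $\vec{0}\lleq\vec{h}(P)$ and the goal is $\vec{0}\lleq\vec{h}(P\cup S)$; I would in fact prove the stronger monotonicity statement $\vec{h}(P)\lleq\vec{h}(P\cup S)$ by adding the leaves of $S$ one at a time and showing $\vec{h}$ never decreases.

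The key technical fact, which I expect to be the main obstacle, is that raising failure numbers can only make a restricted aggregate lexicographically larger: if $Q\subseteq Q'$ then $\ef(T,Q)\lleq\ef(T,Q')$ for any node set $T$. I would establish this via cumulative sums, observing that $\sum_{i\le j}\ef(T,Q)_i$ equals the number of nodes of $T$ with failure number at least $\repfact-j$, a quantity monotone in $Q$; at the first index where the two aggregates differ the cumulative sum for $Q'$ is strictly larger, which forces $\ef(T,Q)\lleq\ef(T,Q')$. Granting this, each added leaf $w\in S$ behaves well: because $w\notin\Csub{a_u}$ (assumption (ii)), no node of $a_u\rightsquigarrow u$ can reach $w$, so $\ef(a_u\rightsquigarrow u,\cdot)$ is unchanged, whereas $\ef(a_v\rightsquigarrow v,\cdot)$ can only grow. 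By the translation-invariance of the lexicographic order on $\intvecs$ (subtracting the frozen $u$-branch aggregate), $\vec{h}$ does not decrease at any step, so $\vec{0}\lleq\vec{h}(P)\lleq\vec{h}(P\cup S)$. Applying \autoref{lem-path-ineq} to $P\cup S$ then yields $\f(P\cup S\cup\{u\})\lleq\f(P\cup S\cup\{v\})$, completing the argument. The delicate points to get right are that assumption (ii) is exactly what freezes the $u$-branch while permitting arbitrary growth on the $v$-branch, and that no step ever inserts $u$ or $v$, so \autoref{lem-path-ineq} remains applicable at the end.
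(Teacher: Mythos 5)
Your proposal is correct and takes essentially the same route as the paper's proof: both reduce to Lemma~\ref{lem-path-ineq} applied to the placement $P \cup S$, and both derive the transported inequality $\ef(r\rightsquigarrow u, P\cup S) \lleq \ef(r\rightsquigarrow v, P\cup S)$ from assumption (ii), which freezes the failure numbers along the $u$-branch while the rest of the tree may only grow. The difference is purely one of rigor: where the paper asserts this implication in a single sentence, you supply the supporting details --- cancellation of the common prefix $r \rightsquigarrow a_{u,v}$ via translation-invariance, the monotonicity fact that $Q \subseteq Q'$ implies $\ef(T,Q) \lleq \ef(T,Q')$ (correctly proved through cumulative sums), and the leaf-by-leaf insertion of $S$ --- all of which are sound.
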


\begin{proof}
	Since $\Csub{a_u} \cap S = \emptyset$, any increase in failure numbers due to the addition of replicas in $S$ to $P$ cannot effect nodes on the path 
	$a_u \rightsquigarrow v$. Thus we have that:
	\begin{equation}\label{eqn-greedy-lemma-1}
	\ef(r \rightsquigarrow u, P) \lleq \ef(r \rightsquigarrow v, P) \implies \ef(r \rightsquigarrow u, P \cup S) \lleq \ef(r \rightsquigarrow v, P \cup S)
	\end{equation}
	Applying Lemma \ref{lem-path-ineq} using $P \cup S$ as the placement, we have:
	\begin{align}
	\label{eqn-greedy-lemma-2}
	\begin{array}{@{}c@{}}
	\ef(r \rightsquigarrow u, P \cup S) \lleq \ef(r \rightsquigarrow v, P \cup S) \\ \iff  \\ \f(P \cup S \cup \{u\})  \lleq \f(P \cup S \cup \{v\})
	\end{array}
	\end{align}
	Combining \eqref{eqn-greedy-lemma-1} and \eqref{eqn-greedy-lemma-2}, we obtain the lemma.
	\qed
\end{proof}

\begin{figure}[tb]
	\centering
	\begin{tikzpicture}[scale=0.5, transform shape]
	\tikzstyle{every node}=[minimum size=1cm, align=center, font=\LARGE]
	\tikzset{>=latex}
	\tikzset{snake it/.style={decorate, decoration=snake},
		postaction={decoration={markings,mark=at position 1 with {\arrow{>}}},decorate}}
	
	\pgfmathsetmacro{\xoffset}{1.5}
	\pgfmathsetmacro{\yoffset}{1.5}
	
	\node (auv)[circle, draw=black] at    (0, 0) {$a$};
	\node  (av)[circle, draw=black]    at (\xoffset,      -\yoffset) {$a_v$};
	\node  (au)[circle, draw=black]    at (-\xoffset,     -\yoffset) {$a_u$};
	\node   (u)[rectangle, draw=black] at (-2*\xoffset, -3*\yoffset) {$u$};
	\node   (v)[rectangle, draw=black] at (2*\xoffset,  -3*\yoffset) {$v$};

	\draw[->]       (auv) -- (av);
	\draw[-<]       (auv) -- (au);
	\draw[snake it] (au) -- (u);
	\draw[snake it] (av) -- (v);
	\draw[snake it] (1,1) -- (auv);
	\draw[->]       (v) -> (u) node [midway, fill=white] {swap};
	\end{tikzpicture}
	
	\caption{Named nodes used in Theorem \ref{thm-greedy}. The arrow labeled ``swap'' illustrates the leaf nodes between which replicas are moved, and is not an edge of the graph.}\label{fig:thm-1}
\end{figure}

We are now in a position to prove the correctness of the greedy algorithm.

\begin{theorem}\label{thm-greedy}
	Let $P_i$ be the partial placement from step $i$ of the greedy algorithm. Then there exists an optimal placement $P^\ast$, with $|P^\ast| = \repfact$ such that $P_i \subseteq P^\ast$.
\end{theorem}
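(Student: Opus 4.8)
The plan is to prove the statement by induction on the step index $i$, using a single-leaf exchange argument powered by the two lemmas. For the base case $i=0$ we have $P_0 = \emptyset$, which is trivially contained in any optimal placement; such a placement exists because $C$ is finite and $\repfact < |C|$. For the inductive step, I would assume $P_i \subseteq P^\ast$ for some optimal $P^\ast$ with $|P^\ast| = \repfact$, and let $u$ be the leaf that greedy adds at step $i+1$, so $P_{i+1} = P_i \cup \{u\}$ and $u$ lexico-minimizes $\f(P_i \cup \{\cdot\})$ over $C \setminus P_i$. If $u \in P^\ast$, then $P_{i+1} \subseteq P^\ast$ and we are done with the same $P^\ast$. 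The substantive case is $u \notin P^\ast$, where I would build a new optimal $Q^\ast \supseteq P_{i+1}$ by swapping a single leaf $v$ of $P^\ast$ for $u$ and showing the swap does not hurt the objective.

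The crux of the argument — and the step I expect to be the main obstacle — is choosing the leaf $v \in P^\ast \setminus P_i$ to remove so that the hypotheses of \autoref{lem-technical} are satisfied with $P = P_i$ and $S = P^\ast \setminus (P_i \cup \{v\})$. Conditions (i) $|P_i \cup S| = \repfact - 1 < \repfact$ and (iii) $v \notin S$ hold by construction, and $u \neq v$ since $u \notin P^\ast$ while $v \in P^\ast$. The delicate requirement is (ii), $\Csub{a_u} \cap S = \emptyset$, where $a_{u,v}$ is the least common ancestor of $u$ and $v$ and $a_u$ is its child on the path toward $u$. To guarantee this, I would select $v$ by walking up from $u$: letting $w_0 = u, w_1, w_2, \ldots$ be the successive ancestors of $u$, take $w_k$ to be the \emph{deepest} ancestor whose subtree $\Csub{w_k}$ contains a leaf of $P^\ast \setminus P_i$ (this $k$ exists since $|P^\ast \setminus P_i| = \repfact - i \geq 1$), and pick $v$ to be any leaf of $P^\ast \setminus P_i$ lying in $\Csub{w_k}$ but not in $\Csub{w_{k-1}}$. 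Then $a_{u,v} = w_k$ and $a_u = w_{k-1}$, and by minimality of $k$ the subtree $\Csub{a_u} = \Csub{w_{k-1}}$ contains no leaf of $P^\ast \setminus P_i$; since $S \subseteq P^\ast \setminus P_i$, condition (ii) follows.

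It then remains to discharge the remaining hypothesis of \autoref{lem-technical}, namely $\ef(r \rightsquigarrow u, P_i) \lleq \ef(r \rightsquigarrow v, P_i)$. By \autoref{lem-path-ineq} this is equivalent to $\f(P_i \cup \{u\}) \lleq \f(P_i \cup \{v\})$, which holds precisely because $u$ is greedy's lexico-minimizing choice over $C \setminus P_i$ and $v \in C \setminus P_i$. Applying \autoref{lem-technical} yields $\f(P_i \cup S \cup \{u\}) \lleq \f(P_i \cup S \cup \{v\})$. Since $P_i \subseteq P^\ast$ and $v \in P^\ast$, we have $P_i \cup S \cup \{v\} = P^\ast$ and $P_i \cup S \cup \{u\} = (P^\ast \setminus \{v\}) \cup \{u\} =: Q^\ast$, so $\f(Q^\ast) \lleq \f(P^\ast)$. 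Because $P^\ast$ is optimal this must be an equality, making $Q^\ast$ an optimal placement of size $\repfact$ with $P_{i+1} = P_i \cup \{u\} \subseteq Q^\ast$. This closes the induction and establishes the theorem.
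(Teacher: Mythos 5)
Your proposal is correct and follows essentially the same route as the paper's proof: induction on $i$, with the non-trivial case handled by swapping a single leaf $v \in P^\ast \setminus P_i$ for the greedy choice $u$, where $v$ is chosen to have the deepest least common ancestor with $u$ (your ``walk up from $u$'' construction is exactly this choice, spelled out more explicitly), and then applying \autoref{lem-path-ineq} and \autoref{lem-technical} with $S = P^\ast \setminus (P_i \cup \{v\})$ to conclude $\f(Q^\ast) \lleq \f(P^\ast)$. Your verification of hypothesis (ii) of \autoref{lem-technical} via minimality of $k$ is a welcome elaboration of the paper's one-line justification, but the argument is the same.
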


\begin{proof}
	The proof proceeds by induction on $i$. $P_0 = \emptyset$ is clearly a subset of any optimal solution. Given $P_i \subseteq P^\ast$ for some optimal solution $P^\ast$, we must show that there is an optimal solution $Q^\ast$ for which $P_{i+1} \subseteq Q^\ast$. Clearly, if $P_{i+1} \subseteq P^\ast$, then we are done, 
	since $P^\ast$ is optimal. In the case where $P_{i+1} \not\subseteq P^\ast$ we must exhibit some optimal solution $Q^\ast$ for which $P_{i+1} \subseteq Q^\ast$. Let $u$ be the leaf which was added to $P_i$ to form $P_{i+1}$. Let $v$ be the leaf in $P^\ast - P_{i+1}$ which has the greatest-depth least common ancestor with $u$, where the depth of a node is given by its distance from the root (see Fig. \ref{fig:thm-1}). 
	We set $Q^\ast = (P^\ast - \{v\}) \cup \{u\}$, and claim that $\f(Q^\ast) \lleq \f(P^\ast)$. Since $\f(P^\ast)$ is optimal, and $P_{i+1} \subseteq Q^\ast$ this will complete our proof.
	
	By our greedy choice of $u$,  $\f(P_i \cup \{u\}) \lleq \f(P_i \cup \{v\})$. Using Lemma \ref{lem-path-ineq}, it follows that:
	\begin{align} \label{eqn-crux-thm2}
	\ef(r \rightsquigarrow u, P_i) \lleq \ef(r \rightsquigarrow v, P_i)
	\end{align}
	
	Note that $u,v \notin (P^\ast - P_i - \{v\})$. Moreover, by our choice of $v$, we have that $\Csub{a_u} \cap (P^\ast - P_i - \{v\}) = \emptyset$, since the only nodes from $P^\ast$ in $\Csub{a_u}$ must also be in $P_i$. To complete the proof, we use \eqref{eqn-crux-thm2} and apply Lemma \ref{lem-technical}, setting \mbox{$S = P^\ast - P_i - \{v\}$}. This choice of $S$ is made so as to yield the following equalities:
	$$Q^\ast = (P^\ast - \{v\}) \cup \{u\} = P_i \cup (P^\ast - P_i - \{v\}) \cup \{u\},$$ and
	$$P^\ast = P_i \cup (P^\ast - P_i - \{v\}) \cup \{v\}. $$
	
	By Lemma \ref{lem-technical}, we obtain the following:
	$$\f(Q^\ast) = \f(P_i \cup (P^\ast - P_i - \{v\}) \cup \{u\}) \lleq \f(P_i \cup (P^\ast - P_i - \{v\}) \cup \{v\}) = \f(P^\ast).$$
	This establishes the theorem.
	\qed 
\end{proof}

\section{Proof of Correctness of the Labeling Algorithm in the Divide Phase}\label{app:labeling-proof}

Given a subset of children $X$, let $\min(X)$ (respectively, $\max(X)$) 
denote the smallest (respectively, largest) capacity of any child in $X$. In case $X 
= \emptyset$, we set $\min(X) = \max(X) = 0$. 
Also, let $\sigma(X)$ denote the sum of the capacities of all children in $X$. 
Note that the labeling algorithm only needs to be run if the sum of the capacities of all the children is strictly greater than the number of replicas that need to be placed on the children. This implies that at least one child will be labeled as unfilled.

Assume that the while loop is executed $I$ times. Let $F^{(i)}$, $U^{(i)}$, $M^{(i)}$ denote the values of $F$, $U$ and $M$ at the end of iteration $i$ of the while loop with $F^{(0)} = \emptyset$, $U^{(0)} = \emptyset$ and $M^{(0)} = \{ 1, \ldots, t \}$. 
We first show the following lemma.

\begin{lemma}
	\label{lem:while:invariants}
	For each $i$ with $0 \leq i \leq I$, we have:
	\begin{align}
	\label{eq:while:lower}
	\max(F^{(i)}) \cdot (|U^{(i)}| + |M^{(i)}|) \; \leq \; & r - \sigma(F^{(i)}) \\
	\label{eq:while:upper}
	& r - \sigma(F^{(i)} \cup M^{(i)}) \; < \; \min(U^{(i)}) \cdot |U^{(i)}| 
	\end{align}
\end{lemma}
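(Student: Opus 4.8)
The plan is to prove \eqref{eq:while:lower} and \eqref{eq:while:upper} together by induction on $i$, after first pinning down two bookkeeping invariants that connect the loop variables to the quantities in the statement. The first is that at the top of iteration $i$ one has $s = r - \sigma(F^{(i-1)})$; I would check this by a short induction, using $x = s - \sigma(M_\ell)$ together with the update $s \gets x - \sum_{\child{i}\in M_e}|\Csub{\child{i}}|$ in the branch at line \ref{l:case2}, and observing that $s$ and $\filled$ are untouched in the branch at line \ref{l:case1}. The second is a \emph{sandwich} invariant: if the children are listed in nondecreasing order of capacity, then $\filled$ is always a prefix and $\unfilled$ a suffix, with the undetermined set $M$ in between, i.e. $\max(F^{(i)}) \le \min(M^{(i)})$ and $\max(M^{(i)}) \le \min(U^{(i)})$ for the nonempty sets. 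Preservation is verified branch by branch: line \ref{l:case1} moves the high block $M_e \cup M_g$ into $\unfilled$ and retains the low block $M_\ell$ in $M$, while line \ref{l:case2} moves the low block $M_\ell \cup M_e$ into $\filled$ and retains $M_g$, each respecting the order. The consequences I actually use are $\min(U^{(i-1)}) \ge \max(M^{(i-1)}) \ge med$ and $\max(F^{(i-1)}) \le med$ at every iteration.

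A point worth isolating at the outset is that $med$ is the capacity of a genuine child (it is produced by median-of-medians selection), so $med$ is an integer and $M_e \neq \emptyset$. This integrality is what upgrades ``capacity $< med$'' to ``capacity $\le med - 1$'', and it is genuinely needed for the terminating case below; a fractional averaged median would not suffice.

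The base case $i=0$ is immediate: \eqref{eq:while:lower} reads $0 \le r$, and \eqref{eq:while:upper} reads $r - \sigma(M^{(0)}) < 0$, which is precisely the standing assumption (noted in the preamble) that the total capacity strictly exceeds $r$. For the inductive step I split on the branch taken in iteration $i$, and the structure is pleasantly self-dual: in each of the first two cases one inequality is proved by a direct estimate while the other is inherited verbatim from the inductive hypothesis. In the branch at line \ref{l:case1}, $\filled$ and $s$ are unchanged and $|U^{(i)}| + |M^{(i)}| = |U^{(i-1)}| + |M^{(i-1)}|$, so \eqref{eq:while:lower} is exactly the inductive hypothesis; for \eqref{eq:while:upper}, its left-hand side equals $x$ (by $s = r - \sigma(F)$ and $\sigma(F^{(i)}\cup M^{(i)}) = \sigma(F) + \sigma(M_\ell)$), the guard gives $x < (med-1)|U^{(i)}|$, and $\min(U^{(i)}) \ge med - 1$ by the sandwich invariant. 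The branch at line \ref{l:case2} is dual: here $\filled \cup M$ and $\unfilled$ are unchanged, so \eqref{eq:while:upper} is inherited, while for \eqref{eq:while:lower} I bound $\max(F^{(i)}) \le med$, cancel using $\sigma(M_e) = med\,|M_e|$, and rearrange the guard $x \ge med(|U^{(i-1)}| + |M_e| + |M_g|)$ into $med(|U^{(i)}| + |M^{(i)}|) \le x - \sigma(M_e) = r - \sigma(F^{(i)})$, which dominates the left-hand side.

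The terminating \texttt{else} branch, where $M^{(I)} = \emptyset$, is the only one in which both inequalities must be argued directly. There \eqref{eq:while:lower} follows from the (failed) lower guard $x \ge (med-1)|U^{(I)}|$ together with $\max(F^{(I)}) \le med - 1$ --- this is exactly where integrality of $med$ is used --- and \eqref{eq:while:upper} follows from the (failed) upper guard $x < med\,|U^{(I)}|$ together with $\min(U^{(I)}) \ge med$. I expect the main obstacle to be not any single inequality but setting up and maintaining the two invariants cleanly, in particular handling the empty-set conventions for $\min$, $\max$, and $\sigma$; once the sandwich invariant and the identity $s = r - \sigma(F)$ are in place, each case collapses to a single comparison plus an appeal to the inductive hypothesis.
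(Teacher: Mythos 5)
Your proof follows the same route as the paper's: induction on $i$ with a case split on the three branches of the while loop, using the guards at lines \ref{l:case1} and \ref{l:case2}, and inheriting one of the two inequalities verbatim in each of the first two branches. The two bookkeeping invariants you isolate ($s = r - \sigma(F^{(i-1)})$ at the top of each iteration, and the prefix/middle/suffix ordering of $F$, $M$, $U$ by capacity) are used silently by the paper --- e.g.\ its Case~1 step $(\med^{(j)}-1)\cdot|U^{(j+1)}| < \min(U^{(j+1)})\cdot|U^{(j+1)}|$ and its Case~2 step equating $\med^{(j)}$ with the maximum capacity in $F$ are unjustified without exactly these facts --- so making them explicit, along with the observation that integrality of $\med$ (i.e.\ $\med$ being the capacity of an actual child, so that $M_e \neq \emptyset$) is load-bearing in the terminating branch, is a genuine improvement in rigor over the paper's own write-up.

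There is, however, one gap as written. Your sandwich invariant is non-strict, $\max(F^{(i)}) \le \min(M^{(i)})$, and the consequence you record is only $\max(F^{(i-1)}) \le \med$. In the terminating else branch you need $\max(F^{(I)}) \le \med - 1$ to pair with the failed lower guard $x \ge (\med-1)\,|U^{(I)}|$. Integrality handles the elements newly arriving from $M_{\ell}$ (capacity $< \med$ becomes $\le \med - 1$), but it does nothing for the elements already in $F^{(I-1)}$: when $M_{\ell} = \emptyset$, your stated invariant permits $\max(F^{(I-1)}) = \med$, and then \eqref{eq:while:lower} does not follow. The repair is to maintain the $F$-versus-$M$ half of the sandwich \emph{strictly}: $\max(F^{(i)}) < \min(M^{(i)})$ whenever $M^{(i)} \neq \emptyset$. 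This strict form is exactly what your own branch-by-branch preservation check produces --- in the branch at line \ref{l:case2} the capacities moved into $F$ are $\le \med$ while those retained in $M = M_g$ are $> \med$ --- and it closes the gap: $\max(F^{(I-1)}) < \min(M^{(I-1)}) \le \med$ (the latter since $M_e \neq \emptyset$), so by integrality $\max(F^{(I-1)}) \le \med - 1$, hence $\max(F^{(I)}) \le \med - 1$ as required.
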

\begin{proof}
	The proof is by induction on $i$. Clearly, \eqref{eq:while:lower} and \eqref{eq:while:upper} hold for $i=0$. Assuming, by induction hypothesis, 
	that \eqref{eq:while:lower} and \eqref{eq:while:upper} hold for $i = j$ for some $j$ in $[0,I)$. We now prove that \eqref{eq:while:lower} and \eqref{eq:while:upper} also hold for $i = j+1$. We use $M_{l}^{(i)}$, $M_{e}^{(i)}$, $M_{g}^{(i)}$ and $\med^{(i)}$ to denote the values computed for the variables $M_{l}$, $M_{e}$, $M_{g}$ and $\med$, respectively, during the iteration $i$. There are three cases depending on the value of $\frac{r - \sigma(F^{(j)} \cup M_{l}^{(j)})}{|U^{(j)}| + |M_{e}^{(j)}| + |M_{g}^{(j)}|}$.
	
	\begin{enumerate}[{Case} 1]
		
		\item \{$\left(\frac{r - \sigma(F^{(j)} \cup M_{l}^{(j)})}{|U^{(j)}| + |M_{e}^{(j)}| + |M_{g}^{(j)}|} < \med^{(j)} - 1 \right)$\}: In this case, $F^{(j+1)} = F^{(j)}$, $U^{(j+1)} = U^{(j)} \cup M_{e}^{(j)} \cup M_{g}^{(j)}$ and $M^{(j+1)} = M_{l}^{(j)}$. We consider the two inequalities separately.
		
		\begin{enumerate}
			\setlength{\itemindent}{10pt}
			
			\item To show that \eqref{eq:while:lower} holds for $i = j+1$:
			\begin{align*}
			r - \sigma(F^{(j+1)})  
			& = r - \sigma(F^{(j)})  \\
			& \geq \max(F^{(j)}) \cdot (|U^{(j)}| + |M^{(j)}|) \\
			& = \max(F^{(j+1)}) \cdot (|U^{(j)}| + |M^{(j)}|) \\
			& = \max(F^{(j+1)}) \cdot (|U^{(j+1)}| + |M^{(j+1)}|)
			\end{align*}
			
			\item To show that \eqref{eq:while:upper} holds for $i = j+1$:
			\begin{align*}
			r - \sigma(F^{(j+1)} \cup M^{(j+1)}) 
			& = r - \sigma(F^{(j)} \cup M_{l}^{(j)}) \\
			& < (\med^{(j)} - 1) \cdot (|U^{(j)}| + |M_{e}^{(j)}| + |M_{g}^{(j)}|) \\
			& = (\med^{(j)} - 1) \cdot |U^{(j+1)}| \\
			& < \min(U^{(j+1)}) \cdot |U^{(j+1)}|
			\end{align*}
			
		\end{enumerate}
		
		\item \{$\left(\frac{r - \sigma(F^{(j)} \cup M_{l}^{(j)})}{|U^{(j)}| + |M_{e}^{(j)}| + |M_{g}^{(j)}|} \geq \med^{(j)}\right)$\}:  In this case, $F^{(j+1)} = F^{(j)}  \cup M_{l}^{(j)} \cup M_{e}^{(j)}$, $U^{(j+1)} = U^{(j)}$ and $M^{(j+1)} = M_{g}^{(j)}$. We consider the two inequalities separately.

		\begin{enumerate}
			\setlength{\itemindent}{10pt}
			
			\item To show that \eqref{eq:while:lower} holds for $i=j+1$:
			\begin{align*}
			r - \sigma(F^{(j+1)})  
			& = r - \sigma(F^{(j)} \cup M_{l}^{(j)} \cup M_{e}^{(j)})  \\
			& = r - \sigma(F^{(j)} \cup M_{l}^{(j)}) - \med^{(j)} \cdot |M_{e}^{(j)}|  \\
			& \geq \med^{(j)} \cdot (|U^{(j)}| + |M_{e}^{(j)}| + |M_{g}^{(j)}|) - \med^{(j)} \cdot |M_{e}^{(j)}| \\
			& =  \med^{(j)} \cdot (|U^{(j)}| + |M_{g}^{(j)}|) \\
			& = \max(F^{(j)})  \cdot (|U^{(j)}| + |M_{g}^{(j)}|) \\
			& =  \max(F^{(j)})  \cdot (|U^{(j+1)}| + |M^{(j+1)}|)
			\end{align*}
			
			\item To show that \eqref{eq:while:upper} holds for $i=j+1$:
			\begin{align*}
			r - \sigma(F^{(j+1)} \cup M^{(j+1)}) 
			& = r - \sigma(F^{(j)} \cup M_{l}^{(j)} \cup M_{e}^{(j)} \cup M_{g}^{(j)}) \\
			& = r - \sigma(F^{(j)} \cup M^{(j)}) \\
			& < \min(U^{(j)}) \cdot |U^{(j)}| \\
			& = \min(U^{(j+1)}) \cdot |U^{(j+1)}|
			\end{align*}

		\end{enumerate}
		
		\item \{$\left( \med^{(j)}  - 1 \leq \frac{r - \sigma(F^{(j)} \cup M_{l}^{(j)})}{|U^{(j)}| + |M_{e}^{(j)}| + |M_{g}^{(j)}|} < \med^{(j)} \right)$\}: In this case,  $F^{(j+1)} = F^{(j)}  \cup M_{l}^{(j)}$, $U^{(j+1)} = U^{(j)} \cup M_{e}^{(j)} \cup M_{g}^{(j)}$ and $M^{(j+1)} = \emptyset$. We consider the two inequalities separately.

		\begin{enumerate}
			\setlength{\itemindent}{10pt}
			
			\item To show that \eqref{eq:while:lower} holds for $i=j+1$:
			\begin{align*}
			r - \sigma(F^{(j+1)})  
			& = r - \sigma(F^{(j)} \cup M_{l}^{(j)})  \\
			& \geq (\med^{(j)} - 1) \cdot (|U^{(j)}| + |M_{e}^{(j)}| + |M_{g}^{(j)}|)  \\
			& =  (\med^{(j)} - 1 ) \cdot |U^{(j+1)}|  \\
			& \geq \max(F^{(j+1)}) \cdot  |U^{(j+1)}|
			\end{align*}

			\item To show that \eqref{eq:while:upper} holds for $i=j+1$:
			\begin{align*}
			r - \sigma(F^{(j+1)} \cup M^{(j+1)}) 
			& = r - \sigma(F^{(j)} \cup M_{l}^{(j)}) \\
			& < \med^{(j)} \cdot (|U^{(j)}| +  |M_{e}^{(j)}| + |M_{g}^{(j)}|) \\
			& = \med^{(j)} \cdot |U^{(j+1)}| \\
			& = \min(U^{(j+1)}) \cdot |U^{(j+1)}|
			\end{align*}
			
		\end{enumerate}
		
	\end{enumerate}
	
	\noindent
	In all three cases, \eqref{eq:while:lower} and \eqref{eq:while:upper} hold for $i=j+1$. Thus, by induction, 
	\eqref{eq:while:lower} and \eqref{eq:while:upper} hold for each $i$ with $0 \leq i \leq I$.
	\qed
\end{proof}

At the end of the while loop (iteration $I$), $M^{(I)} = \emptyset$ and, by our assumption, $U^{(I)} \neq \emptyset$. The two inequalities can be combined to yield the following result.

\begin{corollary} The output of the labeling algorithm satisfies the following condition:
	\label{cor:label:condition}
	\begin{align}
	\label{eq:label:condition}
	\max(F^{(I)})  \; \leq \; \frac{r - \sigma(F^{(I)})}{|U^{(I)}|} \; < \; \min(U^{(I)}) 
	\end{align}
\end{corollary}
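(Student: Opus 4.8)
The plan is to obtain the corollary directly by instantiating \autoref{lem:while:invariants} at the final iteration $i = I$ and then simplifying using two structural facts about the terminal state of the while loop: that $M^{(I)} = \emptyset$ and that $U^{(I)} \neq \emptyset$. The first fact is immediate from the loop guard, since the loop runs precisely until $M$ becomes empty. The second is guaranteed by the standing assumption recorded just before the lemma, namely that the sum of the children's capacities strictly exceeds $r$, which forces at least one child to be labeled \punfill.

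First I would set $i = I$ in \eqref{eq:while:lower}. Because $M^{(I)} = \emptyset$ we have $|M^{(I)}| = 0$, so the left-hand side collapses, yielding
\[
\max(F^{(I)}) \cdot |U^{(I)}| \; \leq \; r - \sigma(F^{(I)}).
\]
Next I would set $i = I$ in \eqref{eq:while:upper}. Because $M^{(I)} = \emptyset$ we have $\sigma(F^{(I)} \cup M^{(I)}) = \sigma(F^{(I)})$, which gives
\[
r - \sigma(F^{(I)}) \; < \; \min(U^{(I)}) \cdot |U^{(I)}|.
\]

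Finally, since $U^{(I)} \neq \emptyset$ ensures $|U^{(I)}| > 0$, I would divide both displayed inequalities through by $|U^{(I)}|$ and chain them to conclude
\[
\max(F^{(I)}) \; \leq \; \frac{r - \sigma(F^{(I)})}{|U^{(I)}|} \; < \; \min(U^{(I)}),
\]
which is exactly \eqref{eq:label:condition}.

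The argument is essentially a substitution followed by a single division, so there is no analytic obstacle. The only points requiring care are the two structural facts about the terminal state: that the loop halts with $M^{(I)} = \emptyset$, so that the $|M^{(I)}|$ and $\sigma(\cdot \cup M^{(I)})$ terms vanish, and that $U^{(I)}$ is nonempty, so that division by $|U^{(I)}|$ is legitimate. Both follow from the loop structure together with the precondition stated before the lemma, so once \autoref{lem:while:invariants} is available the corollary is immediate.
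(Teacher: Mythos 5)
Your proposal is correct and matches the paper's own argument: the paper likewise instantiates Lemma~\ref{lem:while:invariants} at $i=I$, uses $M^{(I)} = \emptyset$ (from the loop termination) and $U^{(I)} \neq \emptyset$ (from the assumption that the children's total capacity strictly exceeds $r$), and combines the two inequalities after dividing by $|U^{(I)}|$. Your write-up simply makes explicit the substitution steps that the paper compresses into a single sentence.
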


Using Corollary~\ref{cor:label:condition}, we can establish that any balanced placement must satisfy the following two properties:

\begin{enumerate}
	
	\item Every child in $F^{(I)}$ is filled (to capacity). Otherwise, the placement will have at least one unfilled child with at most $\max(F^{(I)}) - 1$ replicas and one child with at least $\lceil \frac{r - \sigma(F^{(I)})}{|U^{(I)}|} \rceil$ replicas. The latter is at least $\max(F^{(I)}) + 1$ implying that the placement is not balanced. 
	
	\item No child in $U^{(I)}$ is assigned more than $\lceil \frac{r - \sigma(F^{(I)})}{|U^{(I)}|} \rceil$ replicas.
	Otherwise, the placement will have at least one child with at least $\lceil \frac{r - \sigma(F^{(I)})}{|U^{(I)}|} \rceil + 1$ replicas and one unfilled child 
	with at most $\lceil \frac{r - \sigma(F^{(I)})}{|U^{(I)}|} \rceil - 1$ implying that the placement is not balanced. 
	
\end{enumerate}

Note that \eqref{eq:label:condition} implies that every child in $U^{(I)}$ has enough capacity to store the replicas assigned to it by our placement algorithm.  
This shows that our labeling algorithm correctly computes filled and unfilled children at a node.

\section{Pseudocode for Transform Phase}\label{app:transform-pseudo}

\begin{algorithm}[tb]
	\SetKwFunction{transf}{Transform}
	\SetKwFunction{mkPseudo}{Make-Pseudonode}
	\SetKwFunction{fill}{Filled}
	\SetKwProg{Fn}{Function}{begin}{end}
	
	\small
	
	\Fn{\transf{$u, chain, s$}}{
		Let $u$ have children labeled $1,...,t$\;
		\If(\tcp*[f]{\footnotesize not chain node}){$u$ has two or more \punfill children}{ \label{line-bottom}
			\ForEach{child $i$ \punfill}{
				\label{line-pass-on}$(-, -, -, x) \gets $\transf{$i, false, \bot$} \;
				\lIf(\tcp*[f]{\footnotesize replace $i$ with pseudonode}){$i \neq x$}{ $i \gets x$ \label{line-update}}
			}
			\lIf{$chain = false$} { \Return{$(\bot, \bot, \bot, u)$} \label{line-vt}} 
			\lElse(\tcp*[f]{\footnotesize last node of chain}){ \Return{$(\vec{0}_{s + 1}, \vec{0}_{s+1},  \vec{0}_{s+1}, u)$} \label{line-alloc} } 
		}
		\If(\tcp*[f]{\footnotesize chain node}){$u$ has one \punfill child, $v$}{ \label{line-test2}
			\uIf(\tcp*[f]{\footnotesize first node of chain}){$chain = false$} { 
				$(\vec{a},\vec{b},\vec{f},x) \gets$ \transf{$v, true, \minRepOn{v}$} \tcp*[r]{\footnotesize pass down $\minRepOn{v}$}\label{line-mark-ru}
			}\lElse{	$(\vec{a},\vec{b},\vec{f},x) \gets$ \transf{$v, true, s$} \label{line-mark-rv1}}
			\ForEach(\tcp*[f]{\footnotesize $O(n_i)$ time}){filled child $i$} { \label{line-filledend} \label{line-contribstart} \fill{$i, \vec{f}$} }
			$k \gets \sum_{i \text{ filled}} \Csub{i} + \minRepOn{v} - 1$ \label{line-ustart} \tcp*[r]{\footnotesize $k$ is min failure number of $u$}
			$\vec{a}[k+1] \gets \vec{a}[k+1] + 1$; \hspace{1em} 	$\vec{b}[k] \gets \vec{b}[k] + 1$ \tcp*[r]{\footnotesize update $\vec{a}$ and $\vec{b}$}\label{line-contribend}
			\If{$chain = false$}{ \label{line-chainback-to-start}	$x \gets $ \mkPseudo{$\vec{a}, \vec{b}, \vec{f}, x$} \label{line-pseudonodecreate}
			}
			\Return{$(\vec{a},\vec{b},\vec{f},x)$}\;\label{line-return}
		}
	}
	\BlankLine
	
	\small
	
	\Fn{\fill{$u$, $\vec{f}$}}{
		
		\If{$u$ is a leaf}{
			$\vec{f}[0] \gets \vec{f}[0]  + 1$\;
			\Return\;
		}{ 
			\ForEach{child $i$}{
				\fill{$i, \vec{f}$}
			}
			$a \gets \sum_i |\Csub{i}|$ \;
			$\vec{f}[a] \gets \vec{f}[a] + 1$\;
			\Return \;
		}
	}
	
	\BlankLine
	
	\small
	
	\Fn{\mkPseudo{$\vec{a}$, $\vec{b}$, $\vec{f}$, $x$}}{
		allocate a new node $node$\;
		$node.\vec{a} \gets \vec{a} + \vec{f}$\;
		$node.\vec{b} \gets \vec{b} + \vec{f}$\;
		$node.child \gets x$\;
		\Return $node$\;
	}
	
	\caption{Transform phase}\label{a:transform}
\end{algorithm}

Pseudocode for the transform phase is given in Algorithm \ref{a:transform}. For convenience, define $S_w \defined T_1 \cup \ldots \cup T_{k-1} \cup \{v_1, \ldots, v_{k-1}\}$, and let the contribution of nodes in $S_w$ to $\vec{a_w}$ and $\vec{b_w}$ be given by vectors $\vec{a}$ and $\vec{b}$ respectively. The transform phase is started at the root of the tree by invoking \transf{$root,false, \repfact$}. \transf is a modified recursive breadth-first search, which returns a $4$-tuple $(\vec{a}, \vec{b}, \vec{f}, x)$, where $x$ is the node $v_t$ which ends the degenerate chain. As the recursion proceeds down the tree, each node is tested to see if it is part of a degenerate chain (lines \ref{line-bottom} and \ref{line-test2}). If a node is not part of a degenerate chain, the call continues on all \punfill children (line \ref{line-pass-on}). The first node ($v_1$) in a degenerate chain is marked (by passing down $chain \gets true$ at lines \ref{line-mark-ru} and \ref{line-mark-rv1}). Once the bottom of the chain (node $v_k$) has been reached, the algorithm allocates memory for three vectors, $\vec{a}, \vec{b}$ and $\vec{f}$, each of size $s+1$ (line \ref{line-alloc}). The value of $\minRepOn{v_1}$ is passed down to the bottom of the chain at lines \ref{line-mark-ru} and \ref{line-mark-rv1}. These vectors are then passed up through the entire degenerate chain (cf. lines \ref{line-alloc} and \ref{line-return}), along with node $u$ (at line \ref{line-alloc}), whose use will be explained later. When a node $u$ in a degenerate chain receives $\vec{a}, \vec{b}$, and $\vec{f}$, $u$ adds its contribution to each vector (lines \ref{line-contribstart}-\ref{line-contribend}). The contribution of node $u$ consists of two parts. First, the contribution of the \dfill nodes is added to $\vec{f}$ by invoking a special \fill subroutine which computes the sum of the failure aggregates of each \dfill child of $u$ (line \ref{line-filledend}). Note that \fill uses pass-by-reference semantics when passing in the value of $\vec{f}$. Then, the contribution of node $u$ itself is added, by summing the number of leaves in all of the \dfill children, and the number of replicas on the single \punfill child, $v$ (lines \ref{line-ustart}-\ref{line-contribend}). By the time that the recursion reaches the start of the chain on the way back up, all nodes have added their contribution, and the pseudonode is created and returned (line \ref{line-pseudonodecreate}).

\end{document}